\documentclass[journal]{IEEEtran}

\usepackage[T1]{fontenc}
\usepackage{amsmath,amssymb,amsfonts}
\usepackage{amsthm}
\usepackage{mathtools}
\usepackage{bm}
\usepackage{graphicx}
\usepackage[caption=false,font=footnotesize]{subfig}
\usepackage{cite}
\usepackage{url}
\usepackage{booktabs}
\usepackage{array}
\usepackage{textcomp}
\usepackage{xcolor}
\usepackage{balance}
\usepackage[ruled,vlined,linesnumbered]{algorithm2e}

\newtheorem{theorem}{Theorem}
\newtheorem{lemma}{Lemma}
\newtheorem{proposition}{Proposition}
\newtheorem{corollary}{Corollary}

\theoremstyle{definition}
\newtheorem{remark}{Remark}
\newtheorem{example}{Example}

\begin{document}

\title{Rate-Distortion Theory of Gaussian Mixtures via Posterior Label Sampling}

\author{Weichen Wang, \textit{Student Member, IEEE}, Deniz Gündüz, \textit{Fellow, IEEE}%
\thanks{This work received funding from the UKRI for the projects INFORMED-AI (EP/Y028732/1),  AI-R (ERC Consolidator Grant, EP/X030806/1) and NSF-EPSRC Neural Joint Source-Channel Coding (UKRI1463).}%
}

\maketitle

\begin{abstract}
We study the mean-squared error (MSE) rate-distortion function (RDF) of finite Gaussian mixture models (GMMs) in \(\mathbb{R}^d\), for which neither the RDF nor an optimal test channel is known in closed form. It can be sandwiched between the genie-aided conditional RDF $R_{\mathbf{X} | T}(D)$, which can be characterized in closed-form for a GMM,  and $R_{\mathbf{X} | T}(D)+I(\mathbf{X};T)$, where $T$ denotes the latent component label. We show that the upper bound can be achieved operationally by the proposed Gibbs-relabelling method (GRM), which conveys a label from the posterior $P_{T|\mathbf{X}}$ through \textit{channel simulation} at rate $I(\mathbf{X};T)$, followed by vector quantization of the resulting exactly Gaussian conditional sources by reverse water-filling. Using the Shannon lower bound (SLB), we prove that GRM attains the SLB, and hence the RDF, of a general finite mixture if and only if every component attains its own SLB. For a GMM this holds throughout the fully active interval \(D/d<\min_{k,m}\lambda_{k,m}\), where
\(\lambda_{k,m}\) are the eigenvalues of the component covariances, and for scalar GMMs the condition is also necessary. 

Since GRM cannot operate below the label cost $I(\mathbf{X};T)$, we propose a complementary stochastic scalar quantization (SSQ) scheme for low rates, which simulates an $M$-ary reconstruction index. We show that at low rates SSQ with $M\geq 2$ matches the Shannon RDF to first order:
\(R^{ }(D_{\max}-\delta)=\delta/(2\tau^2)+o(\delta)\), where $\tau^2$ is a  directional sub-Gaussian proxy of the mixture, and quantify the price of a finite codebook through a strict rate gap at high rates and a distortion floor whose joint scaling in $M$ and number of mixture components is characterized. 
Numerical results illustrate both regimes and the finite-blocklength cost of channel simulation.
\end{abstract}

\begin{IEEEkeywords}
Gaussian mixture models, rate-distortion theory, Shannon lower bound, channel simulation, stochastic scalar quantization.
\end{IEEEkeywords}

\section{Introduction}
\label{sec:intro}

Finite Gaussian mixture models (GMMs) are among the most widely used models for multimodal continuous-amplitude signals.
They are used in speech, audio, and image compression
\cite{HedelinSkoglund2000,SubramaniamRao2003,Cheng2020GMM}, in
channel-state-information acquisition and feedback for
multiple-input multiple-output systems
\cite{Wen2015GMM,Turan2023GMMFeedback}, and in the modeling of multimodal
learned latent representations
\cite{Dilokthanakul2016GMVAE,Jiang2017VaDE,Zong2018DAGMM}.
In each of these applications, the mixture signal is ultimately quantized for storage or digital transmission, which raises the basic rate-distortion questions for GMMs: Can we characterize the mean-squared error (MSE) rate-distortion function (RDF) of a finite GMM, and which codes attain it? This paper gives explicit, if partial, answers to both questions for GMMs in $\mathbb{R}^d$, and when the arguments permit, for general finite mixtures.

Answers to both questions are well-known for standard Gaussian sources.
Shannon's source coding theorem characterizes the RDF as an infimum of mutual information \cite{Shannon1959}. Under MSE the infimum is given by reverse
water-filling over the eigenvalues of the covariance matrix
\cite{Kolmogorov1956,Berger1971,CoverThomas}, and it is achieved operationally by
a Karhunen-Lo\`eve transform followed by independent coding of the transform coefficients \cite{HuangSchultheiss1963}.
Although a GMM is Gaussian conditionally on its component label, applying these constructions component-wise is non-trivial, because the component label $T$ is a latent variable of the source $\mathbf{X}\in \mathbb{R}^d$, rather than an observation available to the encoder. Neither the differential entropy nor the RDF, $R(D)$, is known in closed-form for a GMM source. Both can be approximated numerically by the Blahut-Arimoto (BA) algorithm \cite{Blahut1972,Arimoto1972} by discretizing the source over an ever finer grid, or by the annealing approach of Rose \cite{Rose1994}, but a numerical value does not say anything about the code that attains it.

The role of the component label can be made more precise through the \textit{conditional RDF}, $R_{\mathbf{X} | T}(D)$, of source $\mathbf{X}$ when the label $T$ is available to both encoder and decoder. It can be shown that \cite{Gray1972CRD,Gray1973CRD} 
\begin{equation}\label{eq:intro}
    R_{\mathbf{X} | T}(D)\leq R(D) \leq R_{\mathbf{X} | T}(D)+I(\mathbf{X};T).
\end{equation}
For a GMM, the conditional RDF is explicit: given the label, each component is coded by reverse water-filling under a common water level. The gap between the two sides of \eqref{eq:intro}, $I(\mathbf{X};T)\leq H(T)$, is the cost of not observing T.

Classified and mixture-based vector quantizers in the literature all essentially implement this upper bound, but with the label first estimated and then conveyed losslessly, at a cost of $H(T)$, rather than $I(\mathbf{X};T)\leq H(T)$.
Classified vector quantization (VQ) \cite{RamamurthiGersho1986}, built on the composite-source
model and on the Linde-Buzo-Gray algorithm \cite{LindeBuzoGray1980},
assigns each block to a class by a deterministic classifier and applies a
class-specific codebook.
GMM-based VQ \cite{HedelinSkoglund2000,SubramaniamRao2003} replaces the
classifier by a fitted mixture and designs parametric codebooks from the
component densities.
Weighted universal transform coding
\cite{EffrosChouGray1994,Effros1999WUTC} transmits a class index obtained
by an encoder-side rate-distortion search and then applies a
class-matched Karhunen-Lo\`eve transform with a per-class water level.
PrismQuant \cite{PrismQuant2026} uses a hard maximum-a-posteriori (MAP) label with a single global water level,
and shows that its excess rate over the genie-aided conditional
RDF is of order \(H(T)/d\) in the vector dimension $d$. In each of these schemes, the label is a deterministic function of $\mathbf{X}$, and is conveyed at rate $H(T)$ rather than $I(\mathbf{X};T)$, and none of them is known to achieve $R(D)$ at any finite dimension.

Our main contribution is the observation that the upper bound in \eqref{eq:intro} can be achieved exactly, with the label communicated at rate $I(\mathbf{X};T)$, provided it is not estimated at the encoder from $\mathbf{X}$, but rather \emph{sampled} given $\mathbf{X}$. Let us explain what we mean by this: Suppose the encoder draws a sample $\widetilde{T} \sim P_{T|\mathbf{X}}$ and the decoder has the same sample. By Bayes' rule, the pair $(\widetilde{T},\mathbf{X})$ has the same joint law as the pair $(T,\mathbf{X})$; hence, conditioned on $\widetilde{T}=k$, the source is exactly $\mathcal{N}(\boldsymbol{\mu}_k,\boldsymbol{\Sigma}_k)$, and the component water-filling code applies as is. Now, communicating a sample from a prescribed distribution is known as the problem of channel simulation, or reverse channel coding. By the \textit{reverse Shannon theorem} \cite{1035117}, with unlimited common randomness independent of $\mathbf{X}$, this costs $I(\mathbf{X};T)$ bits per letter asymptotically. The Poisson functional representation \cite{LiElGamal2018} is an explicit construction at $I(\mathbf{X};T)+O(\log(I(\mathbf{X};T)+1))$ bits for a single letter; hence at $I(\mathbf{X};T)+O(\log n/n)$ bits per letter on a block of $n$ letters \cite{FlamichGunduzREC}. Cuff \cite{Cuff2013} characterized the asymptotic trade-off when common randomness is rate-limited. For a finite mixture, we only need the exact simulation of a finite kernel. We call the resulting scheme the Gibbs-relabeling method (GRM), since the label is resampled from its posterior, as in a Gibbs sampler, instead of being assigned by a classifier as in prior methods. 

Even though the optimal RDF is achievable by fully deterministic codes, one can also simulate the optimal test channel directly. Under output-distribution or realism constraints this even becomes necessary \cite{saldi2017asymptotic, 9998081, 11558509}. The novelty of GRM is what we simulate: instead of simulating the test channel, we simulate the posterior label distribution, while the continuous part of the test channel is realized by a Gaussian code. This is important from a practical implementation point of view as the complexity of channel simulation grows exponentially in the Rényi $\infty$-divergence between the conditional and the marginal output laws for rejection-type samplers \cite{10619339}, and there is no polynomial algorithm in the divergence unless $RP=NP$ \cite{agustsson2020universally}; therefore, direct simulation of the test channel becomes intractable as $D \rightarrow 0$. The label, on the other hand, can be simulated with limited complexity, requires common randomness of rate at most $H(T|\mathbf{X})\leq \log K$, and does not depend on $D$, so a single label-simulation code can serve every distortion level and rate control can be applied solely through the water level of the Gaussian stage. Since the conditional source is Gaussian, standard tools for Gaussian sources (transform coding, reverse water-filling and high-resolution quantization theory) apply directly.

We employ the Shannon lower bound (SLB) \cite{Shannon1959,Berger1971} to determine when the upper bound in \eqref{eq:intro} is tight, and hence, when GRM achieves the RDF.
Under MSE the SLB asserts
\(R^{ }(D)\ge h(\mathbf{X})-\tfrac{d}{2}\log(2\pi e\,D/d)\), and holds with equality
if and only if $\mathbf{X}$ is the sum of an $\mathcal{N}(0,(D/d)I)$ vector and an independent random vector \cite{Berger1971}.
A Gaussian vector therefore meets the SLB exactly when its reverse water-filling is fully active; that is, when $D/d$ is below the smallest eigenvalue of its covariance. A scalar finite GMM meets it when
\(D<\min_k\sigma_k^2\), since each component, hence the mixture, is an \(\mathcal{N}(0,D)\)-convolution. 
For general (finite mixture) sources, the SLB is known to be tight only asymptotically as \(D\downarrow0\) \cite{Linkov1965,BiniaZakaiZiv1974,LinderZamir1994}.

We then characterize exactly when GRM meets the mixture SLB, and hence, the RDF for a GMM. For a
general finite mixture, this occurs at distortion \(D\) if and only if
every component RDF meets its own, unclipped SLB at the same distortion
\(D\). For a GMM, this yields
$
R^{ }(D)
=
R_{\mathrm{GRM}}(D)
$
throughout the fully active interval
\(D/d<\min_{k,m}\lambda_{k,m}\), where $\lambda_{k,m}$ denotes the $m$-th eigenvalue of the covariance matrix of the $k$-th component. Thus GRM is exactly Shannon-optimal on a
nonzero distortion interval at a finite source dimension. This conclusion does not extend from GMMs to arbitrary finite mixtures. This is because
the mixture SLB may be tight even though none of the component SLBs is
tight, in which case GRM does not attain the mixture SLB.
Note that GRM cannot operate at rates below the label sampling cost of $I(\mathbf{X};T)$, whereas $R(D)\rightarrow 0$ as $D \rightarrow D_{\max}$. For the low-rate region, 
we introduce stochastic
scalar quantization (SSQ), which simulates a reconstruction
index $J$ of cardinality at most \(M\), and lets the decoder output codeword $\mathbf{a}_J$ from a predesigned codebook. SSQ operates at the
alphabet-constrained information rate \(R_M^{ }\)
\cite{FinamorePearlman1980,PearlmanChekima1984}, which converges to $R(D)$, as $M \rightarrow \infty$. 
The deterministic counterpart of SSQ is the well-known Entropy-constrained VQ \cite{ChouLookabaughGray1989}, whose high-resolution behavior is governed by the
theory of Gish and Pierce \cite{GishPierce1968}, Gersho
\cite{Gersho1979}, and Zador \cite{Zador1982} (see also the survey
\cite{GrayNeuhoff1998}).

We analyze SSQ at the other end of the RDF, near $D_{\max}$, where we obtain the first-order expansion of $R(D)$, and show that a two-codeword codebook attains it.
We further study the limitations of SSQ, showing its suboptimality in the low-distortion region, where GRM is optimal, and
identifying a positive
nearest-neighbor distortion floor for every fixed \(M\).

The main contributions of this paper are as follows.

\begin{itemize}

\item We propose the Gibbs-relabeling method (GRM), which samples a posterior mixture label by channel simulation and then applies a component-conditional source code. The scheme attains the upper bound in \eqref{eq:intro} at label cost \(I(\mathbf{X};T)\) for every finite mixture.

\item We characterize exactly when GRM meets the mixture SLB, and hence the RDF. For a general finite mixture this occurs at distortion \(D\) if and only if every component RDF meets its own, unclipped SLB at the same \(D\). For a GMM the criterion yields
\(R^{ }(D)=R_{\mathrm{GRM}}(D)\) throughout the fully active interval
\(D/d<\min_{k,m}\lambda_{k,m}\).

\item We propose stochastic scalar quantization (SSQ) for the complementary low-rate region, as GRM cannot operate below \(I(\mathbf{X};T)\). SSQ samples a reconstruction index \(J\) of cardinality at most \(M\) by channel simulation, and the decoder emits a codeword \(\mathbf{a}_J\) from a finite codebook. Near \(D_{\max}\) we obtain the first-order expansion
\(R^{ }(D_{\max}-\delta)=\delta/(2\tau^2)+o(\delta)\)
and show that every SSQ codebook with \(M\ge 2\) attains the same coefficient.

\end{itemize}

The remainder of the paper is organized as follows.
Section~\ref{sec:problem} states the source coding problem and reviews channel simulation. Section~\ref{sec:grm} develops GRM, first for scalar GMMs
and then for general finite mixtures and vector GMMs, and establishes the
component-wise SLB criterion and its fully active GMM corollary.
Section~\ref{sec:lowrate} develops SSQ, the zero-rate expansion, and the
limitations imposed by a fixed reconstruction alphabet.
Section~\ref{sec:numerics} presents numerical illustrations, and
Section~\ref{sec:discussion} concludes. 

\color{black}
\section{Problem Formulation}
\label{sec:problem}

\subsection{Source and Distortion Measure}

Let \(\{\mathbf{X}_i\}_{i=1}^{\infty}\) be an independent and identically distributed (i.i.d.) source \(\mathbf{X}\in\mathbb{R}^d\), \(d\ge1\), where each element is distributed as a finite mixture
\begin{equation}
\label{eq:mix_gen}
    f_{\mathbf{X}}
    =
    \sum_{k=1}^{K}\pi_k f_k,
    \qquad
    P(T=k)=\pi_k,
    \qquad
    \mathbf{X}\mid T=k\sim f_k,
\end{equation}
with \(K<\infty\), \(\pi_k>0\), and \(\sum_{k=1}^{K}\pi_k=1\).
The posterior is
\begin{equation}
    P(T=k\mid\mathbf{X}=\mathbf{x})
    =
    \frac{
    \pi_k f_k(\mathbf{x})
    }{
    \sum_{i=1}^{K}
    \pi_i f_i(\mathbf{x})
    }.
\end{equation}
We define
\begin{equation}
    \boldsymbol{\mu}
    :=
    \mathbb{E}[\mathbf{X}]
    =
    \sum_{k=1}^{K}\pi_k\mathbb{E}[\mathbf{X}\mid T=k],
    \qquad
    \boldsymbol{\Sigma}_{\mathbf{X}}
    :=
    \operatorname{Cov}(\mathbf{X}),
\end{equation}
and
\begin{equation}
    D_{\max}
    :=
    \mathbb{E}\|\mathbf{X}-\boldsymbol{\mu}\|^2
    =
    \operatorname{tr}(\boldsymbol{\Sigma}_{\mathbf{X}}).
\end{equation}
The Gaussian special case of \eqref{eq:mix_gen} is the finite Gaussian mixture
\begin{equation}
\label{eq:gmm_density}
    f_{\mathbf{X}}(\mathbf{x})
    =
    \sum_{k=1}^{K}
    \pi_k\,
    \mathcal{N}(\mathbf{x};\boldsymbol{\mu}_k,\boldsymbol{\Sigma}_k),
\end{equation}
with \(\boldsymbol{\Sigma}_k\succ0\).
Write \(\lambda_{k,1},\ldots,\lambda_{k,d}\) for the eigenvalues of
\(\boldsymbol{\Sigma}_k\).
For a symmetric matrix \(\mathbf{A}\) we write
\(\lambda_{\min}(\mathbf{A})\) and \(\lambda_{\max}(\mathbf{A})\) for its
smallest and largest eigenvalues, so that
\(\min_{k,m}\lambda_{k,m}=\min_k\lambda_{\min}(\boldsymbol{\Sigma}_k)\).
Equivalently, there is a latent class \(T\in\{1,\ldots,K\}\) such that
\begin{equation}
\label{eq:gmm_gen}
    P(T=k)=\pi_k,
    \qquad
    \mathbf{X}\mid T=k
    \sim
    \mathcal{N}(\boldsymbol{\mu}_k,\boldsymbol{\Sigma}_k).
\end{equation}
Since each \(\boldsymbol{\Sigma}_k\succ0\), the component densities have full
support on \(\mathbb{R}^d\) and \(H(T\mid\mathbf{X})>0\), hence the class is
latent, rather than a deterministic function of \(\mathbf{X}\).
The posterior is
\begin{equation}
\label{eq:gmm_post}
    P(T=k\mid\mathbf{X}=\mathbf{x})
    =
    \frac{
    \pi_k\mathcal{N}(\mathbf{x};\boldsymbol{\mu}_k,\boldsymbol{\Sigma}_k)
    }{
    \sum_{i=1}^{K}
    \pi_i\mathcal{N}(\mathbf{x};\boldsymbol{\mu}_i,\boldsymbol{\Sigma}_i)
    }.
\end{equation}
We assume \(\mathbb{E}\|\mathbf{X}\|^2<\infty\), equivalently that each
\(f_k\) has finite second moment, so that \(D_{\max}<\infty\).
This holds automatically for GMMs.
Unless otherwise stated the source is the Gaussian mixture
\eqref{eq:gmm_density}--\eqref{eq:gmm_gen}.

The source and reproduction alphabets are
\(\mathcal{X}=\widehat{\mathcal{X}}=\mathbb{R}^d\).
Following Shannon and Berger
\cite{Shannon1959,Berger1971},
fidelity is measured by a single-letter nonnegative distortion
\(\delta:\mathcal{X}\times\widehat{\mathcal{X}}\to[0,\infty)\).
Throughout we consider squared error,
\begin{equation}
\label{eq:d_mse}
    \delta(\mathbf{x},\widehat{\mathbf{x}})
    =
    \|\mathbf{x}-\widehat{\mathbf{x}}\|^2,
\end{equation}
extended additively to \(n\)-blocks by
\begin{equation}
\label{eq:d_block}
    \delta_n(\mathbf{x}^n,\widehat{\mathbf{x}}^n)
    =
    \frac{1}{n}
    \sum_{i=1}^{n}
    \|\mathbf{x}_i-\widehat{\mathbf{x}}_i\|^2.
\end{equation}
Thus \(D_{\max}\) is the distortion incurred by reproducing each source sample by
the mean vector \(\boldsymbol{\mu}\).
All rates are in nats.
The scalar case \(d=1\) is written \(X\in\mathbb{R}\), with component
variances \(\sigma_k^2\) and rate-distortion function \(R(D)\).

\subsection{Block Codes and the Rate-Distortion Function}

Let \(\mathbf{X}^n=(\mathbf{X}_1,\ldots,\mathbf{X}_n)\) denote a source
block. An \((n,R)\) code is a pair of measurable maps
\begin{equation}
\label{eq:block_code}
    f_n:\mathcal{X}^n\to\bigl\{1,\ldots,\lceil e^{nR}\rceil\bigr\},
    \qquad
    g_n:\bigl\{1,\ldots,\lceil e^{nR}\rceil\bigr\}\to\widehat{\mathcal{X}}^n.
\end{equation}
The encoder describes \(\mathbf{X}^n\) by the index \(f_n(\mathbf{X}^n)\);
the decoder produces the reconstruction
\(\widehat{\mathbf{X}}^n:=g_n\bigl(f_n(\mathbf{X}^n)\bigr)\).
The expected distortion of the code is
\begin{equation}
    D_n
    :=
    \mathbb{E}\,
    \delta_n\bigl(\mathbf{X}^n,\widehat{\mathbf{X}}^n\bigr).
\end{equation}
A pair \((R,D)\) with \(R\ge0\) and \(D\ge0\) is \emph{achievable} if there
exists a sequence of \((n,R)\) codes with
\(\limsup_{n\to\infty} D_n\le D\).
The rate-distortion function is the infimum of rates at which distortion
\(D\) is achievable,
\begin{equation}
    R^{ }(D)
    :=
    \inf\bigl\{R\ge0:(R,D)\ \text{is achievable}\bigr\},
\end{equation}
with the convention that the infimum over the empty set is \(+\infty\).
The theorems below concern the infinite-blocklength regime
\(n\to\infty\); Section~\ref{sec:numerics} also reports variable-length
finite-blocklength results for channel simulation and entropy coding.

By Shannon's lossy source coding theorem
\cite{Shannon1959,Berger1971,CoverThomas},
for a memoryless source and a single-letter
distortion measure, the RDF can be characterized in the following single-letter form:
\begin{equation}
\label{eq:rdf}
    R^{ }(D)
    =
    \inf_{
    P_{\widehat{\mathbf{X}}\mid\mathbf{X}}:\,
    \mathbb{E}\|\mathbf{X}-\widehat{\mathbf{X}}\|^2\le D
    }
    I(\mathbf{X};\widehat{\mathbf{X}}).
\end{equation}
While squared error is unbounded, the identification still holds as
\(\mathbb{E}\|\mathbf{X}\|^2<\infty\), equivalently \(D_{\max}<\infty\).
In particular, \(R^{ }(D)=0\) for \(D\ge D_{\max}\).
Under squared error, \(R^{ }\) is convex, strictly decreasing on
\((0,D_{\max})\), and continuous
\cite{Berger1971,CoverThomas}.
It is therefore enough to exhibit test channels
\(P_{\widehat{\mathbf{X}}\mid\mathbf{X}}\) attaining \eqref{eq:rdf}.

Under squared error the SLB
\cite{Shannon1959,Berger1971} states that
\begin{equation}
\label{eq:slb}
    R^{ }(D)
    \ge
    \mathrm{SLB}(D)
    :=
    h(\mathbf{X})
    -
    \tfrac{d}{2}
    \log\!\bigl(2\pi e\,D/d\bigr).
\end{equation}
Equality holds at a given \(D\in(0,D_{\max})\) if and only if there exists a
reproduction \(\widehat{\mathbf{X}}\) of distortion \(D\) such that the
error \(\mathbf{X}-\widehat{\mathbf{X}}\) is independent of
\(\widehat{\mathbf{X}}\) and isotropic Gaussian
\cite{Berger1971,CoverThomas}.

\subsection{Channel Simulation}
\label{sec:rec}

The RDF \eqref{eq:rdf} is unchanged if the encoder
and decoder are permitted to share a random variable \(U\) independent of
the source: for every realization $U=u$ we have an ordinary $(n,R)$ code, so $R\geq R^{ }(D_u)$, and $R\geq \mathbb{E}[R^{ }(D_u)]\geq R^{ }(\mathbb{E}[D_u])$ by the convexity of the $R^{ }(D)$.
In the constructions below, we do not limit the common randomness, although in practice a common random seed is usually sufficient.

Let \(P_{J\mid\mathbf{X}}\) be a conditional distribution from $\mathbb{R}^d$ to a finite set $J$. A channel simulation code (also called a reverse channel code) for \(P_{J\mid\mathbf{X}}\) consists of common randomness \(U\) independent of
\(\mathbf{X}\) and maps
\begin{equation}
\label{eq:stoch_code}
    \mathrm{enc}_U:\mathcal{X}\to\{0,1\}^\ast,
    \qquad
    \mathrm{dec}_U:\{0,1\}^\ast\to\mathcal{J}
\end{equation}
that are deterministic given \(U\) and such that for each $u$ the set $\{ \mathrm{enc}_u(\mathbf{x})\}$ is prefix-free and
\begin{equation}
\label{eq:stoch_sim}
    \mathrm{dec}_U\bigl(\mathrm{enc}_U(\mathbf{x})\bigr)
    \sim
    P_{J\mid\mathbf{X}=\mathbf{x}}
\end{equation}
for almost every \(\mathbf{x}\).
The simulation is exact; that is, the induced joint law of $(\mathbf{X},J)$ is $P_\mathbf{X} \cdot P_{J|\mathbf{X}}$. 
Write \(L\) for the length of
\(\mathrm{enc}_U(\mathbf{X})\), measured in nats.
Such codes exist for every discrete kernel, with \cite{LiElGamal2018,FlamichGunduzREC}
\begin{equation}
\label{eq:rec_oneshot}
    I(\mathbf{X};J)
    \le
    \mathbb{E}[L]
    \le
    I(\mathbf{X};J)
    +
    O\bigl(\log\bigl(I(\mathbf{X};J)+1\bigr)\bigr) \mathrm{nats}.
\end{equation}
Applied to the product kernel
\(P_{J^n\mid\mathbf{X}^n}=\prod_{i=1}^{n}P_{J_i\mid\mathbf{X}_i}\)
on an i.i.d.\ block, \eqref{eq:rec_oneshot} gives expected length
\(n I(\mathbf{X};J)+O\bigl(\log\bigl(n I(\mathbf{X};J)+1\bigr)\bigr)\),
hence \(\frac{1}{n}\mathbb{E}[L_n] \rightarrow I(\mathbf{X};J)\).
Thus every discrete kernel \(P_{J\mid\mathbf{X}}\) with finite
\(I(\mathbf{X};J)\) is exactly realizable at that per-letter rate.

\section{Gibbs Relabeling for Finite Mixtures and GMMs}
\label{sec:grm}

We first develop GRM for scalar GMMs and establish its high-rate
optimality. We then extend the construction to general finite mixtures
on \(\mathbb{R}^d\), characterize when it meets the SLB, and specialize
the resulting criterion to multivariate GMMs.
\color{black}

\subsection{Scalar GMMs}
\label{sec:grm_scalar}

Specializing \eqref{eq:gmm_density} to \(d=1\) gives
\begin{equation}
    f_X(x)
    =
    \sum_{k=1}^{K}
    \pi_k\mathcal{N}(x;\mu_k,\sigma_k^2),
\end{equation}
where \(\sigma_k^2>0\). Equivalently, one first draws
\(T\sim\mathrm{Categorical}(\pi_1,\ldots,\pi_K)\) and then draws
\begin{equation}
    X\mid T=k\sim\mathcal{N}(\mu_k,\sigma_k^2).
\end{equation}

This representation suggests reversing the generative procedure for
source coding: first coordinate a label drawn from the posterior
\(P_{T\mid X}\), and then encode \(X\) under the corresponding component
law. Throughout this section, the encoder and decoder know the GMM
parameters and have access to the common randomness specified in
Section~\ref{sec:rec}.

\begin{algorithm}[t]
\DontPrintSemicolon
\SetAlgoLined
\caption{Gibbs-relabeling method (scalar GMM)}
\label{alg:grm_scalar}
\KwIn{parameters \(\{(\pi_k,\mu_k,\sigma_k^2)\}_{k=1}^{K}\); distortion \(D>0\); common randomness \(U\); realization \(x^n\in\mathbb{R}^n\)}
\KwOut{reconstruction \(\widehat{x}^n\in\mathbb{R}^n\)}
Determine, independently of \(x^n\), the allocation \(\{D_k^{\star}\}_{k=1}^{K}\) by reverse water-filling:
\(D_k^{\star}=\min(\sigma_k^2,\theta)\) with \(\theta>0\) such that \(\sum_{k=1}^{K}\pi_k D_k^{\star}=D\)\;

For \(i=1,\ldots,n\), 
\qquad \qquad \qquad \qquad \qquad \qquad \qquad \qquad
set
\(
p_k(x_i)
\gets
\frac{\pi_k\,\mathcal{N}(x_i;\mu_k,\sigma_k^2)}
{\sum_{j=1}^{K}\pi_j\,\mathcal{N}(x_i;\mu_j,\sigma_j^2)}
\)
for \(k=1,\ldots,K\)\;

Using \(U\), channel simulate
\(\widetilde{T}^n\sim\prod_{i=1}^{n}(p_1(x_i),\ldots,p_K(x_i))\)\;

For each \(k\), apply any RDF-achieving code for
\(\mathcal{N}(\mu_k,\sigma_k^2)\) at distortion \(D_k^{\star}\)
to the subsequence \(\{x_i:\widetilde{T}_i=k\}\)
(an explicit test channel is recorded in Appendix~\ref{app:gauss_tc}),
yielding an index \(W_k\)\;

From \((\widetilde{T}^n,\{W_k\}_{k=1}^{K})\), output the reconstruction \(\widehat{x}^n\) of those codes\;

\end{algorithm}

We begin by verifying the distribution induced by posterior relabeling.

\begin{lemma}[Joint-law preservation]
\label{lem:joint_scalar}
The pair \((\widetilde{T}^n,X^n)\) generated by Algorithm~\ref{alg:grm_scalar}
has the same law as the generative pair \((T^n,X^n)\). In particular,
for every \(i\in\{1,\ldots,n\}\),
    \begin{equation}
\label{eq:joint_scalar}
    X_i\mid(\widetilde{T}_i=k)
    \sim
    \mathcal{N}(\mu_k,\sigma_k^2),
    \qquad
    I(X^n;\widetilde{T}^n)
    =
    n\,I(X;T).
    \end{equation}
\end{lemma}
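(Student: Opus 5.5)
The plan is to compute the joint law of $(\widetilde{T}^n,X^n)$ directly from the exactness of channel simulation, as defined in \eqref{eq:stoch_sim}. Then we compare this law with the generative law through Bayes' rule.

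The first step is to fix the kernel that Algorithm~\ref{alg:grm_scalar} simulates. This is the product kernel $P_{\widetilde{T}^n\mid X^n}(t^n\mid x^n)=\prod_{i=1}^n p_{t_i}(x_i)$. Because common randomness $U$ is independent of $X^n$ and the simulation is exact for almost every $x^n$, the induced joint law is $P_{X^n}\cdot P_{\widetilde{T}^n\mid X^n}$. One subtlety must be handled explicitly: the code is deterministic given $U$. So I would first condition on $X^n=x^n$, take the expectation over $U$ to obtain the output law, and then integrate against $P_{X^n}$. Independence of $U$ and $X^n$ is exactly what makes this iterated expectation valid.

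The second step is to write the joint density with respect to Lebesgue measure times counting measure. Since the source is i.i.d., $f_{X^n}(x^n)=\prod_i f_X(x_i)$, and
\[
f_{X^n}(x^n)\prod_{i=1}^n p_{t_i}(x_i)=\prod_{i=1}^n f_X(x_i)\,\frac{\pi_{t_i}\mathcal{N}(x_i;\mu_{t_i},\sigma_{t_i}^2)}{f_X(x_i)}=\prod_{i=1}^n \pi_{t_i}\,\mathcal{N}(x_i;\mu_{t_i},\sigma_{t_i}^2).
\]
The denominator $f_X(x_i)$ is strictly positive everywhere because every $\sigma_k^2>0$, so no division by zero occurs. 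The right-hand side is precisely the density of the generative pair $(T^n,X^n)$, with $T_i$ i.i.d.\ Categorical$(\pi)$ and $X_i\mid T_i=k\sim\mathcal{N}(\mu_k,\sigma_k^2)$. This gives equality of the laws.

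The third step derives the two consequences stated in the lemma. Marginalizing to coordinate $i$ gives $(\widetilde{T}_i,X_i)\overset{d}{=}(T_i,X_i)$, so $X_i\mid\widetilde{T}_i=k\sim\mathcal{N}(\mu_k,\sigma_k^2)$. Mutual information is a functional of the joint law, so $I(X^n;\widetilde{T}^n)=I(X^n;T^n)$. This equals $nI(X;T)$ because the pairs $(T_i,X_i)$ are i.i.d. It is finite, since $I(X;T)\le\log K$.

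I do not expect any real obstacle here. The only point needing care is the first step: justifying that "exact simulation for a.e.\ $x$" together with independence of $U$ yields the product-form joint law. This is a short Fubini argument over $(U,X^n)$.
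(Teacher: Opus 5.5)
Your proposal is correct and follows the paper's argument: Bayes' rule identifies the simulated posterior weights with $P_{T\mid X}$, and exact simulation (with $U$ independent of the source) makes the joint law equal to $P_X\cdot P_{T\mid X}=P_{T,X}$. You are somewhat more explicit than the paper. You work directly with the $n$-letter product density, you justify the role of $U$ through Fubini, and you derive $I(X^n;\widetilde{T}^n)=nI(X;T)$ from the i.i.d.\ structure, all of which the paper leaves implicit after its single-letter argument.
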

\begin{proof}
Both constructions use the same source letter \(X^n\sim f_X^{\otimes n}\). It remains to
match the kernels \(P_{\widetilde{T}^n\mid X^n}\) and \(P_{T^n\mid X^n}\); the
joints then factor in the same way.

Since each \(\sigma_k^2>0\), one has \(f_X(x)>0\) for all \(x\), so the
weights in Algorithm~\ref{alg:grm_scalar} are well-defined. Bayes' rule
for the generative pair \((T,X)\) gives
\begin{equation}
\label{eq:pk_bayes}
    p_k(x)
    =
    \frac{
    \pi_k\,\mathcal{N}(x;\mu_k,\sigma_k^2)
    }{
    f_X(x)
    }
    =
    P(T=k\mid X=x).
\end{equation}
The simulation step is exact \eqref{eq:stoch_sim}, so
\(x\mapsto\bigl(p_1(x),\ldots,p_K(x)\bigr)\)
\cite{LiElGamal2018,FlamichGunduzREC}.
Thus \(P_{\widetilde{T}\mid X}=P_{T\mid X}\), and therefore
\begin{equation}
\label{eq:joint_factor}
    P_{\widetilde{T},X}
    =
    P_{\widetilde{T}\mid X}\,P_X
    =
    P_{T\mid X}\,P_X
    =
    P_{T,X}.
\end{equation}
In particular, \(P(\widetilde{T}=k)=\pi_k>0\) and
\(X\mid(\widetilde{T}=k)\sim X\mid(T=k)\sim\mathcal{N}(\mu_k,\sigma_k^2)\).
Mutual information depends on a pair only through its joint law, so
\(I(X;\widetilde{T})=I(X;T)\).
\end{proof}

Once \(\widetilde{T}\) is available to both the encoder and decoder,
Lemma~\ref{lem:joint_scalar} reduces the remaining step to conditional
rate-distortion coding of a Gaussian mixture with an observed component
label. Define
\begin{equation}
\label{eq:Rcond_XT}
    R_{X\mid\widetilde{T}}(D)
    :=
    \inf_{
    P_{\widehat{X}\mid X,\widetilde{T}}:\,
    \mathbb{E}(X-\widehat{X})^2\le D
    }
    I(X;\widehat{X}\mid\widetilde{T}).
\end{equation}
Since \(\widetilde{T}\) is discrete,
\(I(X;\widehat{X}\mid\widetilde{T})=\sum_{k}\pi_k I(X;\widehat{X}\mid\widetilde{T}=k)\).
Writing \(D_k:=\mathbb{E}\bigl[(X-\widehat{X})^2\mid\widetilde{T}=k\bigr]\),
one has \(\sum_{k}\pi_k D_k=D\), and each inner minimization is the MSE RDF
of \(\mathcal{N}(\mu_k,\sigma_k^2)\),
\begin{equation}
\label{eq:Rk_scalar}
    R_k(D_k)
    :=
    \begin{cases}
    \tfrac12\log(\sigma_k^2/D_k), & 0<D_k\le\sigma_k^2,\\
    0, & D_k\ge\sigma_k^2.
    \end{cases}
\end{equation}
Thus, \eqref{eq:Rcond_XT} is exactly the weighted allocation
\begin{equation}
\label{eq:Rcond_scalar}
    R_{X\mid\widetilde{T}}(D)
    =
    \inf
    \Biggl\{
    \sum_{k=1}^{K}
    \pi_k R_k(D_k)
    :
    D_k\ge0,\
    \sum_{k=1}^{K}\pi_k D_k=D
    \Biggr\}.
\end{equation}
The unique minimizer of \eqref{eq:Rcond_scalar} is reverse water-filling
on the parallel Gaussians \(\{X\mid\widetilde{T}=k\}_{k=1}^{K}\)
\cite{Berger1971,CoverThomas}:
\begin{equation}
\label{eq:RWF_scalar}
    D_k^{\star}
    =
    \min(\sigma_k^2,\theta),
\end{equation}
where the water level \(\theta>0\) is the unique solution of
\begin{equation}
\label{eq:theta_scalar}
    \sum_{k=1}^{K}
    \pi_k\min(\sigma_k^2,\theta)
    =
    D.
\end{equation}
A self-contained Karush-Kuhn-Tucker (KKT) derivation is given in
Appendix~\ref{app:rwf_scalar}.
Substituting \eqref{eq:RWF_scalar} into \eqref{eq:Rcond_scalar} yields
\begin{equation}
\label{eq:Rcond_closed}
    R_{X\mid\widetilde{T}}(D)
    =
    \frac12
    \sum_{k:\,\sigma_k^2>\theta}
    \pi_k\log(\sigma_k^2/\theta).
\end{equation}
By the reverse Shannon theorem (Section \ref{sec:rec}), simulating the \(n\)-letter product posterior requires
\(I(X;\widetilde{T})+o(1)\) nats per letter as \(n\to\infty\)
\cite{LiElGamal2018,FlamichGunduzREC}. By
Lemma~\ref{lem:joint_scalar}, this mutual information equals \(I(X;T)\).
Adding the conditional rate
\eqref{eq:Rcond_closed} therefore gives, for every
\(D\in\bigl(0,\sum_{k=1}^{K}\pi_k\sigma_k^2\bigr]\),
\begin{equation}
\label{eq:R_GRM_scalar}
    R_{\mathrm{GRM}}(D)
    :=
    I(X;T)
    +
    R_{X\mid\widetilde{T}}(D).
\end{equation}
This is the rate of a feasible test channel through Algorithm~\ref{alg:grm_scalar},
so \(R(D)\le R_{\mathrm{GRM}}(D)\) by definition of the RDF.
Together with the Shannon lower bound \eqref{eq:slb}, one has the sandwich
\(\mathrm{SLB}(D)\le R(D)\le R_{\mathrm{GRM}}(D)\).
The following theorem identifies the distortions at which these inequalities hold with equality.

\begin{theorem}[High-rate optimality of GRM for scalar GMMs]
\label{thm:grm_scalar}
If \(D<\min_k\sigma_k^2\), then
\begin{equation}
    R_{\mathrm{GRM}}(D)
    =
    h(X)-\tfrac12\log(2\pi e\,D)
    =
    R(D).
\end{equation}
\end{theorem}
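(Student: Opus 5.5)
Since the sandwich \(\mathrm{SLB}(D)\le R(D)\le R_{\mathrm{GRM}}(D)\) is already in place, it suffices to show \(R_{\mathrm{GRM}}(D)=\mathrm{SLB}(D)\) for \(D<\min_k\sigma_k^2\). The plan is to evaluate \(R_{\mathrm{GRM}}(D)\) explicitly in this regime and rewrite it in terms of \(h(X)\) through the chain rule.

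First, the water level. If \(D<\min_k\sigma_k^2\), then \(\theta=D\) solves \eqref{eq:theta_scalar}: every term is \(\min(\sigma_k^2,D)=D\), and \(\sum_k\pi_k D=D\). Uniqueness of \(\theta\) (the left side is strictly increasing in \(\theta\) on \((0,\max_k\sigma_k^2]\)) makes this the water level. Every component is then active, and \eqref{eq:Rcond_closed} gives
\[
R_{X\mid\widetilde{T}}(D)=\tfrac12\sum_{k=1}^K\pi_k\log(\sigma_k^2/D).
\]

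Second, the identification with the SLB. I would write \(I(X;T)=h(X)-h(X\mid T)\), using that \(h(X)\) is finite for a GMM: the density is bounded and the second moment is finite. Since \(X\mid T=k\sim\mathcal{N}(\mu_k,\sigma_k^2)\),
\[
h(X\mid T)=\tfrac12\sum_k\pi_k\log(2\pi e\sigma_k^2).
\]
Substituting both expressions into \eqref{eq:R_GRM_scalar}, the \(\log\sigma_k^2\) terms cancel and
\[
R_{\mathrm{GRM}}(D)=h(X)-\tfrac12\log(2\pi e D)=\mathrm{SLB}(D).
\]
The sandwich then forces \(R(D)\) to equal this common value.

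As a consistency check, one can also see the equality directly from the SLB equality condition. In the GRM test channel each component's Gaussian test channel at \(D_k=D<\sigma_k^2\) has the backward form \(X=\widehat{X}+Z\), with \(Z\sim\mathcal{N}(0,D)\) independent of \(\widehat{X}\) given \(\widetilde{T}=k\). Because the law of \(Z\) does not depend on \(k\), \(Z\) is independent of \((\widehat{X},\widetilde{T})\), so the error is Gaussian and independent of the reproduction. I expect no real obstacle; the only points needing care are uniqueness of \(\theta\), finiteness of \(h(X)\) so that the chain-rule decomposition is legitimate, and noting that \(R_{\mathrm{GRM}}\) is the achievable rate in the asymptotic sense of Section~\ref{sec:rec}, which the paper has already justified.
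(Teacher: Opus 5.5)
Your proposal is correct and follows the paper's proof: uniqueness of the water level gives \(\theta=D\), so every component is active, and then \(h(X\mid T)=\tfrac12\sum_k\pi_k\log(2\pi e\sigma_k^2)\) together with \(I(X;T)+h(X\mid T)=h(X)\) identifies \(R_{\mathrm{GRM}}(D)\) with \(\mathrm{SLB}(D)\), after which the sandwich closes. Your backward-channel consistency check is a valid extra that the paper does not use. Since the error is independent of \((\widehat{X},\widetilde{T})\), it even gives \(I(X;\widehat{X},\widetilde{T})=h(X)-\tfrac12\log(2\pi e D)\) directly.
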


\begin{proof}
Because \(D<\min_k\sigma_k^2\), the choice \(\theta=D\) satisfies
\eqref{eq:theta_scalar} and lies strictly below every component variance.
Uniqueness of the water level (Appendix~\ref{app:rwf_scalar}) therefore
yields \(D_k^{\star}=D\) for every \(k\). Substituting into \eqref{eq:R_GRM_scalar} and
\eqref{eq:Rcond_closed} gives
\begin{equation}
    R_{\mathrm{GRM}}(D)
    =
    I(X;T)
    +
    \sum_{k=1}^{K}
    \pi_k\cdot\tfrac12\log(\sigma_k^2/D).
\end{equation}
The differential entropy of \(\mathcal{N}(\mu_k,\sigma_k^2)\) is
\(\tfrac12\log(2\pi e\sigma_k^2)\), hence
\begin{align}
    &\sum_{k=1}^{K} \pi_k\cdot\tfrac12\log(\sigma_k^2/D)\nonumber\\
    =&
    \sum_{k=1}^{K}
    \pi_k
    \Bigl(
    \tfrac12\log(2\pi e\sigma_k^2)
    -
    \tfrac12\log(2\pi e\,D)
    \Bigr)
    \nonumber\\
    =&
    h(X\mid T)
    -
    \tfrac12\log(2\pi e\,D).
\end{align}
Combining with \(I(X;T)+h(X\mid T)=h(X)\) yields
\begin{align}
    R_{\mathrm{GRM}}(D)
    &=
    I(X;T)
    +
    h(X\mid T)
    -
    \tfrac12\log(2\pi e\,D)
    \nonumber\\
    &=
    h(X)
    -
    \tfrac12\log(2\pi e\,D)
    =
    \mathrm{SLB}(D).
\end{align}
Together with \(\mathrm{SLB}(D)\leq R(D)\le R_{\mathrm{GRM}}(D)\), one has
\(R(D)=\mathrm{SLB}(D)=R_{\mathrm{GRM}}(D)\).
\end{proof}

\begin{remark}[Relation between tightness of SLB and condition of Theorem \ref{thm:grm_scalar}]
\label{re:gmm_slb_coincidence}
The condition in Theorem~\ref{thm:grm_scalar} coincides
with the condition for the tightness of the mixture SLB.
The coincidence is due to a property of Gaussians, which persists for
multivariate GMMs but fails for general finite
mixtures, where the mixture SLB may be tight while GRM is not.
We further derive this in
Section~\ref{sec:grm_general}.
\end{remark}

\begin{remark}[Converse of Theorem \ref{thm:grm_scalar}]
\label{rem:converse}
We note that the converse of Theorem \ref{thm:grm_scalar} is also true. It follows by contradiction.
Suppose GRM attains \(R(D)\) at some distortion $D$ with
\(\min_k\sigma_k^2<D<D_{\max}\).
In the region, for the mixture \(R(D)>\mathrm{SLB}(D)\), so
every RDF-achieving reproduction is purely discrete
\cite[Theorem 3]{Rose1994}.
Reverse water-filling, however, still leaves at least one component
active.
GRM encodes that component by the scalar Gaussian test channel at a
distortion strictly below its variance, where \(R(D)=\mathrm{SLB}(D)\) for a
Gaussian source, and therefore, yields an absolutely continuous part in the
conditional reconstruction \cite[Theorem 3]{Rose1994}.
The resulting unconditional law of \(\widehat{X}\) then has a density on
\(\mathbb{R}\), contradicting discreteness.
Hence GRM cannot attain \(R(D)\) for \(D>\min_k\sigma_k^2\).
Combined with Theorem~\ref{thm:grm_scalar} and continuity at the
threshold, GRM is Shannon-optimal if and only if
\(D\le\min_k\sigma_k^2\).
However this converse does not extend to multivariate cases naturally, as scalar Gaussian is the unique source distribution that admits continuous reconstruction for the entire range of positive rates \cite[Section IV-C]{Rose1994}. The only exception is where all the components are isotropic multivariate Gaussians, which is equivalent under
additive squared error to \(d\) i.i.d.\ scalar Gaussians of variance
\(\sigma^2\).
\end{remark}
\color{black}
\subsection{Extension to Multivariate Finite Mixtures}
\label{sec:grm_method}

The derivation for a general finite mixture on \(\mathbb{R}^d\) parallels
Section~\ref{sec:grm_scalar} through the definition of \(R_{\mathrm{GRM}}\).
We note that Gaussian reverse water-filling enters only as the explicit form of the
component RDFs and is not used for joint-law preservation or the separable
allocation. We briefly provide the extension below.

Let \(\mathbf{X}\in\mathbb{R}^d\) be a general finite mixture source
\eqref{eq:mix_gen}, and write \(R_k(D)\) for the MSE RDF of \(f_k\).
Given \(\mathbf{X}=\mathbf{x}\), GRM samples
\(\widetilde{T}\sim P_{T\mid\mathbf{X}=\mathbf{x}}\), communicates
\(\widetilde{T}\) by channel simulation, and applies an RDF-achieving code for
\(f_{\widetilde{T}}\) subject to a shared distortion budget.
The same argument as in Lemma~\ref{lem:joint_scalar} yields
\((\widetilde{T},\mathbf{X})\stackrel{d}{=}(T,\mathbf{X})\), hence
\(\mathbf{X}\mid(\widetilde{T}=k)\sim f_k\) and
\(I(\mathbf{X};\widetilde{T})=I(\mathbf{X};T)\).
As \(T\) is discrete and \(\mathbf{X}\mid T=k\sim f_k\), the
conditional RDF given the label is exactly the weighted allocation
\begin{equation}
\label{eq:Rcond_sep}
    R_{X|T}(D)
    :=
    \inf_{
    D_k\ge0,\,
    \sum_{k=1}^{K}\pi_k D_k=D
    }
    \sum_{k=1}^{K}
    \pi_k R_k(D_k),
\end{equation}
where the superscript marks the allocation over component distortions.
Adding the simulation rate \(I(\mathbf{X};T)\)
\cite{LiElGamal2018,FlamichGunduzREC} gives, for every \(D\) at which the
constraint set in \eqref{eq:Rcond_sep} is nonempty,
\begin{equation}
\label{eq:R_GRM}
    R_{\mathrm{GRM}}(D)
    :=
    I(\mathbf{X};T)
    +
    R_{\mathbf{X}|T}(D).
\end{equation}
This is the rate of a feasible test channel, so
\(R^{ }(D)\le R_{\mathrm{GRM}}(D)\).
The sandwich \(\mathrm{SLB}(D)\le R^{ }(D)\le R_{\mathrm{GRM}}(D)\)
therefore persists for every finite mixture.
One recovers the scalar GMM of Section~\ref{sec:grm_scalar} by letting \(d=1\),
\(f_k=\mathcal{N}(\mu_k,\sigma_k^2)\), in which \eqref{eq:R_GRM} reduces to
\eqref{eq:R_GRM_scalar}.

\subsection{SLB Tightness and a Necessary and Sufficient Criterion}
\label{sec:grm_general}

Recall \(\mathrm{SLB}(D)\) from \eqref{eq:slb}, and write
\begin{equation}
    \mathrm{SLB}_k(D)
    :=
    h(\mathbf{X}|T=k)
    -
    \frac{d}{2}
    \log\!\bigl(2\pi e\,D/d\bigr)
\end{equation}
for the corresponding bound on component \(k\).
To address Remark~\ref{re:gmm_slb_coincidence}, we start this section by
pointing out that although tightness of \(\mathrm{SLB}(\cdot)\) is equivalent to
tightness of every \(\mathrm{SLB}_k(\cdot)\) for GMMs, this property does not hold
for general finite mixtures, as the next example shows. Let \(\varphi_D\) be the density of
\(\mathcal{N}(0,D)\), and let \(\varphi:=\varphi_1\).

\begin{lemma}[Decomposition of a Gaussian density]
\label{lem:ex_slb_construction}
There exist distinct, everywhere positive probability densities
\(f_1,f_2\) on \(\mathbb{R}\), each with mean zero, variance one and finite differential entropy, i.e., $|h(f_k)|<\infty$, such
that
$
    (f_1+f_2)/2=\varphi.
$
Furthermore, for each \(k\in\{1,2\}\), there do not exist \(D>0\) and a
probability measure \(\nu\) such that
\[
    f_k(x)
    =
    (\nu*\varphi_D)(x)
    :=
    \int_{\mathbb{R}}\varphi_D(x-y)\,\nu(dy),
    \qquad x\in\mathbb{R}.
\]
\end{lemma}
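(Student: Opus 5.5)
The plan is to perturb the standard Gaussian by a smooth, compactly supported bump that is not real-analytic, and then use analyticity of Gaussian convolutions to rule out any representation $f_k=\nu*\varphi_D$. Concretely, I will take
\[
f_1:=\varphi\,(1+g),\qquad f_2:=\varphi\,(1-g),
\]
where $g\in C_c^\infty(\mathbb{R})$ is not identically zero and $\sup|g|\le c<1$. Then $(f_1+f_2)/2=\varphi$ holds automatically. Both $f_k$ are everywhere positive, since $f_k\ge(1-c)\varphi>0$, and they are distinct because $g\not\equiv0$.

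\textbf{Choosing $g$.} The normalization, mean and variance conditions for $f_1$ and $f_2$ reduce to three linear constraints:
\[
\int g\varphi\,dx=\int xg\varphi\,dx=\int x^2g\varphi\,dx=0.
\]
I will pick four bump functions with disjoint supports inside some interval $[a,b]$. The map sending a coefficient vector in $\mathbb{R}^4$ to these three moments has a nontrivial kernel. A kernel element gives a nonzero $g\in C_c^\infty$ satisfying the constraints, which I then rescale so that $\sup|g|\le 1/2$.

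\textbf{Finite entropy.} Finite differential entropy follows from the pointwise bounds
\[
\tfrac12\varphi\le f_k\le\tfrac32\varphi .
\]
These give $|\log f_k(x)|\le C+x^2/2$, so $\int f_k|\log f_k|<\infty$ because $f_k$ has finite second moment.

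\textbf{Non-representability (the main step).} Suppose $f_k=\nu*\varphi_D$ for some $D>0$ and some probability measure $\nu$. For complex $z=x+is$ and real $y$,
\[
\bigl|\varphi_D(z-y)\bigr|=\varphi_D(x-y)\,e^{s^2/(2D)}\le (2\pi D)^{-1/2}e^{s^2/(2D)} .
\]
This bound is uniform in $y$ and locally uniform in $z$. By dominated convergence together with Morera's theorem, or by differentiating under the integral, the function $z\mapsto\int\varphi_D(z-y)\,\nu(dy)$ is therefore entire. Hence $f_k$ is real-analytic on $\mathbb{R}$.

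On the other hand, $f_k=\varphi$ on $(b,\infty)$, because $g$ vanishes there, and $\varphi$ is also real-analytic. By the identity theorem for real-analytic functions on the connected set $\mathbb{R}$, $f_k=\varphi$ everywhere. This forces $g\equiv0$, a contradiction.

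I expect this analyticity step to be the only point needing care, namely justifying the entire extension for an arbitrary probability measure $\nu$ with no moment assumptions. The uniform bound above handles it, since $\nu$ has total mass one.
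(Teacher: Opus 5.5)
Your proof is correct and follows essentially the same route as the paper. Both perturb $\varphi$ by $\pm$ a smooth, compactly supported function with vanishing moments of orders $0,1,2$ and modulus at most $\varphi/2$, then use the entire extension of $\nu*\varphi_D$ (via the same bound $|\varphi_D(z-y)|\le(2\pi D)^{-1/2}e^{s^2/(2D)}$) to contradict real-analyticity. The only cosmetic difference is that the paper writes the perturbation explicitly, as the combination of five translated bumps with coefficients $1,-4,6,-4,1$, whereas you obtain it as a nontrivial kernel vector of a linear map $\mathbb{R}^4\to\mathbb{R}^3$.
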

\begin{proof}
See Appendix~\ref{app:ex_slb}.
\end{proof}

\begin{example}[Mixture SLB tightness without component-wise tightness]
\label{ex:slb_not_grm}
Let \(P(T=1)=P(T=2)=1/2\), and let \(X\) conditioned on \(T=k\) have
density \(f_k\) from Lemma~\ref{lem:ex_slb_construction}. The marginal
density of \(X\) is \((f_1+f_2)/2=\varphi\). Thus \(X\) is standard
Gaussian and
\begin{equation}
    R(D)=\mathrm{SLB}(D)=\tfrac12\log(1/D),
    \qquad D\in(0,1].
\end{equation}
Since $h(f_k)$ is finite, $\mathrm{SLB}_k(D)$ is finite and the equality criterion of the SLB applies. For a scalar source under squared error, equality in the SLB at
\(D>0\) holds if and only if the source density can be written as
\(\nu*\varphi_D\) for some probability measure \(\nu\)
\cite{Berger1971,CoverThomas}. Lemma~\ref{lem:ex_slb_construction}
rules out this representation for both components. Therefore,
\[
    R_k(D)>\mathrm{SLB}_k(D),
    \qquad k\in\{1,2\},\quad D\in(0,1).
\]
The mixture SLB is therefore tight throughout \((0,1)\), whereas both
component SLB inequalities are strict throughout that interval.
\end{example}

Example~\ref{ex:slb_not_grm} shows that mixture SLB tightness alone does
not imply component-wise tightness. The next theorem shows that
component-wise tightness is necessary and sufficient for GRM to meet the
mixture SLB. Under this condition, the sandwich
\(\mathrm{SLB}(D)\le R^{ }(D)\le R_{\mathrm{GRM}}(D)\) collapses and
GRM is Shannon-optimal.

\color{black}
\begin{theorem}[Necessary and sufficient conditions for GRM to meet the SLB]
\label{prop:comp_slb}
Fix \(D>0\).
The following are equivalent:
\begin{enumerate}
    \item \(R_k(D)=\mathrm{SLB}_k(D)\) for every \(k\);
    \item \(R_{\mathrm{GRM}}(D)=\mathrm{SLB}(D)\).
\end{enumerate}
In either case 
\(R_{\mathrm{GRM}}(D)=\mathrm{SLB}(D)=R^{ }(D)\).
\end{theorem}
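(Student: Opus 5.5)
The plan is to compare both sides through the chain rule $h(\mathbf{X})=I(\mathbf{X};T)+h(\mathbf{X}\mid T)$, exactly as in the proof of Theorem~\ref{thm:grm_scalar}. Averaging the component bounds gives
\begin{equation*}
\sum_{k}\pi_k\,\mathrm{SLB}_k(D)
=h(\mathbf{X}\mid T)-\tfrac{d}{2}\log(2\pi e\,D/d)
=\mathrm{SLB}(D)-I(\mathbf{X};T).
\end{equation*}
Hence $\mathrm{SLB}(D)=I(\mathbf{X};T)+\sum_k\pi_k\mathrm{SLB}_k(D)$, and condition 2 is equivalent to
\begin{equation*}
R_{\mathbf{X}|T}(D)=\sum_k\pi_k\,\mathrm{SLB}_k(D).
\end{equation*}
Throughout I assume the component entropies are finite, as in Example~\ref{ex:slb_not_grm}, so that these manipulations are legitimate.

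The key tool is a lower bound on the allocation problem \eqref{eq:Rcond_sep}. For any feasible allocation $(D_k)$ with $\sum_k\pi_kD_k=D$, apply the SLB to each component and then Jensen's inequality (concavity of $\log$):
\begin{equation*}
\sum_k\pi_kR_k(D_k)\ \ge\ \sum_k\pi_k\Bigl(h(\mathbf{X}|T=k)-\tfrac d2\log(2\pi e D_k/d)\Bigr)\ \ge\ \sum_k\pi_k\mathrm{SLB}_k(D).
\end{equation*}
Taking the infimum gives $R_{\mathbf{X}|T}(D)\ge\sum_k\pi_k\mathrm{SLB}_k(D)$, which is consistent with $\mathrm{SLB}\le R\le R_{\mathrm{GRM}}$.

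For $1\Rightarrow2$, take the equal allocation $D_k=D$. This is feasible, and by hypothesis it gives $\sum_k\pi_kR_k(D)=\sum_k\pi_k\mathrm{SLB}_k(D)$. So the lower bound is attained and condition 2 follows.

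For $2\Rightarrow1$, equality forces both inequalities in the chain to be tight, at least along a minimizing sequence. This step is the main obstacle, because the infimum need not be attained a priori. I would handle it as follows.
\begin{itemize}
\item Attainment: each $R_k$ is convex, nonincreasing and continuous on $(0,\infty)$. Any $D_k$ is confined to $[0,D/\pi_k]$. The value $D_k=0$ can be excluded, since $R_k(0^+)=+\infty$ for a density source. By compactness and continuity, a minimizer $(D_k^\star)$ exists with every $D_k^\star>0$.
\item Equalizing the distortions: at the minimizer, Jensen's inequality must be tight. Strict concavity of $\log$ then forces $D_k^\star=D$ for all $k$.
\item Component-wise tightness: with $D_k^\star=D$, the remaining inequality is $\sum_k\pi_k\bigl(R_k(D)-\mathrm{SLB}_k(D)\bigr)=0$. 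Every term is nonnegative and every $\pi_k>0$, so $R_k(D)=\mathrm{SLB}_k(D)$ for each $k$.
\end{itemize}
If attainment is delicate, an alternative is to argue by contradiction. Using lower semicontinuity of the strictly positive gap $R_k-\mathrm{SLB}_k$ near $D$, together with the strict Jensen gap away from equal allocations, one would obtain a uniform positive margin.

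Finally, under either condition the sandwich $\mathrm{SLB}(D)\le R(D)\le R_{\mathrm{GRM}}(D)$ collapses, which gives $R_{\mathrm{GRM}}(D)=\mathrm{SLB}(D)=R(D)$.
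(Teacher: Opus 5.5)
Your proposal is correct and follows essentially the same route as the paper: the chain rule $\mathrm{SLB}(D)=I(\mathbf{X};T)+\sum_k\pi_k\mathrm{SLB}_k(D)$, then the componentwise SLB combined with Jensen's inequality on $\log D_k$, the equal allocation $D_k=D$ for sufficiency, and equality forcing for necessity. Where you go beyond the paper is the necessity step: you explicitly justify that the infimum defining $R_{\mathbf{X}|T}(D)$ is attained, using compactness of the allocation set, continuity of each $R_k$ on $(0,\infty)$, and $R_k(0^+)=+\infty$. The paper's proof simply asserts this attainment.
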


\begin{proof}
\textit{Sufficiency.}
Assume \(R_k(D)=\mathrm{SLB}_k(D)\) for every \(k\).
The choice \(D_k=D\) is admissible in \eqref{eq:Rcond_sep}, so
\begin{align}
    R_{\mathbf{X}|T}(D)
    &\le
    \sum_{k=1}^{K}
    \pi_k R_k(D)
    =
    \sum_{k=1}^{K}
    \pi_k\mathrm{SLB}_k(D)
    \nonumber\\
    &=
    h(\mathbf{X}|T)
    -
    \frac{d}{2}
    \log\!\bigl(2\pi e\,D/d\bigr).
\end{align}
Adding the label cost and applying definition of mutual information yields
\(R_{\mathrm{GRM}}(D)\le\mathrm{SLB}(D)\).
The sandwich
\(\mathrm{SLB}(D)\le R^{ }(D)\le R_{\mathrm{GRM}}(D)\)
forces
\(R_{\mathrm{GRM}}(D)=\mathrm{SLB}(D)=R^{ }(D)\).

\textit{Necessity.}
Assume \(R_{\mathrm{GRM}}(D)=\mathrm{SLB}(D)\).
For every admissible allocation \(\{D_k\}\) with
\(\sum_k\pi_k D_k=D\),
\begin{align}
    \sum_{k=1}^{K}
    \pi_k R_k(D_k)
    &\ge
    \sum_{k=1}^{K}
    \pi_k\mathrm{SLB}_k(D_k)
    \nonumber\\
    &=
    h(\mathbf{X}|T)
    -
    \frac{d}{2}
    \log(2\pi e) 
    \nonumber\\
    & \qquad \qquad \, \, \, \, \, -
    \frac{d}{2}
    \sum_{k=1}^{K}
    \pi_k\log(D_k/d).
\end{align}
Concavity of \(\log\) gives
\(\sum_k\pi_k\log D_k\le\log D\), with equality if and only if
\(D_k=D\) for every \(k\) since $\pi_k>0$.
Hence
\begin{equation}
    \sum_{k=1}^{K}
    \pi_k R_k(D_k)
    \ge
    \sum_{k=1}^{K}
    \pi_k\mathrm{SLB}_k(D),
\end{equation}
and therefore
\(R_{X|T}^{ }(D)\ge\sum_k\pi_k\mathrm{SLB}_k(D)\).
But
\(R_{\mathrm{GRM}}(D)=\mathrm{SLB}(D)\) rearranges as
\(R_{X|T}^{ }(D)=\sum_k\pi_k\mathrm{SLB}_k(D)\).
The infimum defining \(R_{X|T}^{ }(D)\) is thus attained if and only if both equalities hold, which forces
\(D_k\equiv D\) and \(R_k(D)=\mathrm{SLB}_k(D)\) for every \(k\).
\end{proof}

\begin{remark}[Convention of tightness of SLB]
    We note that condition~(1) requires \emph{exact} tightness against the
unclipped Shannon formula \(\mathrm{SLB}_k(D)\).
Here we lift the convention of identifying a rate-distortion function with
\(\max\{0,\mathrm{SLB}_k(D)\}\). Specifically, whenever
\(\mathrm{SLB}_k(D)<0\), one has \(R_k(D)\ge0>\mathrm{SLB}_k(D)\), so~(1)
fails even though the clipped bound is met.
Under finite differential entropy, each map
\(D\mapsto\mathrm{SLB}_k(D)\) is continuous and strictly decreasing from
\(+\infty\) to \(-\infty\).
Hence, whenever \(R_k(D)=0\) for some component $k$, \(\mathrm{SLB}_k(D)<0\) almost surely and the condition 1) fails.
\end{remark}

Hence for general finite mixtures, component-wise SLB tightness is sufficient for GRM to be optimal. When some \(R_k(D)>\mathrm{SLB}_k(D)\), Theorem~\ref{prop:comp_slb} forces
\(R_{\mathrm{GRM}}(D)>\mathrm{SLB}(D)\).
This does not preclude
\(R_{\mathrm{GRM}}(D)=R^{ }(D)>\mathrm{SLB}(D)\).
However, Fig.~\ref{fig:bimodal_2d} gives an example in which GRM
numerically separates from the RDF estimate once component-wise SLB
tightness fails.
The comparison used in the proof of Theorem~\ref{prop:comp_slb}
nevertheless yields an asymptotic statement as \(D\downarrow0\).

\begin{corollary}[Asymptotic high-rate optimality of GRM]
\label{cor:grm_asymp}
Assume \(h(\mathbf{X})>-\infty\) and
\begin{equation}
\label{eq:comp_slb_asymp}
    \lim_{D\downarrow 0}
    \bigl(R_k(D)-\mathrm{SLB}_k(D)\bigr)
    =
    0,
    \qquad
    k=1,\ldots,K.
\end{equation}
Then
\begin{equation}
\label{eq:grm_asymp}
    \lim_{D\downarrow 0}
    \bigl(R_{\mathrm{GRM}}(D)-\mathrm{SLB}(D)\bigr)
    =
    \lim_{D\downarrow 0}
    \bigl(R_{\mathrm{GRM}}(D)-R^{ }(D)\bigr)
    =
    0.
\end{equation}
\end{corollary}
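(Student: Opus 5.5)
The idea is to rerun the sufficiency half of Theorem~\ref{prop:comp_slb}, replacing exact component-wise tightness by the vanishing gaps in \eqref{eq:comp_slb_asymp}, and then squeeze $R(D)$ between $\mathrm{SLB}(D)$ and $R_{\mathrm{GRM}}(D)$.

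First I take the admissible allocation $D_k=D$ for all $k$ in \eqref{eq:Rcond_sep}. This gives
\begin{equation*}
R_{\mathbf{X}|T}(D)\le\sum_{k=1}^{K}\pi_k R_k(D)
=\sum_{k=1}^{K}\pi_k\mathrm{SLB}_k(D)+\sum_{k=1}^{K}\pi_k\varepsilon_k(D),
\end{equation*}
where $\varepsilon_k(D):=R_k(D)-\mathrm{SLB}_k(D)$. By the SLB applied to component $k$, each $\varepsilon_k(D)\ge0$, and by assumption each $\varepsilon_k(D)\to0$.

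Next I need $\sum_k\pi_k\mathrm{SLB}_k(D)=h(\mathbf{X}|T)-\tfrac d2\log(2\pi e D/d)$ and $I(\mathbf{X};T)+h(\mathbf{X}|T)=h(\mathbf{X})$ to be meaningful, i.e.\ all terms finite. This is the one step needing care.
\begin{itemize}
\item The hypothesis \eqref{eq:comp_slb_asymp} implicitly requires $\mathrm{SLB}_k$ to be finite for small $D$ (otherwise the difference is not a real number tending to $0$). So $h(f_k)$ is finite, or at least $>-\infty$.
\item $h(f_k)<+\infty$ always holds, since $f_k$ has finite second moment.
\item $I(\mathbf{X};T)\le H(T)\le\log K<\infty$, so $h(\mathbf{X})=h(\mathbf{X}|T)+I(\mathbf{X};T)$ is finite.
\item The assumption $h(\mathbf{X})>-\infty$ guarantees $\mathrm{SLB}(D)$ is finite, so the differences in \eqref{eq:grm_asymp} are well defined.
\end{itemize}
If necessary, I would state the finiteness of each $h(f_k)$ explicitly as implied by \eqref{eq:comp_slb_asymp}.

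With these identities, adding $I(\mathbf{X};T)$ gives
\begin{equation*}
0\le R_{\mathrm{GRM}}(D)-\mathrm{SLB}(D)\le\sum_{k=1}^{K}\pi_k\varepsilon_k(D).
\end{equation*}
The lower bound follows from the sandwich $\mathrm{SLB}(D)\le R(D)\le R_{\mathrm{GRM}}(D)$. Since $K$ is finite, the right side tends to $0$, which proves the first limit.

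For the second limit, the same sandwich gives $0\le R_{\mathrm{GRM}}(D)-R(D)\le R_{\mathrm{GRM}}(D)-\mathrm{SLB}(D)$, which tends to $0$. As a by-product, $R(D)-\mathrm{SLB}(D)\to0$ as well.

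The only real obstacle is the bookkeeping of differential-entropy finiteness. The analytic content is just a finite weighted sum of vanishing nonnegative terms.
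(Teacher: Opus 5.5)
Your proposal is correct and matches the paper's proof. Both use the equal allocation $D_k=D$ in \eqref{eq:Rcond_sep} and the identity $\mathrm{SLB}(D)=I(\mathbf{X};T)+\sum_k\pi_k\mathrm{SLB}_k(D)$ to get $0\le R_{\mathrm{GRM}}(D)-\mathrm{SLB}(D)\le\sum_k\pi_k\bigl(R_k(D)-\mathrm{SLB}_k(D)\bigr)$, then the sandwich $\mathrm{SLB}(D)\le R(D)\le R_{\mathrm{GRM}}(D)$ for the second limit. Your finiteness bookkeeping, which the paper leaves implicit, is sound, and it shows that $h(\mathbf{X})>-\infty$ already follows from \eqref{eq:comp_slb_asymp}.
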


\begin{proof}
The identity
\(\mathrm{SLB}(D)=I(\mathbf{X};T)+\sum_{k=1}^{K}\pi_k\mathrm{SLB}_k(D)\)
follows from \(h(\mathbf{X})=I(\mathbf{X};T)+h(\mathbf{X}|T)\).
For every \(D>0\) the equal allocation \(D_k=D\) is admissible in
\eqref{eq:Rcond_sep}, so
\(R_{\mathbf{X}|T}^{ }(D)\le\sum_{k=1}^{K}\pi_k R_k(D)\) and
\begin{equation}
    0
    \le
    R_{\mathrm{GRM}}(D)-\mathrm{SLB}(D)
    \le
    \sum_{k=1}^{K}
    \pi_k\bigl(R_k(D)-\mathrm{SLB}_k(D)\bigr).
\end{equation}
The rightmost expression tends to zero as \(D\downarrow0\) by
\eqref{eq:comp_slb_asymp}.
The remaining claim follows from
\(\mathrm{SLB}(D)\le R^{ }(D)\le R_{\mathrm{GRM}}(D)\).
\end{proof}

The condition \eqref{eq:comp_slb_asymp} holds whenever each component
density \(f_k\) satisfies \(|h(\mathbf{X}\mid T=k)|<\infty\) and
\(H(\lfloor\mathbf{X}\rfloor\mid T=k)<\infty\)
\cite[Theorem~2]{Koch2016SLB}.
We note that for Gaussian components, \(R_k=\mathrm{SLB}_k\)
holds on a positive-length interval, instead of only in the limit \(D\downarrow0\).
We study this interval in \(\mathbb{R}^d\) in the next subsection.

\subsection{Multivariate Gaussian Mixtures}
\label{sec:grm_gmm}

Let each \(f_k=\mathcal{N}(\boldsymbol{\mu}_k,\boldsymbol{\Sigma}_k)\) with
\(\boldsymbol{\Sigma}_k
=
\mathbf{U}_k\operatorname{diag}(\lambda_{k,1},\ldots,\lambda_{k,d})
\mathbf{U}_k^{\mathsf{T}}\)
and \(\lambda_{k,m}>0\) for $m=1,...,d$.
The MSE RDF of a nondegenerate Gaussian is obtained through reverse water-filling on its
eigenvalues \cite{Berger1971,CoverThomas}, so
Theorem~\ref{prop:comp_slb} specializes to an explicit eigen-interval.
For completeness, the resulting GRM rate is
\begin{equation}
\label{eq:R_GRM_vec}
    R_{\mathrm{GRM}}(D)
    =
    I(\mathbf{X};T)
    +
    \frac 1 2
    \sum_{k=1}^{K}
    \pi_k
    \sum_{m:\,\lambda_{k,m}>\theta}
    \log\frac{\lambda_{k,m}}{\theta},
\end{equation}
where \(\theta>0\) is the unique common water level satisfying
\begin{equation}
\label{eq:theta_vec}
    D
    =
    \sum_{k=1}^{K}
    \pi_k
    \sum_{m=1}^{d}
    \min(\lambda_{k,m},\theta),
\end{equation}
and \(D_{k,m}^{\star}=\min(\lambda_{k,m},\theta)\)
(Appendix~\ref{app:rwf_vec}).

The optimality of GRM is then straightforward as a corollary of Theorem~\ref{prop:comp_slb}.

\begin{corollary}[High-rate optimality of GRM for GMMs]
\label{thm:grm}
If
\begin{equation}
    \frac{D}{d}
    <
    \min_{k,m}\lambda_{k,m},
\end{equation}
then
\begin{equation}
    R_{\mathrm{GRM}}(D)
    =
    h(\mathbf{X})
    -
    \frac{d}{2}
    \log\!\bigl(2\pi e\,D/d\bigr)
    =
    R^{ }(D).
\end{equation}
In particular, GRM attains the Shannon RDF throughout this distortion
regime.
\end{corollary}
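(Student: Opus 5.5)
The plan is to obtain the corollary by verifying condition~(1) of Theorem~\ref{prop:comp_slb} for every Gaussian component at the given \(D\). Once \(R_k(D)=\mathrm{SLB}_k(D)\) holds for all \(k\), the theorem gives \(R_{\mathrm{GRM}}(D)=\mathrm{SLB}(D)=R(D)\), and \(\mathrm{SLB}(D)\) is by definition \(h(\mathbf{X})-\tfrac d2\log(2\pi e D/d)\), which is the claimed identity. A GMM has \(h(\mathbf{X})\) finite, since the density is bounded and the second moment is finite, so the SLB is meaningful and no extra integrability check is needed.

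\textbf{Componentwise step.} Fix \(k\) and apply reverse water-filling to \(\mathcal{N}(\boldsymbol{\mu}_k,\boldsymbol{\Sigma}_k)\) at total distortion \(D\). The hypothesis \(D/d<\lambda_{\min}(\boldsymbol{\Sigma}_k)\) shows that the water level \(\theta_k=D/d\) satisfies \(\sum_m\min(\lambda_{k,m},\theta_k)=D\) with every eigenvalue active. Uniqueness of the water level then gives
\[
R_k(D)=\tfrac12\sum_{m=1}^d\log\frac{\lambda_{k,m}}{D/d}=\tfrac12\log\det\boldsymbol{\Sigma}_k-\tfrac d2\log(D/d).
\]
Using \(h(\mathbf{X}\mid T=k)=\tfrac12\log\bigl((2\pi e)^d\det\boldsymbol{\Sigma}_k\bigr)\), this equals \(\mathrm{SLB}_k(D)\) exactly.

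Alternatively, one can argue via the SLB equality criterion. Because \(\boldsymbol{\Sigma}_k-(D/d)I\succeq0\), we can write \(\mathbf{X}\mid T=k\) as \(\widehat{\mathbf{X}}+\mathbf{Z}\) with \(\widehat{\mathbf{X}}\sim\mathcal{N}(\boldsymbol{\mu}_k,\boldsymbol{\Sigma}_k-(D/d)I)\) independent of \(\mathbf{Z}\sim\mathcal{N}(0,(D/d)I)\). This is the independent, isotropic Gaussian error characterization of SLB tightness.

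\textbf{Optional consistency check.} One can also confirm directly that GRM's own allocation uses \(\theta=D/d\), giving \(D_{k,m}^\star=D/d\) for all \(k,m\). This value solves \eqref{eq:theta_vec}, and the solution is unique. Substituting into \eqref{eq:R_GRM_vec} and using \(I(\mathbf{X};T)+h(\mathbf{X}\mid T)=h(\mathbf{X})\) reproduces the SLB, in the same way as in the proof of Theorem~\ref{thm:grm_scalar}. This check is not logically required, since Theorem~\ref{prop:comp_slb} already covers it.

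\textbf{Main obstacle.} The only delicate point is keeping the total-distortion and per-coordinate conventions consistent. The vector SLB uses \(D/d\) per coordinate, so the active condition must be \(D/d<\lambda_{\min}\) rather than \(D<\lambda_{\min}\). One must also invoke uniqueness of the water level, from Appendix~\ref{app:rwf_vec}, to make sure the water-filling solution really is \(\theta=D/d\). Everything else follows directly from Theorem~\ref{prop:comp_slb}.
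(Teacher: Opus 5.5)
Your proposal is correct and follows essentially the same route as the paper: verify $R_k(D)=\mathrm{SLB}_k(D)$ for every Gaussian component under $D/d<\lambda_{\min}(\boldsymbol{\Sigma}_k)$, invoke Theorem~\ref{prop:comp_slb}, and identify $\theta=D/d$ through uniqueness of the water level. The only difference is that you prove the component-wise tightness yourself, by the explicit log-determinant computation or the backward-channel decomposition, whereas the paper cites the fully active characterization of Gaussian SLB tightness from Berger and Cover--Thomas.
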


\begin{proof}
Each nondegenerate Gaussian in \(\mathbb{R}^d\) meets its MSE SLB at total
distortion \(D\) if and only if reverse water-filling is fully active, i.e.,
if and only if \(D/d\) does not exceed the smallest eigenvalue of its
covariance \cite{Berger1971,CoverThomas}.
Under the stated hypothesis,
\(D/d<\lambda_{\min}(\boldsymbol{\Sigma}_k)\) for every \(k\), so
\(R_k(D)=\mathrm{SLB}_k(D)\) for every \(k\).
Theorem~\ref{prop:comp_slb} therefore yields
\(R_{\mathrm{GRM}}(D)=\mathrm{SLB}(D)=R^{ }(D)\).
The identification \(\theta=D/d\) is the unique solution of
\eqref{eq:theta_vec} in this range
(see Appendix~\ref{app:rwf_vec}).
\end{proof}

\begin{remark}[Boundary and inactive directions]
The equality of Corollary~\ref{thm:grm} extends to
\(D/d=\min_{k,m}\lambda_{k,m}\) by continuity of reverse water-filling.
Beyond this threshold some eigen-directions are inactive: they spend zero
rate and the reconstruction error is no longer a full-rank isotropic
Gaussian of variance \(D/d\) per dimension.
The SLB, which assumes exactly that error law, therefore cannot match
\(R_{\mathrm{GRM}}(D)\).
Achievability of \(R_{\mathrm{GRM}}(D)\) remains, while whether
\(R_{\mathrm{GRM}}(D)=R^{ }(D)\) continues to hold is not settled.
In particular, for \(d=1\) there is a single eigenvalue per component and
the fully active regime is exactly \(D\le\min_k\sigma_k^2\), recovering Theorem \ref{thm:grm_scalar}.
\end{remark}

\begin{remark}[Successive refinability of GRM for GMM]
    We note that for GMM source GRM is naturally a successively refinable scheme. This follows since once $\widetilde{T}$ is communicated, $\mathbf{X}|\widetilde{T}\sim \mathcal{N}$, and hence is successively refinable in MSE \cite{75242}.
\end{remark}
\section{Stochastic Scalar Quantization and the Zero-Rate Boundary}
\label{sec:lowrate}

In the previous section, we constructed GRM and identified a high-rate
interval on which it attains \(R^{ }\).
The construction however always simulates a posterior sample of \(T\), setting the label cost $I(\mathbf{X};T)$ as a lower bound independent of $D$, whereas
\(R^{ }(D)\to0\) as \(D\to D_{\max}\).
We now turn to the complementary low-rate regime and propose SSQ.
We retain a discrete reconstruction index, but unlike in GRM, it does not depend on the mixture
posterior or the support set. For each index the reconstruction is a fixed codeword from an
arbitrary finite codebook.

Throughout this section, \(\mathbf{X}\in\mathbb{R}^d\) is a finite GMM.

\subsection{Stochastic Scalar Quantization (SSQ)}
\label{sec:fa_codes}

We construct SSQ as follows. Let \(M\ge1\) be an integer, let \(\mathcal{J}\) be a finite alphabet with $|\mathcal{J}|\leq M$,
and let \(J\) be a \(\mathcal{J}\)-valued random variable on the same
probability space as \(\mathbf{X}\), specified by a conditional kernel
\(P_{J\mid\mathbf{X}}\).
A codebook is a deterministic map \(j\mapsto\mathbf{a}_j\in\mathbb{R}^d\),
\(j\in\mathcal{J}\), and the associated reconstruction is
\(\widehat{\mathbf{X}}:=\mathbf{a}_J\).

Encoder and decoder share the codebook.
Given \(\mathbf{X}=\mathbf{x}\), the encoder and decoder simulate
\(J\sim P_{J\mid\mathbf{X}=\mathbf{x}}\) (Section \ref{sec:rec}), and the decoder emits
\(\mathbf{a}_J\).
Applied to the \(n\)-letter product kernel, the simulation rate is \(I(\mathbf{X};J)\) nats per
letter as \(n\to\infty\); the expected distortion is
\(\mathbb{E}\|\mathbf{X}-\mathbf{a}_J\|^2\).
Note that SSQ is a stochastic quantization scheme that is, for a given $\mathbf{X}=\mathbf{x}$ the reconstruction $\widehat{\mathbf{X}}$ is a random variable governed by the common randomness, while scalar quantization (SQ) is the deterministic special case in which each
letter is mapped independently by a nearest-neighbor partition of
\(\mathbb{R}^d\) onto the codebook, and in this case
\(I(\mathbf{X};J)=H(J)\).
The SSQ information rate is the infimal such rate
subject to a distortion budget \(D\):
\begin{equation}
\label{eq:rdf_M}
    R_M^{ }(D)
    :=
    \inf_{
    \substack{
    P_{J\mid\mathbf{X}},\,
    \{\mathbf{a}_j\}_{j\in\mathcal{J}}:\\
    |\mathcal{J}|\le M,\,
    \mathbb{E}\|\mathbf{X}-\mathbf{a}_J\|^2\le D
    }
    }
    I(\mathbf{X};J),
\end{equation}
with the convention \(R_M^{ }(D)=+\infty\) if the constraint set is
empty.
Every pair $(P_{J\mid\mathbf{X}},\mathbf{a}_j)$ in the constraint set of \eqref{eq:rdf_M} induces the test channel $P_{\widehat{\mathbf{X}} \mid \mathbf{X}=\mathbf{x}}=\sum_j P_{J\mid\mathbf{X}}(j\mid \mathbf{x})\delta_{\mathbf{a}_j}$ whose output takes at most \(M\) values. We call the pair, or the induced channel, an SSQ test channel of size at most $M$. Since $\widehat{\mathbf{X}}=\mathbf{a}_J$ is a function of $J$, $I(\mathbf{X};J)\geq I(\mathbf{X};\widehat{\mathbf{X}})$, with equality when $j\mapsto \mathbf{a}_j$ is injective.
If two codewords coincide, the corresponding symbols of \(\mathcal{J}\)
may be merged without changing the induced channel or the distortion.
Hence, \eqref{eq:rdf_M} is the infimum of $I(\mathbf{X};\widehat{\mathbf{X}})$ over test channels whose output support is at most $M$ points, which is equivalent to the output-alphabet-constrained RDF of \cite{FinamorePearlman1980}.

\begin{lemma}[Centroid reconstruction]
\label{lem:centroid}
Assume \(\mathbb{E}\|\mathbf{X}\|^2<\infty\), and let \(J\) be a
finite-valued random variable jointly distributed with \(\mathbf{X}\).
Among all codebooks \(\{\mathbf{a}_j\}\subset\mathbb{R}^d\),
the distortion \(\mathbb{E}\|\mathbf{X}-\mathbf{a}_J\|^2\) is minimized if
and only if
\(\mathbf{a}_j=\mathbb{E}[\mathbf{X}\mid J=j]\)
for every \(j\) with \(P(J=j)>0\).
In that case,
\begin{equation}
\label{eq:centroid_mse}
    \mathbb{E}\|\mathbf{X}-\mathbf{a}_J\|^2
    =
    \mathbb{E}\|\mathbf{X}\|^2
    -
    \mathbb{E}\|\mathbb{E}[\mathbf{X}\mid J]\|^2.
\end{equation}
In particular, the infimum in \eqref{eq:rdf_M} may be taken over
conditional kernels \(P_{J\mid\mathbf{X}}\) with
\(|\mathcal{J}|\le M\) and
\(\mathbb{E}\|\mathbf{X}-\mathbb{E}[\mathbf{X}\mid J]\|^2\le D\).
\end{lemma}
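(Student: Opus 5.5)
The plan is to use the orthogonal decomposition of mean-squared error given a discrete conditioning variable, applied separately on each cell \(\{J=j\}\). Since \(J\) is finite-valued, the expected distortion splits as
\[
\mathbb{E}\|\mathbf{X}-\mathbf{a}_J\|^2=\sum_{j:\,P(J=j)>0}P(J=j)\,\mathbb{E}\bigl[\|\mathbf{X}-\mathbf{a}_j\|^2\,\big|\,J=j\bigr].
\]
Symbols with \(P(J=j)=0\) contribute nothing, which explains why the statement constrains only the codewords with positive mass. Every term is finite because \(\mathbb{E}\|\mathbf{X}\|^2<\infty\) implies \(\mathbb{E}[\|\mathbf{X}\|^2\mid J=j]<\infty\) whenever \(P(J=j)>0\). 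Consequently the conditional mean \(\mathbf{c}_j:=\mathbb{E}[\mathbf{X}\mid J=j]\) exists.

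For each such \(j\), I will expand around \(\mathbf{c}_j\):
\[
\mathbb{E}\bigl[\|\mathbf{X}-\mathbf{a}_j\|^2\mid J=j\bigr]=\mathbb{E}\bigl[\|\mathbf{X}-\mathbf{c}_j\|^2\mid J=j\bigr]+\|\mathbf{c}_j-\mathbf{a}_j\|^2 .
\]
The cross term vanishes because \(\mathbb{E}[\mathbf{X}-\mathbf{c}_j\mid J=j]=0\). The codewords enter the total distortion only through the separate nonnegative terms \(P(J=j)\|\mathbf{c}_j-\mathbf{a}_j\|^2\), and each has a strictly positive weight. The total is therefore minimized if and only if \(\mathbf{a}_j=\mathbf{c}_j\) for every \(j\) with positive mass, which proves the "if and only if" part of the lemma.

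To obtain \eqref{eq:centroid_mse}, I will expand \(\mathbb{E}[\|\mathbf{X}-\mathbf{c}_j\|^2\mid J=j]=\mathbb{E}[\|\mathbf{X}\|^2\mid J=j]-\|\mathbf{c}_j\|^2\) and average over \(j\). The tower property then gives \(\mathbb{E}\|\mathbf{X}\|^2-\mathbb{E}\|\mathbb{E}[\mathbf{X}\mid J]\|^2\).

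The final reduction concerns the infimum in \eqref{eq:rdf_M}. Its objective \(I(\mathbf{X};J)\) depends only on the kernel \(P_{J\mid\mathbf{X}}\) and not on the codebook. For a fixed kernel, a feasible codebook exists if and only if the centroid codebook is feasible, because the centroid codebook has the smallest distortion. Codewords of zero-mass symbols can be set arbitrarily, since they affect neither the distortion nor the mutual information. The infimum can therefore be taken over kernels alone, subject to the centroid distortion constraint.

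None of these steps is a real obstacle. The only points needing care are integrability, which makes the conditional means well defined, and the positivity \(P(J=j)>0\) used for the strict "only if" direction.
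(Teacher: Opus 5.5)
Your proposal is correct and follows essentially the same route as the paper. The paper writes the decomposition globally as \(\mathbb{E}\|\mathbf{X}-\mathbf{a}_J\|^2=\mathbb{E}\|\mathbf{X}-\mathbf{m}_J\|^2+\mathbb{E}\|\mathbf{m}_J-\mathbf{a}_J\|^2\) with \(\mathbf{m}_J=\mathbb{E}[\mathbf{X}\mid J]\), rather than cell by cell, which for finite-valued \(J\) is the same computation; your handling of zero-mass symbols and your reduction of \eqref{eq:rdf_M} to kernels alone, using that \(I(\mathbf{X};J)\) does not depend on the codebook, also match the paper.
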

\begin{proof}
    See Appendix~\ref{app:centroid}.
\end{proof}

The cardinality constraint on \(J\) is the only restriction in
\eqref{eq:rdf_M} that is absent from the Shannon RDF. As \(M\) increases,
the feasible classes are nested, but it remains to determine whether
their union is sufficiently rich to recover the unrestricted RDF. Following \cite{FinamorePearlman1980}, we include the
following lemma for completeness, showing that the finite-alphabet penalty vanishes as
\(M\to\infty\).

\begin{lemma}[SSQ approximation of the RDF]
\label{lem:RM_limit}
Assume \(\mathbb{E}\|\mathbf{X}\|^2<\infty\). For every integer \(M\ge1\)
and every \(D>0\),
\begin{equation}
\label{eq:RM_mono}
    R^{ }(D)
    \le
    R_{M+1}^{ }(D)
    \le
    R_M^{ }(D),
\end{equation}
with the convention that both sides are \(+\infty\) if the corresponding
constraint set is empty.
Moreover, for every \(D\in(0,D_{\max})\),
\begin{equation}
\label{eq:RM_limit}
    \lim_{M\to\infty}R_M^{ }(D)
    =
    R^{ }(D).
\end{equation}
\end{lemma}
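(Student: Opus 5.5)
The monotonicity chain \eqref{eq:RM_mono} follows directly from how the constraint sets are nested, so I will dispose of it first. Any SSQ test channel of size at most \(M\) is also an SSQ test channel of size at most \(M+1\), since the constraint in \eqref{eq:rdf_M} is \(|\mathcal{J}|\le M\). This gives \(R_{M+1}(D)\le R_M(D)\). For the lower bound, every feasible pair \((P_{J\mid\mathbf{X}},\{\mathbf{a}_j\})\) induces the test channel \(P_{\widehat{\mathbf{X}}\mid\mathbf{X}}\) with \(\widehat{\mathbf{X}}=\mathbf{a}_J\). This channel is feasible in \eqref{eq:rdf}, and by data processing \(I(\mathbf{X};J)\ge I(\mathbf{X};\widehat{\mathbf{X}})\ge R(D)\). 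Taking infima gives \(R(D)\le R_{M+1}(D)\), and the \(+\infty\) convention covers empty sets.

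For the limit \eqref{eq:RM_limit}, monotonicity shows that \(\lim_M R_M(D)\) exists and is at least \(R(D)\). It therefore suffices to show that, for each \(\varepsilon>0\), some finite \(M\) satisfies \(R_M(D)\le R(D)+\varepsilon\). The plan uses the fact that \(D<D_{\max}\), so \(R\) is continuous and strictly decreasing there.

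\textbf{Step 1: slack in distortion.} Pick \(D'<D\) with \(R(D')\le R(D)+\varepsilon/2\), which continuity permits. Then pick a test channel \(P_{\widehat{\mathbf{X}}\mid\mathbf{X}}\) with \(\mathbb{E}\|\mathbf{X}-\widehat{\mathbf{X}}\|^2\le D'\) and \(I(\mathbf{X};\widehat{\mathbf{X}})\le R(D)+\varepsilon\).

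\textbf{Step 2: quantize the output.} Fix a large box \([-L,L]^d\) and a fine grid of spacing \(\eta\) inside it, and let \(q\) map each point of the box to its nearest grid point. Map points outside the box to a single extra codeword, say \(\boldsymbol{\mu}\) or \(\mathbf{0}\). Set \(J=q(\widehat{\mathbf{X}})\), so \(J\) takes finitely many values \(M(L,\eta)\) and its kernel is \(P_{J\mid\mathbf{X}}=\int \mathbb{1}\{q(\widehat{\mathbf{x}})=\cdot\}\,P_{\widehat{\mathbf{X}}\mid\mathbf{X}}(d\widehat{\mathbf{x}}\mid\mathbf{x})\). Since \(J\) is a function of \(\widehat{\mathbf{X}}\), data processing gives \(I(\mathbf{X};J)\le I(\mathbf{X};\widehat{\mathbf{X}})\). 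This is the key point: quantizing the output can only lower the rate, so there is no need to control a rate increase.

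\textbf{Step 3: distortion.} Using the centroid codebook of Lemma~\ref{lem:centroid} can only help, but it is simpler to bound the distortion with the grid codewords directly. Inside the box the error grows by at most \(O(\eta\sqrt{d}\,\sqrt{D'})+d\eta^2\), via the triangle inequality and Cauchy--Schwarz. Outside the box the contribution is bounded by \(2\mathbb{E}[(\|\mathbf{X}\|^2+\|\widehat{\mathbf{X}}\|^2)\mathbb{1}\{\widehat{\mathbf{X}}\notin[-L,L]^d\}]\). This tends to zero as \(L\to\infty\) by dominated convergence, because \(\mathbb{E}\|\mathbf{X}\|^2<\infty\) and \(\mathbb{E}\|\widehat{\mathbf{X}}\|^2\le 2\mathbb{E}\|\mathbf{X}\|^2+2D'<\infty\). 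Choosing \(L\) large and then \(\eta\) small keeps the total distortion at most \(D\).

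I expect this tail-and-moment bookkeeping in Step 3 to be the only real obstacle, and the finite-second-moment assumption is exactly what controls it. With this choice, \(R_{M(L,\eta)}(D)\le R(D)+\varepsilon\), and monotonicity in \(M\) finishes the proof.
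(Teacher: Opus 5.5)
Your proof is correct and follows essentially the same route as the paper. For the monotonicity chain you use nested constraint sets plus data processing. For the limit you take a distortion slack from continuity of \(R\), then clip and grid-quantize the reproduction of a near-optimal test channel, so that data processing controls the rate and the finite second moment controls the added distortion. The only differences are cosmetic: you send out-of-box points to a single extra codeword instead of projecting onto the cube, and you write out the distortion bookkeeping that the paper only asserts.
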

\begin{proof}
    See Appendix~\ref{app:centroid}.
\end{proof}

Lemma~\ref{lem:RM_limit} recovers the Shannon RDF by allowing the
reconstruction alphabet to grow. We now keep \(M\) finite, as required
for an SSQ construction with a fixed codebook. Assignments and codebooks
attaining or approximating \eqref{eq:rdf_M} can be computed by the
BA/deterministic-annealing iteration described in
Appendix~\ref{app:ba}. We now study the rate-distortion behavior of
fixed-cardinality SSQ, beginning at the low rate region near zero.

\subsection{Zero-Rate Boundary}
\label{sec:zero_rate}

Without loss of generality, assume \(\mathbb{E}[\mathbf{X}]=\mathbf{0}\),
otherwise replace \(\mathbf{X}\) by \(\mathbf{X}-\boldsymbol{\mu}\), which
leaves \(R^{ }\) and \(D_{\max}\) unchanged.
Define the directional sub-Gaussian proxy
\cite{BoucheronLugosiMassart2013}
\begin{equation}
    \tau^2
    :=
    \sup_{\mathbf{u}\ne\mathbf{0}}
    \frac{2}{\|\mathbf{u}\|^2}
    \log\mathbb{E}\bigl[e^{\mathbf{u}^{\mathsf{T}}\mathbf{X}}\bigr].
\end{equation}
If \(\mathbf{X}\) is a finite GMM with
\(\sum_{k=1}^{K}\pi_k\boldsymbol{\mu}_k=\mathbf{0}\), each component has a
Gaussian moment generating function, and therefore
\begin{equation}
    \tau^2
    =
    \sup_{\mathbf{u}\ne\mathbf{0}}
    \frac{2}{\|\mathbf{u}\|^2}
    \log
    \Biggl[
    \sum_{k=1}^{K}
    \pi_k
    \exp\Bigl(
    \mathbf{u}^{\mathsf{T}}\boldsymbol{\mu}_k
    +
    \tfrac12\mathbf{u}^{\mathsf{T}}\boldsymbol{\Sigma}_k\mathbf{u}
    \Bigr)
    \Biggr].
\end{equation}

We now derive the first-order asymptotic expansion for \(R^{ }\) and \(R_M^{ }\) near zero rate.

\begin{theorem}[Multivariate zero-rate boundary]
\label{thm:zero_rate_vec}
Let \(\delta:=D_{\max}-D\).
Then
\begin{equation}
    R^{ }(D_{\max}-\delta)
    =
    \frac{\delta}{2\tau^2}+o(\delta),
    \qquad\delta\downarrow0,
\end{equation}
and the same expansion holds for every \(R_M^{ }(D_{\max}-\delta)\) with
\(M\ge2\).
Moreover,
\begin{equation}
    R^{ }(D)
    \ge
    \frac{D_{\max}-D}{2\tau^2}
\end{equation}
for every \(D\in[0,D_{\max}]\).
\end{theorem}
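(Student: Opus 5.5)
The plan is to prove the global lower bound first, since it is a clean information-theoretic inequality. The upper bound then needs only a single binary SSQ test channel. By Lemma~\ref{lem:RM_limit}, $R(D)\le R_M(D)\le R_2(D)$ for every $M\ge2$. So the whole expansion follows from two bounds: $R(D_{\max}-\delta)\ge \delta/(2\tau^2)$ for all $\delta$, and $R_2(D_{\max}-\delta)\le \delta/(2\tau^2)+o(\delta)$.

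\emph{Lower bound.} Take any test channel with $\mathbb{E}\|\mathbf{X}-\widehat{\mathbf{X}}\|^2\le D$, and set $\mathbf{m}(\widehat{\mathbf{x}}):=\mathbb{E}[\mathbf{X}\mid\widehat{\mathbf{X}}=\widehat{\mathbf{x}}]$. The MMSE (centroid) argument of Lemma~\ref{lem:centroid}, applied to a general reconstruction and using $\mathbb{E}\mathbf{X}=\mathbf{0}$, gives $D\ge D_{\max}-\mathbb{E}\|\mathbf{m}(\widehat{\mathbf{X}})\|^2$. Hence $\delta\le\mathbb{E}\|\mathbf{m}(\widehat{\mathbf{X}})\|^2$.

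For each $\widehat{\mathbf{x}}$ and each vector $\mathbf{v}$, the Donsker--Varadhan variational formula gives
\[
\mathbf{v}^{\mathsf{T}}\mathbf{m}(\widehat{\mathbf{x}})\le D(P_{\mathbf{X}\mid\widehat{\mathbf{x}}}\|P_{\mathbf{X}})+\log\mathbb{E}e^{\mathbf{v}^{\mathsf{T}}\mathbf{X}}\le D(P_{\mathbf{X}\mid\widehat{\mathbf{x}}}\|P_{\mathbf{X}})+\tfrac{\tau^2}{2}\|\mathbf{v}\|^2,
\]
where the second step uses the definition of $\tau^2$. Choosing $\mathbf{v}=\mathbf{m}/\tau^2$ yields $\|\mathbf{m}(\widehat{\mathbf{x}})\|^2\le 2\tau^2 D(P_{\mathbf{X}\mid\widehat{\mathbf{x}}}\|P_{\mathbf{X}})$. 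Averaging over $\widehat{\mathbf{X}}$ gives $\delta\le 2\tau^2 I(\mathbf{X};\widehat{\mathbf{X}})$, and taking the infimum gives the global bound $R(D)\ge (D_{\max}-D)/(2\tau^2)$.

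\emph{Upper bound via a sparse binary channel.} Fix a probability law $Q\ll P_{\mathbf{X}}$ with bounded density ratio $dQ/dP_{\mathbf{X}}\le C$ and mean $\mathbf{m}_Q\neq\mathbf{0}$. For $\varepsilon<1/C$ set $P(J=1\mid\mathbf{x})=\varepsilon\,\tfrac{dQ}{dP_{\mathbf{X}}}(\mathbf{x})$, and use centroid codewords (Lemma~\ref{lem:centroid}). Then $P(J=1)=\varepsilon$ and $\mathbf{X}\mid J=1\sim Q$. The law of $\mathbf{X}\mid J=0$ is $Q_0=(P_{\mathbf{X}}-\varepsilon Q)/(1-\varepsilon)$.

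A direct computation gives the distortion reduction $\delta=\varepsilon\|\mathbf{m}_Q\|^2/(1-\varepsilon)$. It also gives
\[
I(\mathbf{X};J)=\varepsilon D(Q\|P_{\mathbf{X}})+(1-\varepsilon)D(Q_0\|P_{\mathbf{X}}),
\]
where $D(Q_0\|P_{\mathbf{X}})\le\chi^2(Q_0\|P_{\mathbf{X}})=O(\varepsilon^2)$ because the density ratio is bounded. Letting $\varepsilon\downarrow0$ shows that $\limsup_{\delta\downarrow0}R_2(D_{\max}-\delta)/\delta\le D(Q\|P_{\mathbf{X}})/\|\mathbf{m}_Q\|^2$. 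It therefore suffices to show
\[
\inf_Q\, D(Q\|P_{\mathbf{X}})/\|\mathbf{m}_Q\|^2=1/(2\tau^2),
\]
the infimum being over bounded-ratio $Q$.

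\emph{Main obstacle: achieving the constant with exponential tilts.} Write $\psi(\mathbf{v})=\log\mathbb{E}e^{\mathbf{v}^{\mathsf{T}}\mathbf{X}}$ for the GMM cumulant generating function, and let $Q_{\mathbf{v}}$ be the tilt $dQ_{\mathbf{v}}/dP\propto e^{\mathbf{v}^{\mathsf{T}}\mathbf{x}}$. Then $\mathbf{m}_{Q_{\mathbf{v}}}=\nabla\psi(\mathbf{v})$ and $D(Q_{\mathbf{v}}\|P)=\mathbf{v}^{\mathsf{T}}\nabla\psi(\mathbf{v})-\psi(\mathbf{v})$. The proof splits into three cases according to where the supremum defining $\tau^2$ is attained.
\begin{enumerate}
\item \emph{Attained at a finite $\mathbf{v}^\star\neq\mathbf{0}$.} Since $\psi(\mathbf{v})\le\tfrac{\tau^2}{2}\|\mathbf{v}\|^2$ with equality at $\mathbf{v}^\star$, first-order tangency gives $\nabla\psi(\mathbf{v}^\star)=\tau^2\mathbf{v}^\star$. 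Then $D=\tau^2\|\mathbf{v}^\star\|^2/2$ and $\|\mathbf{m}\|^2=\tau^4\|\mathbf{v}^\star\|^2$, so the ratio is exactly $1/(2\tau^2)$.
\item \emph{Approached as $\mathbf{v}\to\mathbf{0}$.} Here $\tau^2=\lambda_{\max}(\boldsymbol{\Sigma}_{\mathbf{X}})$. Small tilts $\mathbf{v}=s\mathbf{u}$ along the top eigenvector give $D\approx s^2\mathbf{u}^{\mathsf{T}}\boldsymbol{\Sigma}_{\mathbf{X}}\mathbf{u}/2$ and $\mathbf{m}\approx s\boldsymbol{\Sigma}_{\mathbf{X}}\mathbf{u}$, so the ratio tends to $1/(2\lambda_{\max})$.
\item \emph{Approached as $\|\mathbf{v}\|\to\infty$.} In the GMM case the tilt concentrates on the component and direction maximizing $\mathbf{u}^{\mathsf{T}}\boldsymbol{\Sigma}_k\mathbf{u}$. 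The extra $\log(1/\pi_k)$ and mean terms in $D$ are $O(s)$ against $s^2$ growth, so the ratio tends to $1/(2\tau^2)$.
\end{enumerate}

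Finally, the tilts have unbounded density ratios. I would replace them with truncated tilts $dQ\propto\min(e^{\mathbf{v}^{\mathsf{T}}\mathbf{x}},C)\,dP$ and let $C\to\infty$. Gaussian tails give convergence of both the mean and the divergence by dominated convergence. I expect this truncation/limiting step and the at-infinity case to need the most care. Combined with the lower bound, this gives $R(D_{\max}-\delta)=\delta/(2\tau^2)+o(\delta)$, and the same expansion for $R_M$ with $M\ge2$.
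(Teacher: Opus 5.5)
Your proof is correct. It departs from the paper's in one step only. The global lower bound, the rare binary channel $P(J=1\mid\mathbf{x})=\varepsilon\,dQ/dP_{\mathbf{X}}$ with centroid codewords, the truncation to a bounded likelihood ratio, and the extension to $M\ge2$ through \eqref{eq:RM_mono} are essentially the paper's. Your estimate $D_{\mathrm{KL}}(Q_0\Vert P_{\mathbf{X}})\le\chi^2(Q_0\Vert P_{\mathbf{X}})=\varepsilon^2\chi^2(Q\Vert P_{\mathbf{X}})/(1-\varepsilon)^2$ is a slightly cleaner substitute for the paper's uniform Taylor expansion.

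The difference is in showing that bounded-ratio $Q$ can bring $D_{\mathrm{KL}}(Q\Vert P_{\mathbf{X}})/\|\mathbf{m}_Q\|^2$ down to $1/(2\tau^2)$. The paper does this non-constructively by duality. Let $S$ be the supremum of $\|\mathbb{E}_Q[\mathbf{X}]\|^2/(2D_{\mathrm{KL}}(Q\Vert P_{\mathbf{X}}))$. The Donsker--Varadhan formula then gives $\psi(\mathbf{u})\le\sup_{\mathbf{m}}\{\mathbf{u}^{\mathsf{T}}\mathbf{m}-\|\mathbf{m}\|^2/(2S)\}=\tfrac12S\|\mathbf{u}\|^2$, where $\psi$ is your cumulant generating function. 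Hence $\tau^2\le S$. This takes two lines, needs no case split and no differentiability of $\psi$, and uses nothing GMM-specific beyond finite exponential moments. An arbitrary near-optimal $Q$ is then truncated.

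You instead build the near-optimizers explicitly as exponential tilts. You split according to whether the supremum defining $\tau^2$ is attained at some $\mathbf{v}^\star\neq\mathbf{0}$, approached at $\mathbf{0}$, or approached at infinity. These cases are exhaustive: a maximizing sequence has a subsequence that converges to a nonzero point (where continuity gives attainment), converges to $\mathbf{0}$, or escapes to infinity. Each of your computations checks out. Your route costs more bookkeeping and uses the mixture structure in the third case. In exchange it gives an explicit near-optimal binary channel and a classification of $\tau^2$. It is either a stationary value with $\nabla\psi(\mathbf{v}^\star)=\tau^2\mathbf{v}^\star$, where the tilt meets the constant exactly, or $\lambda_{\max}(\boldsymbol{\Sigma}_{\mathbf{X}})$, or $\max_k\lambda_{\max}(\boldsymbol{\Sigma}_k)$.

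In the third case you should state that last identification explicitly. It follows from $\psi(\mathbf{v})\le b\|\mathbf{v}\|+\tfrac12\max_k\lambda_{\max}(\boldsymbol{\Sigma}_k)\|\mathbf{v}\|^2$ with $b:=\max_k\|\boldsymbol{\mu}_k\|$. This is exactly what makes your limiting ratio $1/(2\max_k\lambda_{\max}(\boldsymbol{\Sigma}_k))$ equal to $1/(2\tau^2)$.
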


\begin{proof}
The proof has three steps. First, a relative-entropy mean inequality gives a global
lower bound on both the Shannon RDF and its finite-alphabet restriction.
Second, it is shown that a binary test channel attains this bound to first order near
\(D=D_{\max}\). Finally, the monotonicity in \eqref{eq:RM_mono} extends the
binary result to every \(M\ge2\). The variational and truncation arguments
used below are established in Appendix~\ref{app:zero}.

\textit{Lower bound.}
Let \(P_{\widehat{\mathbf{X}}\mid\mathbf{X}}\) be any test channel with
distortion at most \(D\). The claim is immediate if
\(I(\mathbf{X};\widehat{\mathbf{X}})=\infty\); otherwise, let
\(Q_{\widehat{\mathbf{x}}}
:=P_{\mathbf{X}\mid\widehat{\mathbf{X}}=\widehat{\mathbf{x}}}\).
The conditional-mean property gives
\[
    D_{\max}-D
    \le
    \mathbb{E}\|
        \mathbb{E}[\mathbf{X}\mid\widehat{\mathbf{X}}]\|^2.
\]
For \(P_{\widehat{\mathbf{X}}}\)-almost every
\(\widehat{\mathbf{x}}\), applying the relative-entropy mean inequality
\eqref{eq:tau_kl} to \(Q_{\widehat{\mathbf{x}}}\) and then integrating
yields
\begin{align}
    D_{\max}-D
    &\le
    2\tau^2
    \int
    D_{\mathrm{KL}}(Q_{\widehat{\mathbf{x}}}\Vert P_{\mathbf{X}})
    P_{\widehat{\mathbf{X}}}(d\widehat{\mathbf{x}})
    \nonumber\\
    &=
    2\tau^2I(\mathbf{X};\widehat{\mathbf{X}}).
\end{align}
Taking the infimum over all admissible test channels proves
\begin{equation}
\label{eq:zero_global_lower}
    R^{ }(D)\ge\frac{D_{\max}-D}{2\tau^2},
    \qquad 0\le D\le D_{\max}.
\end{equation}

\textit{Binary upper bound.}
Fix \(\varepsilon\in(0,\tau^2)\). The variational characterization
\eqref{eq:tau_var} and the bounded-likelihood-ratio approximation in
Appendix~\ref{app:zero} provide a measure
\(Q\ll P_{\mathbf{X}}\) such that, with
\[
    \mathbf{m}:=\mathbb{E}_Q[\mathbf{X}],
    \qquad
    K_Q:=D_{\mathrm{KL}}(Q\Vert P_{\mathbf{X}}),
\]
\begin{equation}
\label{eq:bounded_Q_ratio}
    \frac{\|\mathbf{m}\|^2}{2K_Q}
    \ge\tau^2-\varepsilon,
    \qquad
    L:=\frac{dQ}{dP_{\mathbf{X}}}\le B
\end{equation}
for some finite \(B>1\). In particular, \(K_Q>0\) and
\(\mathbf{m}\ne\mathbf{0}\).

For \(0<q\le B^{-1}\), define
\[
    P(J=1\mid\mathbf{X})=qL,
    \qquad
    P(J=0\mid\mathbf{X})=1-qL.
\]
Then \(P(J=1)=q\), \(P_{\mathbf{X}\mid J=1}=Q\), and
\[
    \mathbb{E}[\mathbf{X}\mid J=1]=\mathbf{m},
    \qquad
    \mathbb{E}[\mathbf{X}\mid J=0]
    =-\frac{q}{1-q}\mathbf{m}.
\]
Using these conditional means as the two reconstruction codewords gives
\begin{equation}
\label{eq:binary_deficit}
    D_{\max}-\mathbb{E}\|
        \mathbf{X}-\mathbb{E}[\mathbf{X}\mid J]\|^2
    =
    \frac{q}{1-q}\|\mathbf{m}\|^2.
\end{equation}
For any sufficiently small \(\delta>0\), choose
\[
    q_\delta:=\frac{\delta}{\|\mathbf{m}\|^2+\delta}.
\]
Then \(q_\delta\le B^{-1}\), and \eqref{eq:binary_deficit} shows that the
distortion is exactly \(D_{\max}-\delta\). The rare-binary expansion established
in Appendix~\ref{app:zero} gives
\begin{align}
    I(\mathbf{X};J)
    &=
    q_\delta K_Q+O(q_\delta^2)
    \nonumber\\
    &=
    \frac{K_Q}{\|\mathbf{m}\|^2}\delta+O(\delta^2)
    \le
    \frac{\delta}{2(\tau^2-\varepsilon)}+O(\delta^2),
\end{align}
where the last inequality follows from
\eqref{eq:bounded_Q_ratio}. Consequently,
\[
    \limsup_{\delta\downarrow0}
    \frac{R_2^{ }(D_{\max}-\delta)}{\delta}
    \le
    \frac{1}{2(\tau^2-\varepsilon)}.
\]
Letting \(\varepsilon\downarrow0\) and combining this bound with the
lower bound \eqref{eq:zero_global_lower} yields
\[
    R_2^{ }(D_{\max}-\delta)
    =
    \frac{\delta}{2\tau^2}+o(\delta).
\]
Finally, for every \(M\ge2\), \eqref{eq:RM_mono} gives
\[
    R^{ }(D_{\max}-\delta)
    \le
    R_M^{ }(D_{\max}-\delta)
    \le
    R_2^{ }(D_{\max}-\delta).
\]
Applying \eqref{eq:zero_global_lower} to the left-hand side completes the
sandwich and proves the asserted expansion for \(R^{ }\) and every
\(R_M^{ }\), \(M\ge2\).
\end{proof}

In particular, \(M=2\) already matches the coefficient of \(\delta\) in
Theorem~\ref{thm:zero_rate_vec} and enlarging the SSQ codebook does not change
the first-order expansion as \(D\uparrow D_{\max}\).
While \(\tau^2\) is typically difficult to compute, as
\(\mathbf{u}\to\mathbf{0},
\log\mathbb{E}\bigl[e^{\mathbf{u}^{\mathsf{T}}\mathbf{X}}\bigr]
=
\tfrac12\mathbf{u}^{\mathsf{T}}\boldsymbol{\Sigma}_{\mathbf{X}}\mathbf{u}
+o(\|\mathbf{u}\|^2).
\)
Evaluating along a leading eigenvector and letting
\(\|\mathbf{u}\|\to0\) yields
\(\tau^2\ge\lambda_{\max}(\boldsymbol{\Sigma}_{\mathbf{X}})\).
A useful upper bound on the slope is therefore
\(1/(2\lambda_{\max}(\boldsymbol{\Sigma}_{\mathbf{X}}))\).
In the scalar case this is \(1/(2D_{\max})\).

\subsection{Limitations of SSQ}
\label{sec:floor}
As the rate increases, suboptimality of SSQ emerges because the
reproduction codebook is finite. We record and discuss two overlapping regions in this subsection: a strict inequality between the
SSQ and Shannon RDF wherever the Shannon
reproduction is non-atomic, and a geometric floor below which no
SSQ codebook of size \(M\) is feasible.

We first note a corollary in the high-rate regime, reusing the proof of Theorem \ref{thm:grm}.

\begin{corollary}[Strict inequality of the SSQ rate]
\label{cor:gaps}
Let \(M<\infty\) and \(D\in(0,D_{\max})\) satisfy \(R_M^{ }(D)<\infty\).
Then \(R_M^{ }(D)>R^{ }(D)\) in each of the following cases:
\begin{enumerate}
    \item \(\mathbf{X}\) is Gaussian;
    \item \(R_k(D)=\mathrm{SLB}_k(D)\) for every \(k\), and the
    equality-case reproduction of \eqref{eq:slb} is non-atomic.
\end{enumerate}

\end{corollary}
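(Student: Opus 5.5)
We argue by contradiction, using the SLB equality characterization and the structure of optimal test channels. Suppose $R_M(D)=R(D)<\infty$. The plan has two stages: first show that the infimum in \eqref{eq:rdf_M} is attained; then show that an attaining channel would be an RDF-achieving reproduction with at most $M$ atoms, which contradicts the non-atomic structure forced in both cases.

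\emph{Attainment.} By Lemma~\ref{lem:centroid} we may restrict to centroid codebooks. Every such codeword lies in the closed convex hull of the support, and has norm bounded via $\mathbb{E}\|\mathbf{X}\|^2/P(J=j)$ whenever $P(J=j)$ is bounded away from zero. Take a minimizing sequence of kernels $P_{J\mid\mathbf{X}}$ with $|\mathcal{J}|\le M$. The joint laws of $(\mathbf{X},J)$ are tight, since the $\mathbf{X}$-marginal is fixed and $J$ ranges over a finite alphabet. Pass to a weakly convergent subsequence and apply two facts:
\begin{itemize}
\item lower semicontinuity of mutual information under weak convergence with fixed input marginal;
\item continuity of the centroid distortion, using uniform integrability from $\mathbb{E}\|\mathbf{X}\|^2<\infty$. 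Vanishing-probability labels contribute a vanishing amount to the deficit $\mathbb{E}\|\mathbb{E}[\mathbf{X}\mid J]\|^2$, by Cauchy--Schwarz on each cell.
\end{itemize}
The limit is an admissible SSQ channel $P^\star$ with $I=R_M(D)$ and distortion $\le D$. Since $R(D)$ is strictly decreasing, $P^\star$ then achieves $R(D)$ and has distortion exactly $D$. So it is an RDF-optimal reproduction $\widehat{\mathbf{X}}$ supported on at most $M$ points. I expect this compactness step to be the main technical obstacle; the rest is structural.

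\emph{Case 2.} By Theorem~\ref{prop:comp_slb}, componentwise tightness gives $R(D)=\mathrm{SLB}(D)$. By the SLB equality criterion, every optimal reproduction satisfies $\mathbf{X}=\widehat{\mathbf{X}}+\mathbf{Z}$ with $\mathbf{Z}\sim\mathcal{N}(0,(D/d)I)$ independent of $\widehat{\mathbf{X}}$. Taking characteristic functions, $\phi_{\mathbf{X}}=\phi_{\widehat{\mathbf{X}}}\,e^{-D\|\mathbf{u}\|^2/(2d)}$. The Gaussian factor never vanishes, so the law of $\widehat{\mathbf{X}}$ is uniquely determined and equals the equality-case reproduction. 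That reproduction is non-atomic by hypothesis, whereas $P^\star$ is supported on at most $M$ points. This is a contradiction.

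\emph{Case 1.} If $\mathbf{X}$ is Gaussian, its optimal test channel is given by reverse water-filling, and the optimal output law is again uniquely determined. This follows either from uniqueness of the optimal output for the Gaussian/MSE problem (strict convexity in the output distribution when $R(D)>0$), or more concretely from the same deconvolution argument restricted to the active eigen-subspace. For $D<D_{\max}$ at least one eigendirection is active, so $\widehat{\mathbf{X}}$ has a nondegenerate Gaussian component along that direction and is non-atomic. Again this contradicts $|\operatorname{supp}\widehat{\mathbf{X}}|\le M$. Hence $R_M(D)>R(D)$ in both cases.

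If the attainment argument proves delicate, a fallback is a quantitative route. Any $M$-point channel with distortion $\le D$ gives $I\ge R(D)+c$ for some $c>0$, obtained from the strict inequality in the SLB chain $h(\mathbf{X}\mid\widehat{\mathbf{X}})<\tfrac d2\log(2\pi eD/d)$ for discrete $\widehat{\mathbf{X}}$ with finitely many atoms. However, turning this into a uniform gap again requires a compactness-type argument, so the first route is preferred.
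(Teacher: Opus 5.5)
Your proposal is correct and follows the paper's route: assume $R_M(D)=R(D)$, use compactness and lower semicontinuity to obtain an RDF-achieving reproduction with at most $M$ atoms, and contradict the non-atomicity of every optimal reproduction, which you justify more explicitly than the paper does (SLB equality plus characteristic-function deconvolution in case 2, active eigen-directions in case 1). The only difference is minor: you run the compactness over kernels on a finite label set with centroid codebooks, which avoids escaping atoms, whereas Appendix~\ref{app:gaps} works directly with $M$-atomic reproduction laws.
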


\begin{proof}
The passage from a non-atomic Shannon reproduction to a strict inequality
of infima is given in Appendix~\ref{app:gaps}.
It remains to identify that reproduction.

1)
A Gaussian RDF-achieving reproduction is Gaussian on the active
eigenspace \cite{Berger1971,CoverThomas}, as we established in
Appendix~\ref{app:rwf_vec}. That subspace is nontrivial for \(D<D_{\max}\), so
the reproduction is non-atomic.

2)
Theorem~\ref{prop:comp_slb} gives \(R^{ }(D)=\mathrm{SLB}(D)\). Every
RDF-achieving pair therefore uses the equality-case reproduction of
\eqref{eq:slb}, which is non-atomic by assumption.

\end{proof}

Thus, in the high-rate region where GRM is optimal due to Theorem \ref{thm:grm}, SSQ is strictly suboptimal for finite $M$, and the optimality is only achieved at the infinite limit by Lemma \ref{lem:RM_limit}.

While Corollary \ref{cor:gaps} applies to arbitrary $M$, there also exists a distortion floor for each finite $M$, as a finite SSQ codebook prohibits arbitrarily small distortion, regardless of rate. We present this result, which is largely classical, for completeness. The SSQ distortion-rate function can be written as
\begin{equation}
    D_M^{ }(R)
    :=
    \inf
    \bigl\{
    \mathbb{E}\|\mathbf{X}-\mathbf{a}_J\|^2
    :\
    I(\mathbf{X};J)\le R,\ |\mathcal{J}|\le M
    \bigr\}
\end{equation}
for the dual of \(R_M^{ }(D)\).

\begin{lemma}[Distortion floor]
\label{lem:floor}
Let \(\mathbf{X}\) be a nondegenerate finite GMM on \(\mathbb{R}^d\) and
let \(M<\infty\). Set
\begin{equation}
    \label{eq:floor_expr}
    D_M^{\star}
    :=
    \inf_{\mathbf{a}_1,\ldots,\mathbf{a}_M\in\mathbb{R}^d}
    \mathbb{E}
    \Bigl[
    \min_{1\le j\le M}
    \|\mathbf{X}-\mathbf{a}_j\|^2
    \Bigr].
\end{equation}
Then
\begin{enumerate}
    \item \(D_M^{\star}>0\);
    \item \(R_M^{ }(D)=+\infty\) for \(D<D_M^{\star }\);
    \item \(R_M^{ }(D_M^{\star })\le\log M\);
    \item as \(M\to\infty\),
    \begin{equation}
        \label{eq:zador}
        D_M^{\star }
        \sim
        C_d(\mathbf{X})\,M^{-2/d},
    \end{equation}
    where
    \(C_d(\mathbf{X})
    :=
    G_d
    \int_{\mathbb{R}^d}
    f_{\mathbf{X}}(\mathbf{x})^{d/(d+2)}\,d\mathbf{x}\)
    and \(G_d\in(0,\infty)\) is Zador's constant
    \cite{Zador1982} (\(G_1=1/12\) \cite{PanterDite1951}).
\end{enumerate}
\end{lemma}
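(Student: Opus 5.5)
I will prove the four items in order, leaning on Lemma~\ref{lem:centroid} as the bridge between SSQ test channels and deterministic nearest-neighbour quantization. The central observation concerns any SSQ pair \((P_{J\mid\mathbf{X}},\{\mathbf{a}_j\})\) with \(|\mathcal{J}|\le M\). Pointwise,
\[
\|\mathbf{x}-\mathbf{a}_J\|^2\ge\min_{j}\|\mathbf{x}-\mathbf{a}_j\|^2 ,
\]
so \(\mathbb{E}\|\mathbf{X}-\mathbf{a}_J\|^2\ge D_M^{\star}\) whatever kernel is used. If \(|\mathcal{J}|<M\), pad the codebook with repeated points; the infimum over \(M\) points is then still a lower bound. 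This gives item~2 immediately: for \(D<D_M^{\star}\) the constraint set in \eqref{eq:rdf_M} is empty, and by convention \(R_M(D)=+\infty\).

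For item~1, I argue by contradiction. Suppose \(D_M^{\star}=0\). Since \(\mathbb{E}\|\mathbf{X}\|^2<\infty\), the infimum in \eqref{eq:floor_expr} is attained. The standard compactness argument (as in Pollard's existence of optimal \(k\)-means centres) goes as follows. Codewords escaping to infinity contribute nothing, because the nearest-point distance for any point in a fixed ball is then controlled by the remaining codewords. One can therefore restrict to a bounded set of codebooks, possibly with fewer points, and use continuity of the objective. Even without attainment, a simpler bound suffices. By Lemma~\ref{lem:centroid}, any codebook is beaten by the centroids of its Voronoi cells, so it is enough to lower bound
\[
\mathbb{E}\min_j\|\mathbf{X}-\mathbf{a}_j\|^2 \;\ge\; \int_{B}\min_j\|\mathbf{x}-\mathbf{a}_j\|^2 f_{\mathbf{X}}(\mathbf{x})\,d\mathbf{x},
\]
where \(B\) is the unit ball around the origin, on which \(f_{\mathbf{X}}\ge c>0\) by nondegeneracy. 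The integrand vanishes only on the \(M\) codewords. Removing balls of radius \(r\) around them costs Lebesgue measure at most \(M\,\mathrm{vol}(B_r)\), so for small \(r\) a set of positive measure in \(B\) remains on which the integrand is at least \(r^2\). This gives a uniform positive bound \(c\,r^2(\mathrm{vol}(B)-M\mathrm{vol}(B_r))>0\), independent of the codebook. Hence \(D_M^{\star}>0\) directly, with no attainment needed.

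For item~3, I take the deterministic nearest-neighbour kernel \(J=\arg\min_j\|\mathbf{X}-\mathbf{a}_j\|^2\), with ties broken measurably, for a codebook achieving \(D_M^{\star}\). Then \(I(\mathbf{X};J)=H(J)\le\log M\) and the distortion equals \(D_M^{\star}\), so \(R_M(D_M^{\star})\le\log M\). Here attainment of the infimum is needed. I will prove it by the Pollard-type argument sketched above. First, minimizing sequences can be taken with all codewords in a fixed bounded region. Codewords far away serve only far tails, and dropping such a codeword changes the cost by at most a tail second moment; this fact, together with the observation that \(D_m^{\star}\) is strictly decreasing in \(m\) for a source with a density (which rules out wasted points), yields boundedness. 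Extracting a convergent subsequence and applying dominated convergence then gives a minimizer.

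Item~4 is Zador's theorem. Its hypotheses are an absolutely continuous law with \(\mathbb{E}\|\mathbf{X}\|^{2+\epsilon}<\infty\), and a finite GMM has all moments, so I will cite \cite{Zador1982,GrayNeuhoff1998} (and Graf--Luschgy for the rigorous form). I will note only that \(\int f_{\mathbf{X}}^{d/(d+2)}<\infty\), since Gaussian tails make the integrand integrable.

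The main obstacle is the attainment and compactness step in item~3. Item~1 avoids it, item~2 is trivial, and item~4 is cited. If attainment proves delicate, a fallback is to interpret item~3 via lower semicontinuity of \(R_M\). For codebooks with distortion \(D_M^{\star}+\eta\), the rate is at most \(\log M\) at distortion \(D_M^{\star}+\eta\). Letting \(\eta\downarrow0\) would then require closedness of the feasible set of SSQ channels, which amounts to the same compactness.
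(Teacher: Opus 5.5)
Your items 1--3 are correct, and item 2 is the paper's argument: pointwise domination of $\|\mathbf{x}-\mathbf{a}_J\|^2$ by the nearest-neighbour cost.

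For item 1 you take a different and more elementary route. The paper partitions a ball on which $f_{\mathbf{X}}\ge c$ into Voronoi cells and uses the moment-of-inertia bound $\int_V\|\mathbf{x}-\mathbf{a}\|^2\,d\mathbf{x}\ge\kappa_d\,\mathrm{vol}(V)^{1+2/d}$. This is a geometric fact: among sets of given volume, a ball centred at the point minimises this integral. It then applies convexity of $t\mapsto t^{1+2/d}$. You instead excise the $M$ balls of radius $r$ around the codewords. With $B$ the unit ball and $r=(2M)^{-1/d}$, your bound reads $D_M^\star\ge\tfrac{c}{2}\,\mathrm{vol}(B)\,(2M)^{-2/d}$. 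This is uniform in the codebook and of the same order $M^{-2/d}$ as the paper's bound, with no geometric inequality required. Your opening detour through attainment and Lemma~\ref{lem:centroid} is unnecessary.

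For item 3 you are more careful than the paper. The paper takes a \emph{nearly} optimal codebook and concludes from ``distortion arbitrarily close to the floor''. That only gives $R_M(D)\le\log M$ for every $D>D_M^\star$. Since $R_M$ is merely nonincreasing, reaching $D=D_M^\star$ itself needs attainment of the infimum in \eqref{eq:floor_expr}, or a closedness argument of the kind in Appendix~\ref{app:gaps}. Your Pollard-type existence argument supplies exactly this step: induction on $M$, strict decrease $D_M^\star<D_{M-1}^\star$ for a source with a density, and the tail-moment bound showing that escaping codewords can be dropped at vanishing cost.

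Item 4, however, cannot be settled by citation as you and the paper both propose, because the stated constant is not what Zador's theorem gives. In the Graf--Luschgy form you cite,
\[
\lim_{M\to\infty}M^{2/d}D_M^\star=G_d\Bigl(\int_{\mathbb{R}^d}f_{\mathbf{X}}(\mathbf{x})^{d/(d+2)}\,d\mathbf{x}\Bigr)^{(d+2)/d},
\]
so $C_d(\mathbf{X})$ as written is missing the exponent $(d+2)/d$, and item 4 is false as stated. A scaling check exposes this: $D_M^\star(s\mathbf{X})=s^2D_M^\star(\mathbf{X})$, whereas $\int f_{s\mathbf{X}}^{d/(d+2)}=s^{2d/(d+2)}\int f_{\mathbf{X}}^{d/(d+2)}$. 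Concretely, take $d=1$ and a single component $\mathcal{N}(0,\sigma^2)$. Panter--Dite gives $D_M^\star\sim\frac{1}{12}\bigl(\int f^{1/3}\bigr)^3M^{-2}=\frac{\pi\sqrt{3}}{2}\sigma^2M^{-2}$, while the stated $C_1(\mathbf{X})=\frac{1}{12}\int f^{1/3}$ scales like $\sigma^{2/3}$. You should check a cited theorem against the exact formula being claimed. Once the exponent is restored, your remark that $0<\int f_{\mathbf{X}}^{d/(d+2)}<\infty$, together with the finite-moment hypothesis, is all that item 4 needs.
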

\begin{proof}
    See Appendix~\ref{app:floor}.
\end{proof}

One should note that there is no direct relationship between the distortion floor and conditions of Corollary \ref{cor:gaps}. On the other hand, while the floor vanishes as \(M\to\infty\), for applications of fitted GMM, $K$ may not be inherent and can be a controllable parameter as well. We therefore further study below the joint asymptotic relation, considering that $M,K$ jointly go to infinity. 

\begin{proposition}[Joint \(M,K\) vanishing of the floor]
\label{prop:MK}
Let \(\{\mathbf{X}^{(K)}\}_{K\ge1}\) be a sequence of \(K\)-component GMMs on
\(\mathbb{R}^d\) whose component eigenvalues lie in a fixed compact
interval \([\lambda_{\min},\lambda_{\max}]\subset(0,\infty)\), and let
\(M=M(K)\).
\begin{enumerate}
    \item If \(M/K\to\infty\), then
    \(D_M^{\star }(\mathbf{X}^{(K)})\to0\).
    \item If \(\pi_k=1/K\) and
    \(\|\boldsymbol{\mu}_k-\boldsymbol{\mu}_{k'}\|\ge 4\sqrt{\lambda_{\min}}\)
    for all \(k\neq k'\), then
    \(D_M^{\star }(\mathbf{X}^{(K)})
    =\Omega\bigl((K/M)^{2/d}\bigr)\).
    \item Under the hypotheses of (2), if \(M/K\) is bounded then
    \(\liminf_{K\to\infty}
    D_M^{\star }(\mathbf{X}^{(K)})>0\).
\end{enumerate}
\end{proposition}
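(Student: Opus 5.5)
The plan is to prove the three parts separately. Part~1 is a per-component codebook allocation combined with a uniform Gaussian quantization bound. Part~2 is a volume-counting argument on disjoint balls around the component means. Part~3 follows from Part~2, provided the implied constant there does not depend on \(K\).

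\emph{Part 1.} Set \(m:=\lfloor M/K\rfloor\), so \(m\to\infty\) by assumption. For each component \(k\), place \(m\) codewords designed for \(\mathcal{N}(\boldsymbol{\mu}_k,\boldsymbol{\Sigma}_k)\). The minimum over all \(Km\le M\) codewords is at most the minimum over component \(k\)'s own codewords. Hence
\[
D_M^\star(\mathbf{X}^{(K)})\le\sum_k\pi_k D_m^\star\bigl(\mathcal{N}(\boldsymbol{\mu}_k,\boldsymbol{\Sigma}_k)\bigr)\le\sup_{\boldsymbol{\Sigma}}D_m^\star\bigl(\mathcal{N}(\mathbf{0},\boldsymbol{\Sigma})\bigr),
\]
where the supremum runs over covariances with spectrum in \([\lambda_{\min},\lambda_{\max}]\). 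The second inequality uses translation invariance, which removes any dependence on the (possibly unbounded) means.

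It then suffices to show that this supremum tends to zero as \(m\to\infty\). For a radius \(r\), I will place a cubic grid of \(m-1\) points covering \([-r,r]^d\), plus one point at the origin. This gives distortion at most
\[
d\,(2r)^2(m-1)^{-2/d}+\sup_{\boldsymbol{\Sigma}}\mathbb{E}\bigl[\|\mathbf{Z}\|^2\mathbf{1}\{\|\mathbf{Z}\|_\infty>r\}\bigr],\qquad \mathbf{Z}\sim\mathcal{N}(\mathbf{0},\boldsymbol{\Sigma}).
\]
The tail term is bounded uniformly because \(\boldsymbol{\Sigma}\preceq\lambda_{\max}I\), for instance by a Gaussian fourth-moment or Chernoff bound. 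Choosing \(r=r(m)\to\infty\) slowly then yields the claim. I expect this uniformity over the parameter family to be the only delicate step in Part~1.

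\emph{Part 2.} Let \(\rho:=2\sqrt{\lambda_{\min}}\). The separation hypothesis makes the balls \(B_k:=B(\boldsymbol{\mu}_k,\rho)\) pairwise disjoint up to null sets. On \(B_k\), the \(k\)-th component density is at least
\[
c_0:=(2\pi\lambda_{\max})^{-d/2}\exp\bigl(-\rho^2/(2\lambda_{\min})\bigr)=(2\pi\lambda_{\max})^{-d/2}e^{-2}.
\]
Using \(\pi_k=1/K\) and dropping the other components, \(f_{\mathbf{X}}\ge c_0/K\) on \(U:=\bigcup_k B_k\), which has volume \(KV\) with \(V:=\omega_d\rho^d\).

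Now fix any codebook \(\mathbf{a}_1,\ldots,\mathbf{a}_M\). The set of points within distance \(t\) of some codeword has volume at most \(M\omega_d t^d\). Choose \(t\) so that \(M\omega_d t^d=KV/2\). Then at least half of \(U\) by volume lies at distance at least \(t\) from every codeword, and therefore
\[
\mathbb{E}\min_j\|\mathbf{X}-\mathbf{a}_j\|^2\ \ge\ \frac{c_0}{K}\cdot\frac{KV}{2}\cdot t^2=\frac{c_0V}{2}\Bigl(\frac{KV}{2M\omega_d}\Bigr)^{2/d}.
\]
This is \(\Omega\bigl((K/M)^{2/d}\bigr)\), with a constant depending only on \(d,\lambda_{\min},\lambda_{\max}\). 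Taking the infimum over codebooks proves Part~2.

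\emph{Part 3.} If \(M/K\le C\), the bound in Part~2 gives \(D_M^\star(\mathbf{X}^{(K)})\ge c\,C^{-2/d}>0\) for every \(K\). Hence the \(\liminf\) is positive, because the constant \(c\) from Part~2 is independent of \(K\).
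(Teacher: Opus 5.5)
Your proof is correct. Parts~1 and~3 follow the paper's route. The paper also gives each component \(\lfloor M/K\rfloor\) codewords and bounds the union codebook by a Gaussian quantization bound that is uniform in the covariance. For that bound it uses a product uniform scalar quantizer in principal axes on \([-A,A]\), with \(A=\sqrt{4\lambda_{\max}\log\ell}\), overflow clamped to the endpoints, and a Mills-ratio tail estimate. This yields the explicit rate \(O\bigl(\lambda_{\max}(1+\log(M/K))(M/K)^{-2/d}\bigr)\), which matches the lower bound of Part~2 up to the logarithm. Your grid-plus-origin construction with \(r(m)\to\infty\) ``slowly'' only gives convergence to zero, although choosing \(r\asymp\sqrt{\lambda_{\max}\log m}\) would recover the same rate.

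Part~2 is where you genuinely differ. The paper argues ball by ball. With \(r=\sqrt{\lambda_{\min}}\), it lets \(m_k\) be the number of codewords in the \(2r\)-neighborhood of \(\boldsymbol{\mu}_k\). It applies the Voronoi second-moment bound \(\kappa_d v_j^{1+2/d}\) from the proof of Lemma~\ref{lem:floor} inside each \(B_k\), then minimizes \(\sum_k m_k^{-2/d}\) subject to \(\sum_k m_k\le M\) by convexity, treating \(m_k=0\) separately. Your single covering argument instead observes that the \(M\) balls of radius \(t\) cover at most half of the disjoint union \(U\), and the rest of \(U\) is at distance at least \(t\) from every codeword. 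This dispenses with the per-cell constant \(\kappa_d\), the allocation variables, the convexity step and the case split. It also sidesteps a point the ball-by-ball version has to handle: a codeword outside the \(2r\)-neighborhood of \(\boldsymbol{\mu}_k\) can still be the nearest codeword for some points of \(B_k\), so \(B_k\) can meet more than \(m_k\) Voronoi cells. You also get an explicit constant, \(\tfrac12 c_0V\rho^2\,2^{-2/d}\).

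What the paper's version buys in exchange is the allocation picture. Its lower bound is weakest at the balanced split \(m_k=M/K\), which is exactly the union codebook of Part~1, so the two parts visibly describe the same extremal configuration.
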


\begin{proof}

1)
Set \(m:=\lfloor M/K\rfloor\). The hypothesis \(M/K\to\infty\) forces
\(m\to\infty\).
It is enough to exhibit one \(M\)-point codebook whose MSE tends to zero.
Allocate \(m\) codewords to each of the \(K\) components (using at most
\(M\) points) and take the union of the \(K\) codebooks.

Fix a Gaussian \(\mathcal{N}(\boldsymbol{\mu},\boldsymbol{\Sigma})\) with eigenvalues at
most \(\lambda_{\max}\).
Rotate to principal axes and form a product uniform scalar quantizer with
\(\ell:=\max\{\lfloor m^{1/d}\rfloor,1\}\) levels per coordinate on
\([-A,A]\), where \(A:=\sqrt{4\lambda_{\max}\log\ell}\), sending overflow
to the nearest endpoint.
Let \(\varphi\) and \(\Phi\) denote the standard normal density and cdf.
The Mills ratio \cite{Mills1926} \(R(x)=(1-\Phi(x))/\varphi(x)\) satisfies \(R(x)\le 1/x\)
for \(x>0\). Thus if
\(Z\sim\mathcal{N}(0,\sigma^2)\) and \(\alpha=A/\sigma\), then
\(\mathbb{E}[(Z-A)_+^2]\le\sigma^2\varphi(\alpha)/\alpha\).
The choice of \(A\) gives \(\alpha\ge 2\sqrt{\log\ell}\), hence the
two-sided overflow second moment is
\(O\bigl(\lambda_{\max}(\log\ell)\,\ell^{-2}\bigr)\) uniformly for
\(\sigma^2\in(0,\lambda_{\max}]\). The interior granular error is at most
\((A/\ell)^2=O\bigl(\lambda_{\max}(\log\ell)\,\ell^{-2}\bigr)\).
Summing \(d\) coordinates, the vector MSE is
\(O\bigl(\lambda_{\max}(1+\log m)\,m^{-2/d}\bigr)\), with a constant
depending only on \(d\).
The union codebook therefore satisfies
\[
D_M^{\star }(\mathbf{X}^{(K)})
=
O\bigl(\lambda_{\max}(1+\log(M/K))(M/K)^{-2/d}\bigr)
\to0.
\]

2)
Set \(r:=\sqrt{\lambda_{\min}}\).
On the ball \(B_k\) of radius \(r\) about \(\boldsymbol{\mu}_k\), the component
density is bounded below by a positive constant depending only on
\(\lambda_{\max}\) and \(d\).
With \(\pi_k=1/K\), the mixture density is therefore bounded below by
\(c/K\) on each \(B_k\), for some \(c>0\) independent of \(K\).
The separation \(\|\boldsymbol{\mu}_k-\boldsymbol{\mu}_{k'}\|\ge 4r\) makes the
enlarged balls of radius \(2r\) pairwise disjoint, so each codeword lies
in at most one such neighborhood.

Let \(m_k\) be the number of codewords in the \(2r\)-neighborhood of
\(\boldsymbol{\mu}_k\), so \(\sum_k m_k\le M\).
If some \(m_k=0\), that ball contributes a \(K\)-independent positive
amount and the lower bound only increases.
Repeating the volume argument of Lemma~\ref{lem:floor} on each \(B_k\)
with \(m_k\ge1\) therefore yields
\[
D_M^{\star }(\mathbf{X}^{(K)})
\ge
\frac{c}{K}
\sum_{k=1}^{K}
\kappa_d v^{1+2/d}\, m_k^{-2/d},
\]
where \(v\) is the volume of a ball of radius \(r\).
The right-hand side, subject to \(\sum_k m_k\le M\), is minimized when
\(m_k=M/K\), and that value scales with \(\Omega\bigl((K/M)^{2/d}\bigr)\).

3)
If \(M/K\) is bounded, then \(K/M\) is bounded away from zero, so the
right-hand side of (ii) is bounded away from zero.
\end{proof}

Therefore, we retrieve the vanishing floor whenever $M>>K$, which is sufficient for general GMMs and necessary for equal-weight well-separated GMMs. Combined with
Lemma~\ref{lem:RM_limit}, the convergence to rate-distortion function is guaranteed for each fixed source, while in a low-distortion region rate-distortion function remains non-attainable, as suggested by Corollary~\ref{thm:grm}.

\begin{figure*}[t]
    \centering
    \includegraphics[width=\textwidth]{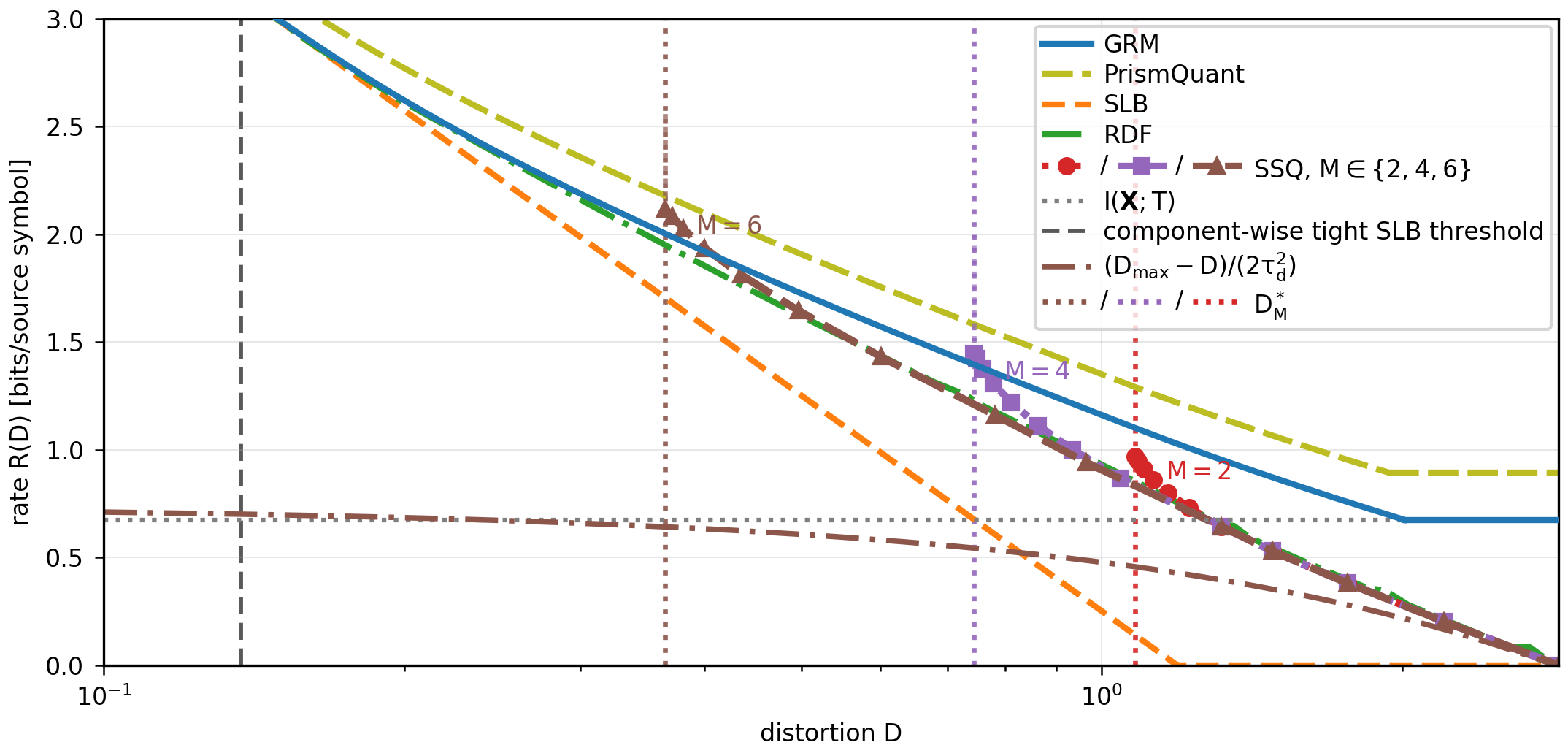}
    \caption{Two-component GMM on \(\mathbb{R}^2\) with
    \(\pi_1=0.3\), \(\pi_2=0.7\),
    \(\boldsymbol{\mu}_1=(-1,0)^{\mathsf{T}}\),
    \(\boldsymbol{\mu}_2=(1,0)^{\mathsf{T}}\),
    \(\boldsymbol{\Sigma}_1=\mathrm{diag}(0.1,1)\),
    and
    \(\boldsymbol{\Sigma}_2=\bigl(\begin{smallmatrix}2&0.8\\0.8&0.4\end{smallmatrix}\bigr)\).
    Rates are in bits per source symbol.
    Shown are GRM \eqref{eq:R_GRM_vec},
    PrismQuant \cite{PrismQuant2026} (MAP label entropy-coded at
    \(H(\widehat{C})\)),
    the SLB \eqref{eq:slb},
    a BA estimate of \(R(D)\),
    and SSQ information rates \(R_M(D)\) for \(M\in\{2,4,6\}\).
    Reference lines indicate the GRM label cost \(I(\mathbf{X};T)\),
    the fully active threshold of Corollary~\ref{thm:grm}, the zero-rate
    bound of Theorem~\ref{thm:zero_rate_vec}, and the high-rate SSQ
    plateaus on the discretized source.}
    \label{fig:bimodal_2d}
\end{figure*}

\section{Numerical Results}
\label{sec:numerics}

In this section, we present three complementary numerical studies. The first compares
GRM and SSQ with the Shannon RDF and related bounds across the high/low-rate regimes of a fixed GMM. The second examines the
finite-blocklength overhead of channel simulation and compares stochastic and
deterministic quantization. The third varies the numbers of mixture
components and reconstruction points to test the distortion-floor
scaling of Proposition~\ref{prop:MK}. Rates, where plotted, are reported
in bits per source symbol, rather than the nats used in the analytical
sections.

\subsection{High- and Low-Rate Behavior}

Fig.~\ref{fig:bimodal_2d} illustrates the high/low-rate
constructions of Sections~\ref{sec:grm}--\ref{sec:lowrate} for the
two-component GMM specified in its caption.

We approximate \(h(\mathbf{X})\) and \(I(\mathbf{X};T)\) by trapezoidal
quadrature of the mixture density and of the posterior
\(P_{T\mid\mathbf{X}}\) on a uniform \(41\times 41\) product grid.
Each axis is truncated to a bounded interval covering six component-wise
standard deviations about every mean, so the tails of the unbounded
mixture are neglected (cf.\ \cite{AkyolRose2014}). From these quantities
we evaluate the SLB \eqref{eq:slb} and the closed-form GRM expression
\eqref{eq:R_GRM_vec}. This comparison is for the asymptotic infinite blocklength regime:
the rate of sampling from the posterior kernel in GRM is \(I(\mathbf{X};T)\), without a
finite-blocklength overhead. The PrismQuant baseline
\cite{PrismQuant2026} uses the hard MAP label
\(\widehat{C}=\arg\max_k P(T=k\mid\mathbf{X})\), entropy-coded at
\(H(\widehat{C})\), followed by the component Karhunen-Lo\`eve transform
of \(\boldsymbol{\Sigma}_{\widehat{C}}\) and independent scalar codes. The
residual \(\mathbf{X}\mid\widehat{C}=k\) is the mixture restricted to a
MAP cell. We therefore
evaluate the KLT coefficients of \(\mathbf{X}-\boldsymbol{\mu}_k\) on that
cell by quadrature and charge each coordinate the scalar Gaussian RDF
of its empirical second moment, with a single water level across
coordinates and cells.

A numerical estimate of the Shannon RDF \(R(D)\) is computed by the
BA algorithm \cite{Blahut1972} after discretizing the source
on a coarser \(25\times 25\) grid of the same support, retaining the
dominant probability mass.
SSQ curves for \(M\in\{2,4,6\}\) are the rates \(R_M(D)\) obtained from \eqref{eq:rdf_M}. A reconstruction index
\(J\) of cardinality \(M\) is optimized by the deterministic-annealing /
BA iteration of Appendix~\ref{app:ba} on that same discrete
source, and is likewise charged its simulation rate \(I(\mathbf{X};J)\).
Thus the Shannon RDF and SSQ curves use a common discretized source, and as such both serve as numerical approximations to their continuous-source
counterparts.

The directional proxy \(\tau^2\) in Theorem~\ref{thm:zero_rate_vec} is
evaluated from the closed-form log-moment generating function of the GMM
by a search over directions \(\mathbf{u}\in\mathbb{R}^2\).
The vertical line labeled as the component-wise tight SLB threshold is
the fully active boundary \(D=d\min_{k,m}\lambda_{k,m}\) of
Corollary~\ref{thm:grm}.
The dotted vertical lines are the high-rate nearest-neighbor plateaus
of the corresponding SSQ curves on that discretized source
(Lemma~\ref{lem:floor}).

\begin{figure*}[t]
    \centering
    \includegraphics[width=\textwidth]{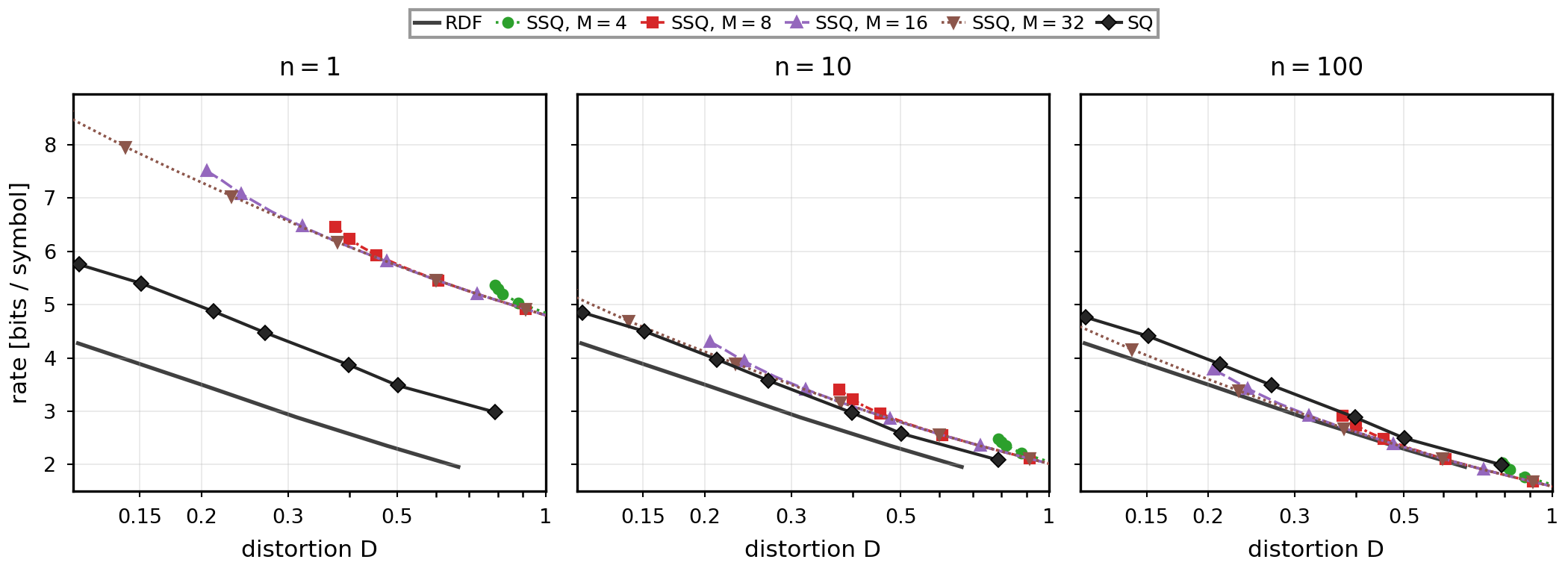}
    \caption{Variable-length finite-blocklength comparison on a
    three-component GMM on \(\mathbb{R}^2\) with
    \(\pi=(0.42,0.33,0.25)\),
    \(\boldsymbol{\mu}_1=(-1.7,0.6)^{\mathsf{T}}\),
    \(\boldsymbol{\mu}_2=(1.8,1.1)^{\mathsf{T}}\),
    \(\boldsymbol{\mu}_3=(0.15,-1.8)^{\mathsf{T}}\),
    \(\boldsymbol{\Sigma}_1=\bigl(\begin{smallmatrix}0.16&0.05\\0.05&0.85\end{smallmatrix}\bigr)\),
    \(\boldsymbol{\Sigma}_2=\bigl(\begin{smallmatrix}1.20&-0.40\\-0.40&0.30\end{smallmatrix}\bigr)\),
    and
    \(\boldsymbol{\Sigma}_3=0.38\,I_2\).
    Panels correspond to blocklengths \(n\in\{1,10,100\}\).
    The solid curve is a BA estimate of the Shannon RDF
    \(R(D)\). The remaining curves are SSQ rates
    \(R_M^{}\) for \(M\in\{4,8,16,32\}\), each lifted by the one-shot channel simulation bound \eqref{eq:rec_fin}--\eqref{eq:rec_fin_nlarge}.
    Diamonds (from right to left) are SQ operating points at codebook sizes
    \(\{4,6,8,12,16,24,32\}\), charged \(H(J)+1/n\) by
    \eqref{eq:sq_fin}.
    Rates are in bits per source symbol.}
    \label{fig:trimodal}
\end{figure*}

To the left of the fully active threshold, GRM coincides with the SLB, in
agreement with Corollary~\ref{thm:grm}, while the BA estimate
closely follows these curves up to discretization error.
PrismQuant lies above GRM, where the gap reflects both the
larger label cost \(H(\widehat{C})\) relative to \(I(\mathbf{X};T)\) and
the mismatch between the MAP-cell residual and the component Gaussian.
Beyond the threshold, GRM remains an achievable upper bound and
numerically separates from both the SLB and RDF estimates.
Near \(D_{\max}\) the SSQ schemes track the RDF, consistent
with the first-order expansion of Theorem~\ref{thm:zero_rate_vec}
(overlaid as \((D_{\max}-D)/(2\tau^2)\)). As the distortion decreases, the SSQ curves depart from the RDF and approach their finite-codebook distortion floors, as predicted by Lemma~\ref{lem:floor}.
The dotted horizontal line is the GRM label cost \(I(\mathbf{X};T)\), below which
GRM cannot operate.

\subsection{Finite-Blocklength Regime}

The preceding comparison charges each stochastic kernel at its
single-letter mutual information bound which is achievable only in the infinite-blocklength regime. We next account for the overhead of
variable-length prefix coding at finite blocklength and compare SSQ with
deterministic SQ.

Fig.~\ref{fig:trimodal} compares SSQ and SQ with a
numerical RDF for a three-component anisotropic GMM in \(\mathbb{R}^2\),
with parameters listed in the caption.
A BA estimate of \(R(D)\) is computed on a pruned
\(25\times 25\) discretization of the source. The SSQ curves
\(R_M^{}\) of \eqref{eq:rdf_M} for \(M\in\{4,8,16,32\}\) and the SQ
operating points use a pruned \(29\times 29\) discretization of the
same support. Before finite-blocklength overhead is
added, the stochastic kernels are charged at \(I(\mathbf{X};J)\), whereas
the deterministic SQ index is charged at rate \(H(J)\). The denser set of SQ
codebook sizes listed in the caption provides additional operating-point
coverage for the deterministic tradeoff.

We evaluate variable-length prefix codes at
\(n\in\{1,10,100\}\), and non-prefix one-to-one encodings are not included.
Distortion is the expected per-letter MSE of the memoryless
reconstruction for both schemes. For each SSQ curves, the
discrete index generated by the \(n\)-fold product kernel is channel simulated exactly. For each
realization of the common randomness, the encoder uses a binary prefix
code \cite{FlamichGunduzREC,LiLayered2025}. We denote the one-shot prefix-free channel simulation
bound as \(\overline{L}(\,\cdot\,)\). Applying to the block mutual information \(nI\)
gives
\begin{equation}
\label{eq:rec_fin}
    \overline{R}_n
    =
    \frac{\overline{L}(nI)}{n}
    =
    I
    +
    \frac{\overline{L}(nI)-nI}{n}.
\end{equation}
For \(\overline{L}\), we use the one-shot bounds of
\cite{LiLayered2025}. Over the range plotted in Fig.~\ref{fig:trimodal},
the \(n=1\) curves use
\begin{equation}
\label{eq:rec_fin_n1}
    \overline{L}(s)
    =
    s+\log_2(s+3.4)+1,
\end{equation}
whereas the \(n\in\{10,100\}\) curves use
\begin{equation}
\label{eq:rec_fin_nlarge}
    \overline{L}(s)
    =
    s+\log_2(s+5.22)+0.76,
\end{equation}
where \(s=nI\). These are the respective tighter bounds at the block
mutual-information values encountered in the experiment.
For SQ, \(J^n\) instead is an i.i.d.\ discrete source of entropy \(H(J)\).
By the classical prefix-code bound
\cite[Thm.~5.4.1]{CoverThomas},
\begin{equation}
\label{eq:sq_fin}
    H(J)
    \le
    \frac{\mathbb{E}[L_n]}{n}
    <
    H(J)+\frac{1}{n},
\end{equation}
where the \(+1\) bit bounds the overhead of the whole block.
Fig.~\ref{fig:trimodal} plots the upper end \(H(J)+1/n\).

At \(n=1\), the one-shot simulation overhead,
\(\log_2(I+3.4)+1\), places the SSQ curves several bits above
their information rates, whereas SQ pays at most the one-bit prefix
overhead in \eqref{eq:sq_fin}. Over the plotted range, the normalized one-shot simulation
overhead is below one bit per letter at \(n=10\) and approximately one
tenth of a bit per letter at \(n=100\). Accordingly, at \(n=100\),
SSQ curves lie close to \(R(D)\), while SQ remains
separated by the deterministic nearest-neighbor gap
\(H(J)-R_M(D)\).

We note that whereas SSQ becomes a better choice as $n$ increases, since Rényi $\infty$-divergence of the $n$-fold product is linear in $n$, the computational complexity of SSQ is constant in distortion but increases exponentially with $n$ due to channel simulation. On the other hand, the complexity of SQ is linear in $n$ and polynomial in $1/D$. Hence, where energy and latency are sensitive, SQ may remain viable in the high-distortion regime.

\subsection{Distortion-Floor Scaling}

The first experiment displayed finite-codebook floors for one fixed GMM.
We now vary both the number \(K\) of mixture components and the codebook
size \(M\) to examine their joint scaling in
Proposition~\ref{prop:MK}.

\begin{figure}[t]
    \centering
    \subfloat[]{\includegraphics[width=\columnwidth]{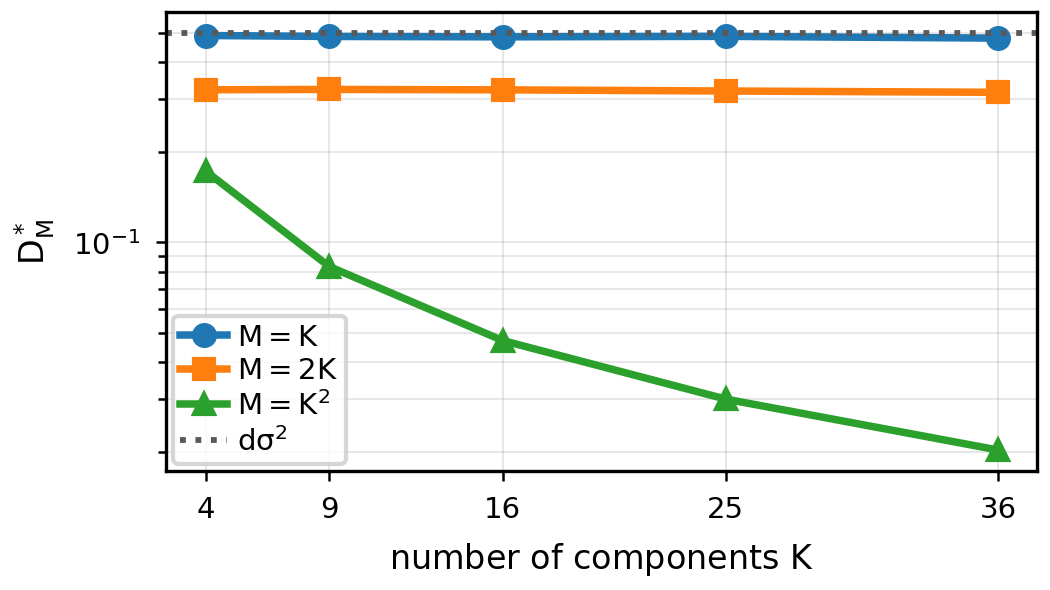}%
    \label{fig:prop_mk_a}}
    \\
    \subfloat[]{\includegraphics[width=\columnwidth]{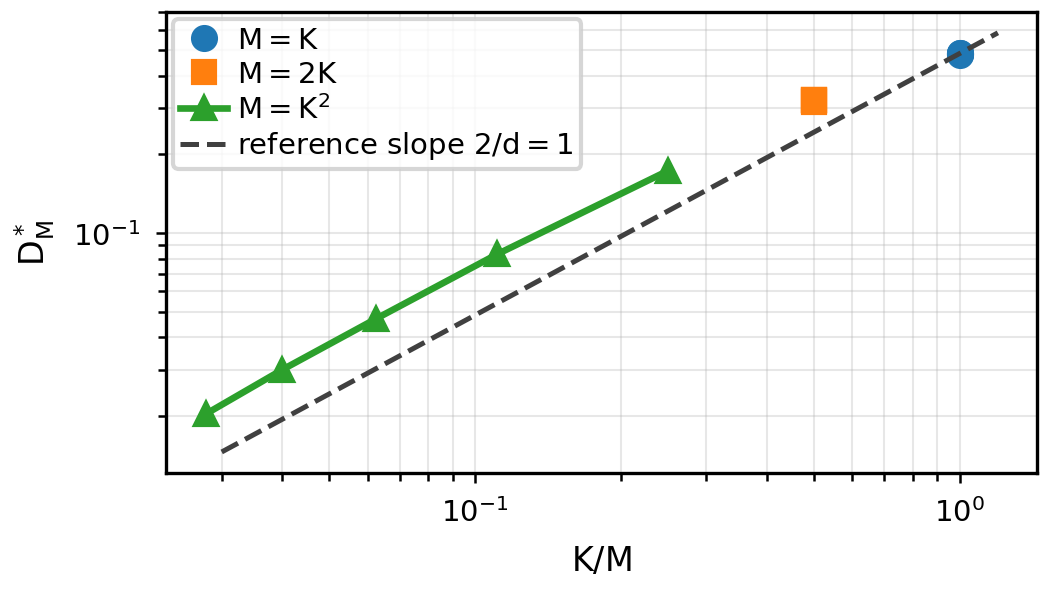}%
    \label{fig:prop_mk_b}}
    \caption{Equal-weight isotropic GMMs on \(\mathbb{R}^2\) with
    component variance \(\sigma^2=0.25\) and means on a square lattice of
    spacing \(2.5\ge 4\sqrt{\lambda_{\min}}\).
    Estimated floors \(D_M^{\star}\) for \(K\in\{4,9,16,25,36\}\) and
    the schedules \(M=K\), \(M=2K\), and \(M=K^2\).
    (a)~Floor versus \(K\): when \(M/K\) remains bounded, the floor
    stays away from zero; \(M/K\to\infty\) drives it to zero.
    (b)~Floor versus \(K/M\), with a reference of slope \(2/d=1\),
    consistent with the \(\Omega\bigl((K/M)^{2/d}\bigr)\) scaling of
    Proposition~\ref{prop:MK}. In panel (b), the \(M=K\)
    (resp.\ \(M=2K\)) markers coincide across \(K\).}
    \label{fig:prop_mk}
\end{figure}

Fig.~\ref{fig:prop_mk} presents the Monte Carlo estimates of the
nearest-neighbor floor $D^*_M$ in \eqref{eq:floor_expr} for the GMM family in its
caption. Each estimate uses at least
\(\max\{40000,40M\}\) source samples and iterative nearest-neighbor
codebook optimization. The component eigenvalues equal
\(\lambda_{\min}=0.25\), and the lattice spacing \(2.5\) exceeds
\(4\sqrt{\lambda_{\min}}=2\); hence the family satisfies the
equal-weight separation hypothesis of Proposition~\ref{prop:MK}.

Panel~\ref{fig:prop_mk_a} shows the dependence of $D^*_M$ on \(K\). For \(M=K\),
the estimated floor remains within \(4.1\%\) of
\(d\sigma^2=0.5\), which is consistent since reconstruction at \(\boldsymbol{\mu}_k\) gives
\(\mathbb{E}\bigl[\|\mathbf{X}-\boldsymbol{\mu}_k\|^2\bigm|T=k\bigr]
=\operatorname{tr}(\sigma^2 I_d)=d\sigma^2\), while the component separation
makes cross-component assignments rare. For \(M=2K\), the floor remains
near \(0.32\). Both schedules have bounded \(M/K\) and remain bounded
away from zero, as asserted by claim 3) of Proposition \ref{prop:MK}. In contrast, \(M=K^2\) gives
\(M/K=K\to\infty\), and the floor decreases from \(0.17\) at \(K=4\) to
\(0.020\) at \(K=36\), consistent with 1) in Proposition \ref{prop:MK}.

Panel~\ref{fig:prop_mk_b} displays the estimates against \(K/M\).
For \(d=2\), claim~2) of Proposition~\ref{prop:MK} gives the lower-bound
scaling \(\Omega(K/M)\). On the log-log axes, the \(M=K^2\) points are
nearly collinear with the reference slope \(1=2/d\). Since
\(K/M=1/K\) under this schedule, the observed behavior is approximately
\(D_{K^2}^{\star}\propto K^{-1}\), consistent with the exponent in
claim~2). Together, the two panels also illustrate the dichotomy between
claims~1) and~3), that the floor decreases when \(M/K\) diverges but remains
bounded away from zero when \(M/K\) stays bounded.

\section{Conclusion and Future Work}
\label{sec:discussion}

This paper studied MSE rate-distortion trade-off of finite mixtures, with an emphasis on GMMs, through two constructions adapted for the two opposite
distortion regimes. GRM exploits the latent mixture structure at high rate, whereas SSQ uses a finite reconstruction alphabet to approach the zero-rate boundary. Together, they clarify both the value and the cost
of retaining the mixture structure in an operational coding scheme.

GRM first conveys a sample from the posterior mixture label through channel simulation at rate $I(\mathbf{X}; T)$, and then applies a component-conditional Gaussian code. For a finite mixture, GRM meets the
SLB, and hence the RDF, if and only if every component meets its corresponding SLB. For a GMM, this criterion yields exact optimality throughout the fully active interval, i.e., whenever $D/d$ is below the smallest eigenvalue among all component covariances, so that reverse water-filling is active in every direction of all Gaussian components. Outside that interval, GRM remains an
achievable scheme, but not necessarily meeting the RDF. The posterior sampling cost of GRM sets a lower bound on its rate, which becomes provably suboptimal in the high-distortion regime.

We then propose SSQ for that regime. SSQ optimizes a finite reconstruction index directly; and
therefore, avoids the fixed posterior-label cost of GRM. Every codebook size \(M\ge2\) attains the first-order zero-rate expansion of the RDF, and \(R_M^{ }(D)\)
converges to the RDF as \(M\to\infty\). At any fixed finite \(M\), however, SSQ has a nearest-neighbor distortion floor and is strictly suboptimal whereever the Shannon reproduction is non-atomic, in particular where GRM meets SLB. Proposition~\ref{prop:MK}
further relates the vanishing of the distortion floor to the growth of \(M\)
relative to the number of mixture components.

Both constructions rely on exact channel simulation with unrestricted common randomness. Two natural extensions are the finite-blocklength implementation of GRM, where the Gaussian stage would be realized by lattice or dithered quantization, and the tradeoff that can be achieved by relaxing the exactness of the channel simulation stage.

Further theoretical questions include characterizing GRM outside the fully active interval and determining whether SSQ with some finite \(M\) can be exactly optimal on an interval adjacent to \(D_{\max}\). A more practical direction is to evaluate GRM in learned mixture-based codecs
by replacing hard latent-class assignment with simulated posterior sampling and measuring the resulting operational rate-distortion tradeoff.

\appendices

\section{Auxiliary Results for Gibbs Relabeling}
\label{app:grm}

\subsection{Scalar reverse water-filling}
\label{app:rwf_scalar}

We derive \eqref{eq:RWF_scalar}--\eqref{eq:Rcond_closed}.
Each map \(D_k\mapsto R_k(D_k)\) in \eqref{eq:Rk_scalar} is convex,
nonincreasing, and twice continuously differentiable on \((0,\sigma_k^2)\) with
\(R_k''(D_k)=1/(2D_k^2)>0\).
In particular \(R_k(D_k)=0\) for all \(D_k\ge\sigma_k^2\), so any
minimizer of \eqref{eq:Rcond_scalar} may be taken to satisfy
\(D_k\le\sigma_k^2\): increasing a coordinate beyond \(\sigma_k^2\) cannot
decrease the objective and consumes the linear budget.
The resulting program is strictly convex on the compact simplex
\(\{D_k\in(0,\sigma_k^2]:\sum_k\pi_k D_k=D\}\), hence has a unique minimizer,
and the KKT conditions are necessary and sufficient.
The Lagrangian is
\begin{equation}
    \mathcal{L}
    =
    \sum_{k=1}^{K}
    \pi_k\cdot\tfrac12\log(\sigma_k^2/D_k)
    +
    \lambda
    \Biggl(
    \sum_{k=1}^{K}\pi_k D_k-D
    \Biggr).
\end{equation}
Stationarity for an active coordinate (\(D_k<\sigma_k^2\)) yields
\(-\pi_k/(2D_k)+\lambda\pi_k=0\), hence \(D_k=1/(2\lambda)\).
Setting \(\theta:=1/(2\lambda)\) and enforcing \(D_k\le\sigma_k^2\) produces
\eqref{eq:RWF_scalar}.
The water level \(\theta>0\) is uniquely determined by
\eqref{eq:theta_scalar}: the left-hand side is continuous in \(\theta\),
equals \(0\) at \(\theta=0\), equals \(\sum_k\pi_k\sigma_k^2\) for
\(\theta\ge\max_k\sigma_k^2\), and is strictly increasing on
\([0,\max_k\sigma_k^2]\) because its derivative equals the total mass of
components with \(\sigma_k^2>\theta\), which is positive throughout that
interval.
Substituting \eqref{eq:RWF_scalar} into \eqref{eq:Rcond_scalar} yields
\eqref{eq:Rcond_closed}.
If \(D<\min_k\sigma_k^2\), then \(\theta=D\) satisfies
\eqref{eq:theta_scalar} and lies strictly below every \(\sigma_k^2\), so
uniqueness forces \(D_k^{\star}=D\) for every \(k\).

\subsection{Scalar Gaussian test channel}
\label{app:gauss_tc}

Fix \(k\) and \(0<D_k^{\star}\le\sigma_k^2\).
The backward channel
\begin{align}
\label{eq:gauss_tc}
    &\widehat{X}
    =
    \mu_k
    +
    \alpha_k(X-\mu_k)
    +
    \nu_k, \\
    &
    \alpha_k
    =
    1-D_k^{\star}/\sigma_k^{2},
    \qquad
    \nu_k
    \sim
    \mathcal{N}(0,D_k^{\star}\alpha_k)\nonumber
\end{align}
independent of \(X\), satisfies
\(\mathbb{E}\bigl[(X-\widehat{X})^2\mid\widetilde{T}=k\bigr]=D_k^{\star}\)
and
\(I(X;\widehat{X}\mid\widetilde{T}=k)=\tfrac12\log(\sigma_k^2/D_k^{\star})\)
\cite{Berger1971,CoverThomas}.
If \(D_k^{\star}\ge\sigma_k^2\), the same formulae hold with the convention
\(\alpha_k=0\) and \(\widehat{X}=\mu_k\).
More precisely, consider a sequence of \(n\)-letter codes for an i.i.d.\
\(\mathcal{N}(\mu_k,\sigma_k^2)\) source whose rates converge to
\(R_k(D_k^{\star})\) and whose per-letter distortions converge to
\(D_k^{\star}\).  If \(\widehat{X}^n\) denotes the induced reconstruction,
then
\[
    \frac{1}{n}I(X^n;\widehat{X}^n)
    =
    R_k(D_k^{\star})+o(1).
\]

\subsection{Vector reverse water-filling}
\label{app:rwf_vec}

Let \(f_k=\mathcal{N}(\boldsymbol{\mu}_k,\boldsymbol{\Sigma}_k)\) with
$$
\boldsymbol{\Sigma}_k
=
\mathbf{U}_k\operatorname{diag}(\lambda_{k,1},\ldots,\lambda_{k,d})
\mathbf{U}_k^{\mathsf{T}}$$
and \(\lambda_{k,m}>0\).
A translation does not affect the MSE RDF, and
\(\mathbf{U}_k^{\mathsf{T}}\) preserves \(\|\cdot\|^2\), so the conditional
problem given \(\widetilde{T}=k\) decomposes into \(d\) independent scalar
Gaussian subproblems along the eigen-directions of \(\boldsymbol{\Sigma}_k\)
\cite{Berger1971,CoverThomas}.
Write \(D_{k,m}\) for the distortion assigned to eigen-direction \(m\) in
component \(k\).
Then
\begin{equation}
    R_{X|T}^{ }(D)
    =
    \inf_{D_{k,m}}
    \frac12
    \sum_{k=1}^{K}
    \pi_k
    \sum_{m=1}^{d}
    \log\frac{\lambda_{k,m}}{D_{k,m}},
\end{equation}
subject to \(\sum_k\pi_k\sum_m D_{k,m}=D\) and
\(0<D_{k,m}\le\lambda_{k,m}\).
(As in Appendix~\ref{app:rwf_scalar}, coordinates with
\(D_{k,m}>\lambda_{k,m}\) may be excluded.)
The program is strictly convex.
A common Lagrange multiplier for the total-distortion constraint yields
the reverse water-filling allocation
\(D_{k,m}^{\star}=\min(\lambda_{k,m},\theta)\) with \(\theta>0\) the unique
solution of \eqref{eq:theta_vec}, where uniqueness follows because the right-hand
side of \eqref{eq:theta_vec} is continuous and strictly increasing from
\(0\) to \(\sum_k\pi_k\operatorname{tr}(\boldsymbol{\Sigma}_k)\) as \(\theta\)
ranges over \([0,\max_{k,m}\lambda_{k,m}]\).
Substitution then produces \eqref{eq:R_GRM_vec}.

If \(D/d<\min_{k,m}\lambda_{k,m}\), the choice \(\theta=D/d\) satisfies
\eqref{eq:theta_vec} and lies strictly below every eigenvalue, so uniqueness
forces \(D_{k,m}^{\star}=D/d\) for all \(k,m\).
A nondegenerate Gaussian in \(\mathbb{R}^d\) meets its MSE SLB at total
distortion \(D\) if and only if this fully active regime holds for that
covariance, i.e., \(D/d\le\lambda_{\min}(\boldsymbol{\Sigma}_k)\).

\subsection{Proof of Lemma~\ref{lem:ex_slb_construction}}
\label{app:ex_slb}

\begin{proof}[Proof of Lemma~\ref{lem:ex_slb_construction}]
Let \(\varphi\) be the standard normal density, and define
\begin{equation}
    \rho(t)
    :=
    \begin{cases}
    \exp\!\bigl(-(1-t^2)^{-1}\bigr), & |t|<1,\\
    0, & |t|\ge1,
    \end{cases}
\end{equation}
which is a nonnegative smooth function supported on \([-1,1]\).
Fix \(w\in(0,1/4)\), set \(\rho_w(x):=\rho(x/w)\), and define
\begin{align}
    \widetilde{\psi}(x)
    &:=
    -4\bigl(\rho_w(x-1)+\rho_w(x+1)\bigr)
    \nonumber\\ &\qquad+
    \bigl(\rho_w(x-2)+\rho_w(x+2)\bigr)
    +
    6\,\rho_w(x).
\end{align}
This function is even, smooth, and supported on \([-2-w,2+w]\). The five
translated functions in its definition have disjoint supports because
\(w<1/4\). Write
\[
    I_0:=\int_{\mathbb{R}}\rho_w(x)\,dx,
    \qquad
    I_2:=\int_{\mathbb{R}}x^2\rho_w(x)\,dx.
\]
Since \(\rho_w\) is even, \(\int x\rho_w(x)\,dx=0\). Moreover,
\(\int(x-a)^2\rho_w(x-a)\,dx=I_2+a^2 I_0\) yields
\begin{align}
    \int\widetilde{\psi}
    &=
    \bigl(6-4-4+1+1\bigr)I_0
    =
    0,
    \nonumber\\
    \int x\widetilde{\psi}
    &=
    0,
    \nonumber\\
    \int x^2\widetilde{\psi}
    &=
    \bigl(6-4-4+1+1\bigr)I_2\\
    & \qquad+
    \bigl(-4\cdot1^2-4\cdot1^2+2^2+2^2\bigr)I_0\\
    &=
    0.
\end{align}
where the first-moment identity also follows from the evenness of
\(\widetilde{\psi}\).

The function \(\widetilde{\psi}\) is bounded, and \(\varphi\) is
continuous and strictly positive on its compact support. We may therefore
choose \(\varepsilon>0\) sufficiently small that
\(\psi:=\varepsilon\widetilde{\psi}\) satisfies
\(|\psi|\le\tfrac12\varphi\) pointwise.
Set \(f_1:=\varphi+\psi\) and \(f_2:=\varphi-\psi\).
Then \(f_k\ge\tfrac12\varphi>0\) and \(\int f_k=1\). The moment
identities give
\[
    \int x f_k(x)\,dx=0,
    \qquad
    \int x^2 f_k(x)\,dx=1,
    \qquad k\in\{1,2\}.
\]
Moreover, \(f_1\neq f_2\) because \(\psi\not\equiv0\), and
\((f_1+f_2)/2=\varphi\). Since \(f_k=\varphi\) outside a compact set,
both densities have finite differential entropy.

It remains to show that neither density has the Gaussian-convolution
representation stated in the lemma. For every \(D>0\) and every
probability measure \(\nu\), the convolution \(\nu*\varphi_D\) is real
analytic. To see this, consider its extension
\[
    g(z):=
    \int_{\mathbb{R}}
    \frac{1}{\sqrt{2\pi D}}
    \exp\!\left(-\frac{(z-y)^2}{2D}\right)\nu(dy)
\]
to \(z\in\mathbb{C}\). If \(z=a+ib\), the magnitude of the integrand is
bounded by
\[
    \frac{1}{\sqrt{2\pi D}}\exp\!\left(\frac{b^2}{2D}\right).
\]
This bound is uniform in \(y\) and locally uniform in \(z\), so
differentiation under the integral is valid and \(g\) is analytic.
Consequently, if \(f_k=\nu*\varphi_D\), then \(f_k\) and
\(f_k-\varphi=\pm\psi\) would be real analytic. The function \(\psi\),
however, is zero outside a compact interval and is not identically zero,
which is impossible for a real-analytic function on \(\mathbb{R}\).
Thus neither \(f_1\) nor \(f_2\) has the required representation.
\end{proof}

\color{black}
\section{Variational and Rare-Binary Arguments for Zero Rate}
\label{app:zero}

\subsection{Variational Preliminaries}

We first verify the properties of \(\tau^2\) used in
Theorem~\ref{thm:zero_rate_vec}. Define
\[
    \sigma_*^2:=\max_k\lambda_{\max}(\boldsymbol{\Sigma}_k),
    \qquad
    b:=\max_k\|\boldsymbol{\mu}_k\|.
\]
Because \(\mathbb{E}[\boldsymbol{\mu}_T]=\mathbf{0}\), Hoeffding's lemma
and the conditional Gaussian moment generating functions give
\begin{align}
    \log\mathbb{E}\bigl[e^{\mathbf{u}^{\mathsf{T}}\mathbf{X}}\bigr]
    &\le
    \tfrac12\sigma_*^2\|\mathbf{u}\|^2
    +
    \log\mathbb{E}
    \bigl[e^{\mathbf{u}^{\mathsf{T}}\boldsymbol{\mu}_T}\bigr]
    \nonumber\\
    &\le
    \tfrac12(\sigma_*^2+b^2)\|\mathbf{u}\|^2.
\end{align}
Thus \(\tau^2<\infty\). Expansion at the origin also gives
\[
    \tau^2
    \ge
    \lambda_{\max}(\boldsymbol{\Sigma}_{\mathbf{X}})
    >0.
\]

Let \(Q\ll P_{\mathbf{X}}\) have finite relative entropy and write
\(\mathbf{m}_Q:=\mathbb{E}_Q[\mathbf{X}]\). This mean is well defined:
for any \(c>0\), the entropy inequality gives
\[
    c\mathbb{E}_Q\|\mathbf{X}\|
    \le
    D_{\mathrm{KL}}(Q\Vert P_{\mathbf{X}})
    +
    \log\mathbb{E}[e^{c\|\mathbf{X}\|}]
    <\infty.
\]
For \(\mathbf{u}\in\mathbb{R}^d\), define
\[
    \frac{dP_{\mathbf{u}}}{dP_{\mathbf{X}}}(\mathbf{x})
    :=
    \frac{e^{\mathbf{u}^{\mathsf{T}}\mathbf{x}}}
    {\mathbb{E}[e^{\mathbf{u}^{\mathsf{T}}\mathbf{X}}]}.
\]
The change-of-measure identity and
\(D_{\mathrm{KL}}(Q\Vert P_{\mathbf{u}})\ge0\) imply
\[
    \mathbf{u}^{\mathsf{T}}\mathbf{m}_Q
    \le
    D_{\mathrm{KL}}(Q\Vert P_{\mathbf{X}})
    +
    \tfrac12\tau^2\|\mathbf{u}\|^2.
\]
Taking \(\mathbf{u}=\mathbf{m}_Q/\tau^2\) proves
\begin{equation}
\label{eq:tau_kl}
    \|\mathbf{m}_Q\|^2
    \le
    2\tau^2D_{\mathrm{KL}}(Q\Vert P_{\mathbf{X}}).
\end{equation}

It remains to establish the variational form used for the binary upper
bound. Set
\[
    S:=
    \sup_{\substack{
        Q\ll P_{\mathbf{X}}:\\
        0<D_{\mathrm{KL}}(Q\Vert P_{\mathbf{X}})<\infty}}
    \frac{\|\mathbb{E}_Q[\mathbf{X}]\|^2}
    {2D_{\mathrm{KL}}(Q\Vert P_{\mathbf{X}})}.
\]
Inequality \eqref{eq:tau_kl} gives \(S\le\tau^2\).
Moreover, \(S>0\): if \(\mathbf{v}\) is a unit eigenvector of
\(\boldsymbol{\Sigma}_{\mathbf{X}}\) with eigenvalue \(\lambda>0\), then,
as \(t\to0\),
\[
    \mathbb{E}_{P_{t\mathbf{v}}}[\mathbf{X}]
    =
    t\boldsymbol{\Sigma}_{\mathbf{X}}\mathbf{v}+o(t),
    \qquad
    D_{\mathrm{KL}}(P_{t\mathbf{v}}\Vert P_{\mathbf{X}})
    =
    \tfrac12t^2\lambda+o(t^2),
\]
so the corresponding ratio converges to \(\lambda\).

Conversely, the Donsker-Varadhan variational formula
\cite{DonskerVaradhan1975} gives
\[
    \log\mathbb{E}\bigl[e^{\mathbf{u}^{\mathsf{T}}\mathbf{X}}\bigr]
    =
    \sup_{\substack{
        Q\ll P_{\mathbf{X}}:\\
        D_{\mathrm{KL}}(Q\Vert P_{\mathbf{X}})<\infty}}
    \left\{
        \mathbf{u}^{\mathsf{T}}\mathbb{E}_Q[\mathbf{X}]
        -
        D_{\mathrm{KL}}(Q\Vert P_{\mathbf{X}})
    \right\}.
\]
The definition of \(S\), including the zero-divergence case
\(Q=P_{\mathbf{X}}\), therefore yields
\begin{align}
    \log\mathbb{E}\bigl[e^{\mathbf{u}^{\mathsf{T}}\mathbf{X}}\bigr]
    &\le
    \sup_{\mathbf{m}\in\mathbb{R}^d}
    \left\{
        \mathbf{u}^{\mathsf{T}}\mathbf{m}
        -
        \frac{\|\mathbf{m}\|^2}{2S}
    \right\}
    \nonumber\\
    &=
    \tfrac12S\|\mathbf{u}\|^2.
\end{align}
Hence \(\tau^2\le S\), proving
\begin{equation}
\label{eq:tau_var}
    \tau^2
    =
    \sup_{\substack{
        Q\ll P_{\mathbf{X}}:\\
        0<D_{\mathrm{KL}}(Q\Vert P_{\mathbf{X}})<\infty}}
    \frac{\|\mathbb{E}_Q[\mathbf{X}]\|^2}
    {2D_{\mathrm{KL}}(Q\Vert P_{\mathbf{X}})}.
\end{equation}

\subsection{Rare-Binary Mutual Information}

Let \(Q\ll P_{\mathbf{X}}\) have likelihood ratio
\(L=dQ/dP_{\mathbf{X}}\) satisfying
\(\|L\|_\infty\le B<\infty\), and write
\[
    \mathbf{m}:=\mathbb{E}_Q[\mathbf{X}],
    \qquad
    K_Q:=D_{\mathrm{KL}}(Q\Vert P_{\mathbf{X}}).
\]
For \(0<q<\min\{1,B^{-1}\}\), define
\(P(J=1\mid\mathbf{X})=qL\). Direct integration gives
\(P(J=1)=q\), \(P_{\mathbf{X}\mid J=1}=Q\), and
\begin{equation}
    r_{0,q}
    :=
    \frac{dP_{\mathbf{X}\mid J=0}}{dP_{\mathbf{X}}}
    =
    \frac{1-qL}{1-q}.
\end{equation}
Consequently,
\begin{equation}
\label{eq:binary_mi_decomp}
    I(\mathbf{X};J)
    =
    qK_Q
    +
    (1-q)\mathbb{E}\bigl[r_{0,q}\log r_{0,q}\bigr].
\end{equation}
Set
\[
    \eta_q:=r_{0,q}-1
    =
    \frac{q(1-L)}{1-q}.
\]
Then \(\mathbb{E}[\eta_q]=0\) and
\(\|\eta_q\|_\infty=O(q)\) as \(q\downarrow0\). The Taylor expansion
\[
    (1+t)\log(1+t)
    =
    t+\tfrac12t^2+O(|t|^3)
\]
holds uniformly over the range of \(\eta_q\) for all sufficiently small
\(q\). Moreover,
\[
    \mathbb{E}[\eta_q^2]
    =
    \frac{q^2}{(1-q)^2}
    \mathbb{E}[(L-1)^2]
    =
    \frac{q^2}{(1-q)^2}
    \chi^2(Q\Vert P_{\mathbf{X}}).
\]
Substitution into \eqref{eq:binary_mi_decomp} therefore gives
\begin{equation}
    I(\mathbf{X};J)
    =
    qK_Q
    +
    \tfrac12q^2\chi^2(Q\Vert P_{\mathbf{X}})
    +
    O(q^3).
\end{equation}

\subsection{Truncation}

Let \(Q\ll P_{\mathbf{X}}\) satisfy
\(K_Q:=D_{\mathrm{KL}}(Q\Vert P_{\mathbf{X}})<\infty\), and let
\(L=dQ/dP_{\mathbf{X}}\). Since a finite GMM has
\(\mathbb{E}[e^{c\|\mathbf{X}\|}]<\infty\) for every finite \(c>0\), the
entropy inequality implies
\(\mathbb{E}_Q\|\mathbf{X}\|<\infty\).
For \(n\ge1\), define
\[
    a_n:=\mathbb{E}[L\wedge n],
    \qquad
    L_n:=\frac{L\wedge n}{a_n},
    \qquad
    \frac{dQ_n}{dP_{\mathbf{X}}}:=L_n.
\]
Then \(a_n\uparrow1\), each \(L_n\) is bounded, and
\[
    \mathbb{E}_{Q_n}[\mathbf{X}]
    =
    \frac{\mathbb{E}[\mathbf{X}(L\wedge n)]}{a_n}
    \longrightarrow
    \mathbb{E}[\mathbf{X}L]
    =
    \mathbb{E}_Q[\mathbf{X}]
\]
by dominated convergence, because
\(\|\mathbf{X}\|(L\wedge n)\le\|\mathbf{X}\|L\) and
\(\mathbb{E}[\|\mathbf{X}\|L]<\infty\).

For the relative entropy,
\begin{equation}
\label{eq:truncated_kl}
    D_{\mathrm{KL}}(Q_n\Vert P_{\mathbf{X}})
    =
    \frac{
        \mathbb{E}[(L\wedge n)\log(L\wedge n)]
    }{a_n}
    -
    \log a_n.
\end{equation}
On \(\{L<1\}\), the integrand in the numerator is unchanged for
\(n\ge1\); on \(\{L\ge1\}\), it increases pointwise to \(L\log L\).
The negative part of \(L\log L\) is bounded and its positive part is
integrable because \(K_Q<\infty\). Monotone convergence on
\(\{L\ge1\}\), together with \(a_n\to1\), therefore shows from
\eqref{eq:truncated_kl} that
\[
    D_{\mathrm{KL}}(Q_n\Vert P_{\mathbf{X}})
    \longrightarrow
    D_{\mathrm{KL}}(Q\Vert P_{\mathbf{X}}).
\]
Thus, whenever
\[
    \frac{\|\mathbb{E}_Q[\mathbf{X}]\|^2}
    {2D_{\mathrm{KL}}(Q\Vert P_{\mathbf{X}})}
    >
    \tau^2-\frac{\varepsilon}{2},
\]
a sufficiently large \(n\) gives a bounded-likelihood-ratio measure
\(Q_n\) whose corresponding ratio is at least
\(\tau^2-\varepsilon\), as used in
\eqref{eq:bounded_Q_ratio}.

\section{Centroid Reduction and Finite-Alphabet Succession}
\label{app:centroid}

\begin{proof}[Proof of Lemma~\ref{lem:centroid}]
Let \(J\) be a random variable taking values in a finite set
\(\mathcal{J}\), jointly distributed with \(\mathbf{X}\).
Mutual information \(I(\mathbf{X};J)\) is a functional of the joint law
\(P_{\mathbf{X},J}\) only, and does not depend on the codebook
\(\{\mathbf{a}_j\}\).
Thus, for any fixed kernel \(P_{J\mid\mathbf{X}}\), it is enough
to minimize the distortion over the codebook.

Since \(\mathbb{E}\|\mathbf{X}\|^2<\infty\), the random vector
\(\mathbf{X}\) lies in \(L^2\), hence
\(\mathbb{E}[\mathbf{X}\mid J]\) exists as an \(L^2\) random vector.
Write \(\mathbf{m}_J:=\mathbb{E}[\mathbf{X}\mid J]\) and
\(\mathbf{m}_j:=\mathbb{E}[\mathbf{X}\mid J=j]\) whenever
\(P(J=j)>0\).
For every codebook,
\begin{align}
    \bigl\|\mathbf{X}-\mathbf{a}_J\bigr\|^2
    &=
    \bigl\|\mathbf{X}-\mathbf{m}_J\bigr\|^2
    +
    \bigl\|\mathbf{m}_J-\mathbf{a}_J\bigr\|^2
    \nonumber\\
    &\qquad
    +
    2\bigl(\mathbf{X}-\mathbf{m}_J\bigr)^{\mathsf{T}}
    \bigl(\mathbf{m}_J-\mathbf{a}_J\bigr).
\end{align}
Taking the conditional expectation given \(J\) kills the cross term,
because \(\mathbb{E}\bigl[\mathbf{X}-\mathbf{m}_J\mid J\bigr]=\mathbf{0}\)
almost surely.
Therefore
\begin{equation}
\label{eq:centroid_pythagoras}
    \mathbb{E}\bigl\|\mathbf{X}-\mathbf{a}_J\bigr\|^2
    =
    \mathbb{E}\bigl\|\mathbf{X}-\mathbf{m}_J\bigr\|^2
    +
    \mathbb{E}\bigl\|\mathbf{m}_J-\mathbf{a}_J\bigr\|^2.
\end{equation}
The second summand is nonnegative, and vanishes if and only if
\(\mathbf{a}_J=\mathbf{m}_J\) almost surely, i.e., if and only if
\(\mathbf{a}_j=\mathbf{m}_j\) for every \(j\) with \(P(J=j)>0\).
On the event \(\{J=j\}\) with \(P(J=j)=0\) the value of \(\mathbf{a}_j\)
does not appear in \eqref{eq:centroid_pythagoras}; such labels may be
deleted without increasing cardinality.
When \(\mathbf{a}_J=\mathbf{m}_J\), the second summand vanishes, and
expanding the first together with
\(\mathbb{E}\bigl[\mathbf{X}^{\mathsf{T}}\mathbf{m}_J\bigr]
=\mathbb{E}\|\mathbf{m}_J\|^2\) yields \eqref{eq:centroid_mse}.

Consequently, replacing an arbitrary codebook by the centroids
\(\mathbf{a}_j=\mathbb{E}[\mathbf{X}\mid J=j]\) leaves \(I(\mathbf{X};J)\)
unchanged and does not increase distortion.
Every SSQ test channel of size at most \(M\) is thus dominated by one with the same
conditional law and centroid reconstruction, and the infimum
\eqref{eq:rdf_M} may be restricted to kernels \(P_{J\mid\mathbf{X}}\)
with \(|\mathcal{J}|\le M\) and
\(\mathbb{E}\bigl\|\mathbf{X}-\mathbb{E}[\mathbf{X}\mid J]\bigr\|^2\le D\).
\end{proof}

\begin{proof}[Proof of Lemma~\ref{lem:RM_limit}]
The constraint set in \eqref{eq:rdf_M} enlarges with \(M\), and is contained
in that of \eqref{eq:rdf}, which gives \eqref{eq:RM_mono}.
The limit \eqref{eq:RM_limit} is the finite-output-alphabet approximation
of Finamore and Pearlman \cite{FinamorePearlman1980}; we record a short
clipping argument for completeness.
Fix \(D\in(0,D_{\max})\) and \(\varepsilon>0\). Continuity of \(R^{ }\) on
\((0,D_{\max})\) supplies \(\eta\in(0,D)\) and a (possibly continuous) test
channel of distortion \(D-\eta\) with mutual information at most
\(R^{ }(D)+\varepsilon\).
Clip the reproduction to a cube of side \(A\) large enough that the clipping
distortion is at most \(\eta/2\), then quantize the cube by a cubic lattice
of mesh \(\delta\) small enough that the additional MSE is at most
\(\eta/2\); both choices are possible because \(\mathbb{E}\|\mathbf{X}\|^2<\infty\).
The resulting reproduction takes finitely many values, say \(M=M(A,\delta)\).
Data processing does not increase mutual information, and replacing each
lattice point by the conditional centroid cannot increase distortion
(Lemma~\ref{lem:centroid}).
Hence \(R_M^{ }(D)\le R^{ }(D)+\varepsilon\).
Monotonicity \eqref{eq:RM_mono} yields \eqref{eq:RM_limit}.
\end{proof}

\section{Proof of the Strict Inequality Step of Corollary~\ref{cor:gaps}}
\label{app:gaps}

Fix \(M<\infty\) and \(D\in(0,D_{\max})\) such that
\(R_M^{ }(D)<\infty\), and suppose that every reproduction law attaining
\(R^{ }(D)\) is non-atomic. We show that
\(R_M^{ }(D)>R^{ }(D)\).

Write \(Y:=\widehat{\mathbf{X}}\) for a reconstruction.
After merging coinciding codewords one may assume
\(I(\mathbf{X};J)=I(\mathbf{X};Y)\) in \eqref{eq:rdf_M}, so it is enough
to argue with laws of \(Y\) supported on at most \(M\) atoms.

Suppose, toward a contradiction, that equality holds.
Then there exist \(M\)-atomic random vectors \(Y_n\) with
\(\mathbb{E}\|\mathbf{X}-Y_n\|^2\le D\) and
\(I(\mathbf{X};Y_n)\to R^{ }(D)\).
The second-moment bound
\(\mathbb{E}\|Y_n\|^2\le 2D_{\max}+2D\) yields tightness of
\(\{P_{Y_n}\}\) and of the joints \(\{P_{\mathbf{X},Y_n}\}\) on
\(\mathbb{R}^{d}\times\mathbb{R}^{d}\).
Passing to a subsequence, \(P_{\mathbf{X},Y_n}\Rightarrow P_{\mathbf{X},Y}\)
for some \(Y\).
The set of Borel probabilities with at most \(M\) atoms is weakly closed
along tight sequences: masses of escaping atoms vanish, and the remaining
locations have a convergent subsequence.
Thus \(Y\) is \(M\)-atomic.
The integrand \(\|\mathbf{x}-\mathbf{y}\|^2\) is nonnegative and
lower semicontinuous, so
\(\mathbb{E}\|\mathbf{X}-Y\|^2
\le\liminf_{n\to\infty}\mathbb{E}\|\mathbf{X}-Y_n\|^2\le D\).
The Donsker--Varadhan variational representation
\cite{DonskerVaradhan1975} gives
\[
D_{\mathrm{KL}}(\mu\Vert\nu)
=
\sup_{f\in C_b(\mathbb{R}^{2d};\mathbb{R})}
\Bigl(
\int f\,d\mu-\log\int e^{f}\,d\nu
\Bigr).
\]
Since \(P_{Y_n}\Rightarrow P_Y\), one has
\(P_{\mathbf{X}}\otimes P_{Y_n}\Rightarrow P_{\mathbf{X}}\otimes P_Y\).
Mutual information is the relative entropy of the joint against the
product of the marginals, so
\(I(\mathbf{X};Y)\le\liminf_{n\to\infty} I(\mathbf{X};Y_n)=R^{ }(D)\).
Feasibility of the limit pair forces equality, so \(Y\) attains
\(R^{ }(D)\).
This contradicts the assumed non-atomicity.

\section{Proof of Lemma~\ref{lem:floor}}
\label{app:floor}

\begin{proof}[Proof of Lemma~\ref{lem:floor}]
For any \(P_{J|\mathbf{X}}\) and any codebook,
\begin{align}
    &\mathbb{E}\|\mathbf{X}-\mathbf{a}_J\|^2\nonumber\\
    =&
    \mathbb{E}
    \Bigl[
    \sum_{j=1}^{M}
    P(J=j|\mathbf{X})\,\|\mathbf{X}-\mathbf{a}_j\|^2
    \Bigr]
    \ge
    \mathbb{E}
    \Bigl[
    \min_{1\le j\le M}
    \|\mathbf{X}-\mathbf{a}_j\|^2
    \Bigr],
\end{align}
so no (soft or hard) \(M\)-ary scheme beats \(D_M^{\star }\).

For positivity, a nondegenerate GMM density is continuous and bounded below
by some \(c>0\) on a closed ball \(B\) of radius \(r>0\) and volume
\(v>0\).
For an arbitrary codebook, the Voronoi cells partition \(B\) into at most
\(M\) measurable pieces of volumes \(v_j\) with \(\sum_j v_j=v\).
On a cell of volume \(v_j\),
\(\int\min_a\|\mathbf{x}-\mathbf{a}\|^2\,d\mathbf{x}\ge\kappa_d v_j^{1+2/d}\)
for a dimensional constant \(\kappa_d>0\) (the minimum is the second moment
about the centroid, and is homogeneous of degree \(1+2/d\)).
Convexity of \(t\mapsto t^{1+2/d}\) yields
\(\sum_j v_j^{1+2/d}\ge M\cdot(v/M)^{1+2/d}=v^{1+2/d}M^{-2/d}\), whence
\(D_M^{\star }\ge c\kappa_d v^{1+2/d}M^{-2/d}>0\).
The scalar case \(d=1\) recovers the elementary bound
\(D_M^\star\ge c\ell^3/(12M^2)\) on an interval of length \(\ell\).

For attainment, let \(\{\mathbf{a}_j^\star\}\) nearly minimize
\eqref{eq:floor_expr} and set
\(J^\star=\arg\min_j\|\mathbf{X}-\mathbf{a}_j^\star\|^2\).
Then \(I(\mathbf{X};J^\star)=H(J^\star)\le\log M\) and the distortion can be
made arbitrarily close to the floor, so
\(R_M^{ }(D_M^{\star })\le\log M\).
For larger rate the feasible set still cannot beat \(D_M^{\star }\),
hence saturation.
Items 1--3 follow.
For item 4, a nondegenerate GMM density is bounded, strictly positive on a
nonempty open set, and has finite moments of every order, so
\(0<\int f_{\mathbf{X}}^{d/(d+2)}<\infty\).
Zador's theorem \cite{Zador1982} therefore gives
\[
\lim_{M\to\infty}
M^{2/d}\,D_M^{\star }
=
G_d
\int_{\mathbb{R}^d}
f_{\mathbf{X}}(\mathbf{x})^{d/(d+2)}\,d\mathbf{x}
\in(0,\infty),
\]
which is \eqref{eq:zador}.
In dimension one, \(G_1=1/12\) \cite{PanterDite1951}.
\end{proof}

\color{black}

\section{Blahut-Arimoto / Deterministic Annealing Updates}
\label{app:ba}

SSQ operating points used in Section~\ref{sec:numerics}
are computed by the following standard free-energy iteration.
\color{black}

Let \(J\in\{1,\ldots,M\}\) with soft assignments
\begin{align}
    p_j(\mathbf{x})
    &:=
    P(J=j|\mathbf{X}=\mathbf{x}),
    \qquad
    \\q_j
    &:=
    P(J=j)
    =
    \int f_{\mathbf{X}}(\mathbf{x})\,p_j(\mathbf{x})\,d\mathbf{x},
\end{align}
and reconstruction \(\widehat{\mathbf{X}}=\mathbf{a}_J\).
The associated distortion and mutual information are
\begin{align}
    D
    &=
    \int f_{\mathbf{X}}(\mathbf{x})
    \sum_{j=1}^{M}
    p_j(\mathbf{x})\,\|\mathbf{x}-\mathbf{a}_j\|^2\,d\mathbf{x},
    \\
    I(\mathbf{X};J)
    &=
    \int f_{\mathbf{X}}(\mathbf{x})
    \sum_{j=1}^{M}
    p_j(\mathbf{x})\log\frac{p_j(\mathbf{x})}{q_j}\,d\mathbf{x}.
\end{align}
For \(\lambda>0\), define the Lagrangian
\begin{equation}
    \mathcal{L}
    :=
    D+\lambda I(\mathbf{X};J).
\end{equation}
Since the masses $q_j$ are constrained by the assignments
$p_j(\cdot)$, it is convenient to introduce an auxiliary probability vector
\(\mathbf{q}=(q_1,\ldots,q_M)\) and the free-energy functional
\begin{align}
    F(\mathbf{p},\mathbf{q},\mathbf{a})
    &=
    \int f_{\mathbf{X}}(\mathbf{x})
    \sum_{j=1}^{M}
    p_j(\mathbf{x})\,\|\mathbf{x}-\mathbf{a}_j\|^2\,d\mathbf{x}
    \nonumber\\
    &\quad
    +
    \lambda
    \int f_{\mathbf{X}}(\mathbf{x})
    \sum_{j=1}^{M}
    p_j(\mathbf{x})\log\frac{p_j(\mathbf{x})}{q_j}\,d\mathbf{x}.
\end{align}
For every fixed conditional assignment \(\mathbf{p}\),
\begin{equation}
    I(\mathbf{X};J)
    =
    \min_{\mathbf{q}}
    \int f_{\mathbf{X}}(\mathbf{x})
    \sum_{j=1}^{M}
    p_j(\mathbf{x})\log\frac{p_j(\mathbf{x})}{q_j}\,d\mathbf{x},
\end{equation}
and the unique minimizer is the induced marginal
\(q_j=\int f_{\mathbf{X}}(\mathbf{x})\,p_j(\mathbf{x})\,d\mathbf{x}\).
Consequently, joint minimization of \(F\) is equivalent to minimization of
\(\mathcal{L}\).

Stationarity of \(F\) yields the Gibbs assignment, mass, and centroid
equations
\begin{equation}
    p_j(\mathbf{x})
    =
    \frac{
    q_j\exp\bigl(-\|\mathbf{x}-\mathbf{a}_j\|^2/\lambda\bigr)
    }{
    \sum_{\ell=1}^{M}
    q_\ell\exp\bigl(-\|\mathbf{x}-\mathbf{a}_\ell\|^2/\lambda\bigr)
    },
\end{equation}
\begin{equation}
    q_j
    =
    \int f_{\mathbf{X}}(\mathbf{x})\,p_j(\mathbf{x})\,d\mathbf{x},
    \qquad
    \mathbf{a}_j
    =
    \mathbb{E}[\mathbf{X}|J=j].
\end{equation}
In practice, an index with \(q_j=0\) may be dropped. Alternating between these updates gives rise to a BA / deterministic-annealing
iteration solution\cite{Blahut1972,Rose1994}.
We note that the joint problem over
\(\mathbf{a}_j\) remains non-convex and there is no practical guarantee of a global optimum.

\section{Notation}
\label{app:notation}
Table~\ref{tab:notation} summarizes the main notation.

\begin{table}[!t]
\renewcommand{\arraystretch}{1.15}
\caption{Notation used throughout the paper.}
\label{tab:notation}
\centering
\begin{tabular}{@{}ll@{}}
\toprule
Symbol & Meaning \\
\midrule
\(K\) & number of mixture components \\
\(M\) & SSQ codebook size \\
\(d\) & source dimension \\
\(D_{\max}\) & zero-rate distortion \(\operatorname{tr}(\boldsymbol{\Sigma}_{\mathbf{X}})\) \\
\(R^{ }(D),R_M^{ }(D)\) & Shannon / SSQ RDF \\
\(\lambda_{k,m}\) & eigenvalues of \(\boldsymbol{\Sigma}_k\) \\
\(\lambda_{\min}(\mathbf{A}),\lambda_{\max}(\mathbf{A})\) & extreme eigenvalues of \(\mathbf{A}\) \\
\(\tau^2\) & directional sub-Gaussian proxy \\
\(T,\widetilde{T}\) & mixture label / posterior sample \\
\(J\) & SSQ reconstruction index \\
\(R_{\mathrm{GRM}}(D)\) & GRM achievable rate \\
GRM & Gibbs-relabeling method \\
SQ & scalar quantization \\
SSQ & stochastic scalar quantization \\
\bottomrule
\end{tabular}
\end{table}

\balance
\bibliographystyle{IEEEtran}
\bibliography{ref}

\end{document}